\numberwithin{equation}{section}
\newcommand{\bm}{}
\definecolor{rowe2}{gray}{0.96}       
\definecolor{colN}{RGB}{235,246,252}  
\definecolor{cross}{RGB}{225,235,240} 
\newcommand{\rowe}{\rowcolor{rowe2}}        
\newcommand{\colmark}[1]{\cellcolor{colN}#1} 
\newcommand{\crosscell}[1]{\cellcolor{cross}#1} 
\DeclareMathOperator{\Log}{log}
\DeclareMathOperator{\Exp}{exp}
\newcommand{\E}{{\mathbb E}}
\newcommand{\R}{\mathrm{I\!R}}
\newcommand{\iw}{\mathcal{IW}}
\newcommand{\siw}{\mathcal{SIW}}
\newcommand{\ml}{{M_l}}
\newcommand{\nl}{{N_l}}
\newcommand{\mtl}{{M_{T,l}}}
\newcommand{\ms}{{\pi}}
\newcommand{\bone}{\mathbf{1}}
\newcommand{\cov}{covariance }
\newcommand{\modif}{\textcolor{blue}}
\newtheorem{theorem}{Theorem}[section]
\theoremstyle{definition}
\newtheorem{definition}{Definition}[section]
\newtheorem{lemma}[theorem]{Lemma}
\newtheorem{prop}{Proposition}[section]
\newtheorem{exmp}{Example}[section]
\begin{document}
\renewcommand{\headrulewidth}{0pt}

\fancyhead[R]{ }


\begin{center}
{\Large
	{\sc  New sampling approaches for Shrinkage Inverse-Wishart distribution}
}
\bigskip

 Yiye Jiang
\bigskip

{\it
Univ. Grenoble Alpes, Inria, CNRS, Grenoble INP, LJK, 38000 Grenoble, France, \\
yiye.jiang@inria.fr

}
\end{center}
\bigskip


{\bf Abstract.} 
In this paper, we propose new sampling approaches for the \textit{Shrinkage Inverse-Wishart} ($\siw$) distribution \citep{berger2020Bayesian}, a generalized family of the Inverse-Wishart distribution. It offers a flexible prior for covariance matrices and remains conjugate to the Gaussian likelihood, similar to the classical Inverse-Wishart. 
Despite these advantages, sampling from $\siw$ remains challenging. The existing algorithm relies on a nested Gibbs sampler, which is slow and lacks rigorous theoretical analysis of its convergence. We propose a new algorithm based on the Sampling Importance Resampling (SIR) method, which is significantly faster and comes with theoretical guarantees on convergence rates.
A known issue with SIR methods is the large discrepancy in importance weights, which occurs when the proposal distribution has thinner tails than the target. In the case of $\siw$, certain parameter settings can lead to such discrepancies, reducing the robustness of the output samples. To sample from such $\siw$ distributions, we robustify the proposed algorithm by including a clipping step to the SIR framework which transforms large importance weights. We provide theoretical results on the convergence behavior in terms of the clipping size, and discuss strategies for choosing this parameter via simulation studies. The robustified version retains the computational efficiency of the original algorithm.

{\bf Keywords.} Shrinkage Inverse-Wishart distribution, Sampling importance resampling, Weight clipping, Central limit theorem.

\bigskip\bigskip

\section{Introduction}
A well-known criticism of Inverse-Wishart ($\iw$) distribution is that it concentrates little mass over covariance matrices with small eigengaps. To rebalance the mass of the distribution, \cite{berger2020Bayesian} introduces Shrinkage Inverse-Wishart ($\siw$) distribution, which is a larger family of probability distributions. The definition is recalled as follows.
\begin{definition}\label{def: siw}
Let $S^K_{++}$ denote $\{\Sigma \in \R^{K \times K}: \Sigma \mbox{ is positive definite}\}$, and $(S^K_{++}, \mathcal{B})$ the measurable space with $\mathcal{B}$ being the Borel $\sigma$-algebra induced by the usual sub-topology of $S^K_{++}$. The Shrinkage Inverse-Wishart ($\siw$) distribution, denoted by $\pi$, is defined as a probability measure on $(S^K_{++}, \mathcal{B})$, which admits a probability density function $\pi(\bm \Sigma | \Psi, \nu, b)$ such that
\begin{equation}\label{eq: siw den}
    \pi(\bm \Sigma | \Psi, \nu, b) \propto \frac{\exp\left(\mbox{tr}\left( -\frac{1}{2}\bm \Sigma^{-1}\Psi\right)\right)}{|\bm \Sigma|^\nu \prod_{i < j} (\lambda_i - \lambda_j)^b}\bone_{\{\lambda_1 > \ldots > \lambda_K > 0\}},
\end{equation} 
where $\lambda_1, \cdots, \lambda_K$ are the eigenvalues of $\bm \Sigma \in S^K_{++}$, $\Psi \in S^K_{++}$, $b \in [0,1]$ and the degree of freedom $\nu$ is real.
\end{definition}
Equation \eqref{eq: siw den} displays the $\iw$ density's kernel, namely an unnormalized probability density function, divided by $\prod_{i < j} (\lambda_i - \lambda_j)^b, \; b \in [0, 1]$, with $b$ additional parameter controlling the mass on \cov matrices of small eigengaps.  

In this paper, we propose a new sampling approach for $\siw$ with $b=1$. The new method is based on sampling importance resampling (SIR) and is without Markov chain Monte Carlo. By contrast, the algorithm proposed in the original paper \cite{berger2020Bayesian} is based on a nested Gibbs sampling procedure. At each iteration of the outer Gibbs procedure, an inner procedure needs to be performed in order to produce a new sample. The proposed algorithm does not require such nested procedure. The sampling and resampling steps perform two independent direct samplings, with the one at the resampling step simply from a Multinomial distribution. Therefore, the proposed sampling is superior in terms of computation time. In addition, we provide theoretical guarantees with convergence rates which are not available in \cite{berger2020Bayesian}. 

A known problem with sampling and importance resampling is large discrepancy in importance weights, which happens when the proposal distribution has a thinner tail than the target distribution, and it will make the estimators obtained from the corresponding samples have high variance. In our case, some $\siw$ parameter values can result in the large discrepancy. To address the problem, we consider an additional clipping step in our approach to transform values of large weights.  Accordingly, we derive the convergence rates in terms of the clipping size. We discuss the strategies on choosing clipping sizes with simulations. The robustified algorithm has the same advantage in running time as the original proposed algorithm.

Finally, we developed an exact sampling solution for a sub-family of $\siw$ where $\Psi = cI_K, \; c > 0$. We tested the exactitude of the algorithm by comparing the sample moments with theoretical values provided in \cite{berger2020Bayesian}. In terms of running time, the proposed algorithm is at least $100$ faster than \cite{berger2020Bayesian}, with more advantage for higher $K$, which thus allows to go to large dimension as $K=1000$.

\paragraph{Organization of the paper.} In Section \ref{sec: sampling}, we develop the three sampling algorithms: in Section \ref{sec: algo Psi simplified}, we first derive the exact sampling for the simplified case where $\Psi = cI_K, \, c > 0$, in Section \ref{sec: algo Psi general}, we derive the algorithm for general $\Psi$, in Section \ref{sec: algo Psi general robust}, we first discuss the cases when the importance weights have a large discrepancy, and then derive a robust version of the proposed algorithm by adopting weight clipping. Finally in Section \ref{sec: num}, we perform experiments to validate the theoretical results on the convergence rates, to evaluate the running time, and to investigate the impact of the clipping size on the performance of the robust algorithm, hence suggesting strategies for clipping size tuning.

\section{Methods}\label{sec: sampling}


Let us use the eigen-representation \citep[Equation (2.5)]{berger2020Bayesian} of density's kernel \eqref{eq: siw den}.  Consider the eigen-decomposition $\Sigma = \Gamma \Delta \Gamma^\top$, where $\Delta = \mbox{diag}(\lambda_1, \ldots, \lambda_K)$, and $\mathcal{O}_K$ is the set of orthonormal matrices in $\R^{K \times K}$, the eigen-representation writes as:
\begin{equation}\label{eq: siw den eigen rep}
    \pi(\Delta, \Gamma| \nu, \Psi, b) \propto \frac{\exp\left(\mbox{tr}\left( -\frac{1}{2}\Gamma \Delta^{-1} \Gamma^\top\Psi\right)\right)}{|\Delta|^\nu \prod_{i < j} (\lambda_i - \lambda_j)^{b-1}} \bone_{\{\lambda_1 > \ldots > \lambda_K > 0\}}\bone_{\{\Gamma \; \in \; \mathcal{O}_K\}},
\end{equation} 
\modif{The decrease in the power from Equations \eqref{eq: siw den} to \eqref{eq: siw den eigen rep} comes from the Jacobian of eigen-decomposition:}
\begin{equation}
    \frac{\partial \Sigma}{\partial (\Delta, \Gamma)} = \prod_{i < j} (\lambda_i - \lambda_j).
\end{equation}
When $b=1$, Equation \eqref{eq: siw den eigen rep} is simplified to 
\begin{equation}\label{eq: siw b=1}
    \pi(\Delta, \Gamma| \nu, \Psi, 1) \propto \frac{\exp\left(\mbox{tr}\left( -\frac{1}{2}\Gamma \Delta^{-1} \Gamma^\top\Psi\right)\right)}{|\Delta|^\nu } \bone_{\{\lambda_1 > \ldots > \lambda_K > 0\}}\bm1_{\{\Gamma \; \in \; \mathcal{O}_K\}}.
\end{equation} 
Fixing $\Gamma$ and ignoring the monotonicity constraint on $\lambda_i$, the kernel corresponds to the product of $K$ independent Inverse-Gamma distributions on $\lambda_i$. This motivates us to write the joint distribution of $(\Delta, \Gamma)$ in marginal distribution of $\Gamma$ multiplied by the conditional distribution of $\Delta$ given $\Gamma$ as follows:
\begin{equation}\label{eq: margi cond}
\begin{aligned}
    \pi(\Delta, \Gamma| \nu, \Psi, 1) &=   \pi(\Delta| \Gamma, \nu, \Psi, 1) \pi(\Gamma| \nu, \Psi, 1) \\
    &\propto \prod_{i = 1}^K k_{\mathcal{IG}}(\lambda_i |\nu -1, \frac{1}{2}\Gamma_{i}^\top \Psi \Gamma_{i}) \bone_{\{\lambda_1 > \ldots > \lambda_K > 0\}} k_{\mathcal{U}\{ \mathcal{O}_K\}}(\Gamma),   
\end{aligned}
\end{equation} 
where $k_{\mathcal{IG}}$ and $k_{\mathcal{U}\{ \mathcal{O}_K\}}$ denote respectively kernels of inverse-gamma distribution and uniform distribution over $\mathcal{O}_K$, and $\Gamma_{i}$ denotes the $i$-th column of $\Gamma$. To make sure the inverse-gammas is well-defined, which is the foundation of the following development, we assume $\nu > 1$.  
Therefore, sampling $(\Delta, \Gamma) \sim \pi(\Delta, \Gamma| \nu, \Psi, 1)$ can be done in two steps:
\begin{enumerate}
\label{eq: target} 
    \item$\bm\Gamma \sim k_{\mathcal{U}\{ \mathcal{O}_K\}}(\Gamma) $ 

    \item 
        $\bm\Delta | \bm\Gamma \sim \prod_{i = 1}^K k_{\mathcal{IG}}(\lambda_i |\nu -1, \frac{1}{2}\bm\Gamma_{i}^\top \Psi \bm\Gamma_{i}) \bone_{\{\lambda_1 > \ldots > \lambda_K > 0\}}.$
\end{enumerate}
In the first step, the uniform distribution is identified in \cite{tropp2012comparison} by the Haar probability measure on Stiefel manifold, where its sampling is solved by the QR decomposition of a random matrix whose entries are sampled independently from $\mathcal{N}(0,1)$. The sampling in the second step is not straightforward except the special case where $\Psi = cI_K,$ with $c > 0$. Therefore, we first derive the sampling solution for this special case in Section \ref{sec: algo Psi simplified}. Then we turn to the general case in Section \ref{sec: algo Psi general}, where the framework of SIR is adopted. We extend it furthermore to a robust version in Section \ref{sec: algo Psi general robust}.

\subsection{Simplified case: $\Psi = cI_{K}, \, c > 0$}\label{sec: algo Psi simplified}

When $\Psi = cI_{K}, \, c > 0$, $\bm \Delta$ and $\bm \Gamma$ are independent, with the marginal distribution of $\bm\Delta$: 
\begin{equation}\label{eq: delta cI}
    \bm \Delta \sim \prod_{i = 1}^K \pi_{\mathcal{IG}}\left(\lambda_i |\nu -1, \frac{c}{2}\right) \bone_{\{\lambda_1 > \ldots > \lambda_K > 0\}}. 
\end{equation}
Given the results on order statistics  recalled in Appendix \ref{thm: order stat}, sampling $\bm\Delta$ from distribution \eqref{eq: delta cI} can be done in two steps:
\begin{enumerate}
    \item $\bm y_i \stackrel{iid}{\sim} \pi_{\mathcal{IG}}\left(\lambda_i |\nu -1, \frac{c}{2}\right), i = 1, \ldots, K $
    \item $\bm \lambda_i := \bm y_{(K-i+1)},$
\end{enumerate}
where $\bm y_{(K)} > \ldots > \bm y_{(1)}$ are the order statistics of $y_i, i = 1, ..., K$. 

Thus, we can obtain an exact sampling from $\siw(\nu, cI_{K}, 1),$ which is summarized in Algorithm \ref{alg: siw Psi=cI}. 
\begin{algorithm}
\caption{Sampling from $\siw(\nu, cI_{K}, 1), \, c > 0$}\label{alg: siw Psi=cI}
\begin{algorithmic}[1]
\Require degree of freedom $\nu>1$, scale $c > 0$, sample size $N$
\Ensure Samples $\{\Sigma^{(n)}\}_{n=1}^N \subset S^K_{++}$
\For{$n = 1,\dots,N$}
  \State Draw $\alpha_{i,j} \stackrel{iid}{\sim} \mathcal{N}(0, 1), \; i,j =1, \ldots, K,$ and form the matrix $A := [\alpha_{i,j}]$
  \State Compute QR decomposition of $A$, denoted by $A = QR$, and set $\Gamma = Q$
  \For{$i = 1,\dots,K$}
    \State Draw $\lambda_i \sim \mathcal{IG}\bigl(\nu-1,\frac{c}{2}\bigr)$
  \EndFor
  \State Sort $(\lambda_i)$ descending to get $(\lambda_{(1)},\dots,\lambda_{(K)})$
  \State Set $\Sigma^{(n)} \leftarrow \Gamma \,\mathrm{diag}(\lambda_{(1)},\dots,\lambda_{(K)})\,\Gamma^\top$
\EndFor

\Return $\{\Sigma^{(n)}\}_{n=1}^N$
\end{algorithmic}
\end{algorithm}

\subsection{General matrix}\label{sec: algo Psi general}
When $\Psi \neq cI_K$, all $\bm\lambda_i$ do not follow the same distribution given $\bm\Gamma$, thus the result of order statistics do not apply. We consider  SIR \citep[Section 7]{cappe2005inference}, which consists in two steps: sampling from the proposal distribution, and resampling from the previous proposal samples with probabilities proportional to the difference between their target and proposal densities. Thus when the target sampling is difficult, one can rely on a proposal distribution which is easier to sample. The SIR framework applied to our sampling problem is detailed as follows. 
\begin{enumerate}
    \item \textbf{Sampling from the proposal:} $M$ independent candidate samples, $\{(\Delta^{(m)}, \Gamma^{(m)})\}_{m=1}^M$, are drawn from a proposal distribution $\tau(\Delta^{(m)}, \Gamma^{(m)})$. For each $(\Delta^{(m)}, \Gamma^{(m)})$, an unnormalized importance weight $w(\Delta^{(m)}, \Gamma^{(m)})$ is defined: 
    $$
w(\Delta^{(m)}, \Gamma^{(m)}) \propto \frac{\pi(\Delta^{(m)}, \Gamma^{(m)})}{\tau(\Delta^{(m)}, \Gamma^{(m)})},
    $$
    where $\pi$ is the $\siw$ density given in Equation \eqref{eq: siw b=1}. Here for both target $\pi$ and proposal $\tau$, we omit the parameters $\Psi$ and $\nu$ in the index for the simplicity of notations. Form: $\Sigma^{(m)} = \Gamma^{(m)}\Delta^{(m)}(\Gamma^{(m)})^T$.

    \item \textbf{Resampling:} $N$ samples are drawn with replacement from the set of $M$ candidates, where the probability of selecting any given candidate $\Sigma^{(m)}$, equivalently  $(\Delta^{(m)}, \Gamma^{(m)})$, is $$\frac{w(\Delta^{(m)}, \Gamma^{(m)})}{\sum_{m=1}^M w(\Delta^{(m)}, \Gamma^{(m)})}.$$
\end{enumerate}
The final output is the $N$ resampled samples. The method consists only in two direct samplings but does not generate exact samples from the target as in the previous section. Nevertheless, when $M$ increases, the final output is closer to the samples drawn from the target. This can be seen from the following calculation. 
Let $\Sigma^*$ be a resampled sample. For any measurable set $A \subset S^K_{++}$, the probability that $\Sigma^*$ falls into $A$ conditionally on the proposal samples is:
\begin{equation}\label{eq: M role}
\begin{aligned}
    \E\left[\mathbf{1}(\Sigma^* \in A) \mid \{\Sigma^{(m)}\}_{m=1}^M \right]
    &= \sum_{m=1}^{M} \mathbf{1}(\Sigma^{(m)} \in A) \cdot P(\Sigma^* = \Sigma^{(m)}) \\
    &= \sum_{m=1}^{M} \mathbf{1}(\Sigma^{(m)} \in A) \frac{w_m}{\sum_{m=1}^M w_m}\\
    &= \frac{\frac{1}{M}\sum_{m=1}^{M}  w_m \mathbf{1}(\Sigma^{(m)} \in A)}{\frac{1}{M}\sum_{m=1}^{M}  w_m} \\
    &\xrightarrow{M \to \infty} \frac{\E_{\tau}\left[w(\Delta, \Gamma) \mathbf{1}(\Sigma \in A)\right]}{\E_{\tau}\left[w(\Delta, \Gamma)\right]} = \frac{\E_{\tau}\left[\frac{\pi(\Delta, \Gamma)}{\tau (\Delta, \Gamma)}   \mathbf{1}(\Sigma \in A)\right]}{\E_{\tau}\left[\frac{\pi(\Delta, \Gamma)}{\tau (\Delta, \Gamma)}\right]}\\
    &= \E_{\pi}\left[ \mathbf{1}(\Sigma \in A)\right]  \end{aligned}
\end{equation}
where $w_m = w(\Delta^{(m)}, \Gamma^{(m)})$. The convergence comes from the law of large numbers. Therefore, selecting a proposal distribution that is efficient to sample from is key. This allows for a sufficiently large $M$ to ensure the resulting approximation error is negligible, making the final samples a high-fidelity representation of the target distribution in practice. We consider the following proposal, defined in $3$ steps. 
\begin{enumerate}
    \item Draw $(\bm\lambda_1^0, \ldots,  \bm\lambda_K^0, \bm\Gamma_0)$
    \begin{equation}
\begin{cases}
    \bm\Gamma_0 \sim k_{\mathcal{U}\{ \mathcal{O}_K\}}(\Gamma), \\
    \bm\lambda_i^0| \bm\Gamma_0 \sim \pi_{\mathcal{IG}}(\lambda_i |\nu -1, \frac{1}{2}\bm\Gamma_{i}^\top \Psi \bm\Gamma_{i}), \, i = 1, \ldots, K.   
\end{cases}
\end{equation}
    \item Define $I_i, i =1, \ldots, K$ such that $ \bm\lambda_{I_1}^0 \geq \ldots \geq\bm\lambda_{I_K}^0.$
    \item Define matrices $\Tilde{\bm\Delta}$ and $ \Tilde{\bm\Gamma}$:
    \begin{equation}\label{prop}
        \begin{cases}
            \Tilde{\bm\Delta} := \mbox{diag}\{\bm\lambda_{I_1}^0, ..., \bm\lambda_{I_K}^0\}, \\
            \Tilde{\bm\Gamma}_{:,i} = \left[\bm\Gamma_0\right]_{:,I_i}.
        \end{cases}
    \end{equation}
\end{enumerate}
The induced distribution on $(\Tilde{\bm\Delta}, \Tilde{\bm\Gamma})$ is the proposal. The associated weight is: 
\begin{equation}\label{eq: weight}
w(\Delta,\Gamma) \propto \frac{1}{\prod_{i=1}^K c_\mathcal{IG}(\nu -1, \frac{1}{2}\Gamma_{i}^\top \Psi \Gamma_{i})}, \mbox{ where } c_\mathcal{IG}(\nu -1, \frac{1}{2}\Gamma_{i}^\top \Psi \Gamma_{i}) = \frac{\left(\frac{1}{2}\Gamma_{i}^\top \Psi \Gamma_{i}\right)^{\nu-1}}{f_\Gamma(\nu-1)},
\end{equation}
where $f_\Gamma$ is gamma function.
In addition to an easy implementation, another advantage of this proposal is that its weight does not depend on $\Delta$. This means all proposal samples sharing the same $\Gamma$ have the same weight. For importance resampling, the more homogeneous the weights are across proposal samples, the more stable the final result is. By contrast, when there is a large discrepancy in weights, the proposal samples with large importance weights will dominate the output, making it less reliable, even with the consistence always true when $M$ goes to infinity. The extreme case is when one weight is much larger than the others, $N$ resampled samples can consist only of one proposal sample.  We will revisit this issue in next section.  


Plug proposal \eqref{prop} into the SIR framework, we obtain Algorithm \ref{alg: siw}. Note that when $K$ or $\nu$ are large, the weight in Equation \eqref{eq: weight} may encounter overflow or underflow. For example, when $\nu = 50, \Psi = I$, $c_\mathcal{IG}(\nu -1, \frac{1}{2}\Gamma_{i}^\top \Psi \Gamma_{i}) \approx 7\times 10^{75}$. Thus, in practice we work with log-weights $\Log(w)$, and we apply the log-sum-exp trick in the resampling step. 

Now we study the convergence of the output samples of Algorithm \ref{alg: siw} by providing more results on consistency and a central limit theorem. Let us consider any estimator which takes a form of sample mean using the output of Algorithm \ref{alg: siw}, defined as follows.

\begin{definition}\label{def: SIR}
Given the output $\Sigma^{(n)}_M, n = 1, \ldots, N$, of Algorithm \ref{alg: siw}, for any real function $f$ on $S^K_{++}$, define the SIR estimators
\begin{equation}\label{eq: SIR estimator}
\Hat{\mu}^\mathrm{SIR}_{M,N}(f) =
\frac{1}{N}\sum\limits_{n=1}^Nf(\Sigma^{(n)}_M).
\end{equation}
\end{definition}
To be able to express the relative speed between $M$ and $N$, we consider a sequence of values of $(M,N)$, denoted by $\{(M_l,N_l)\}_{l \in \mathbb{N^+}}$, which satisfies $\lim_{l \rightarrow \infty} M_l = \infty$ and $\lim_{l \rightarrow \infty} N_l = \infty$. Then we have following results. 
\begin{theorem}\label{thm: algo1 consistence}
 Assume $f \in L_1(S^K_{++}, \pi)$, we have 
\begin{equation}
    \Hat{\mu}^\mathrm{SIR}_{M_l,N_l}(f) \stackrel{P}{\longrightarrow} \pi(f), \quad l \rightarrow \infty,
\end{equation}
    where $\pi(f)$ denotes the integral $\int_{\Sigma\in S^K_{++}} f(\Sigma) d\pi$.
\end{theorem}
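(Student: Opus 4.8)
The plan is to route the convergence through an intermediate self-normalized importance sampling estimator, separating the two sources of randomness: the proposal draws and the resampling draws. Conditionally on the proposal sample $\{\Sigma^{(m)}\}_{m=1}^M$, the resampled points $\Sigma^{(1)}_M,\dots,\Sigma^{(N)}_M$ are i.i.d.\ from the discrete law assigning mass $w_m/\sum_j w_j$ to $\Sigma^{(m)}$, so the conditional mean of the SIR estimator is the importance sampling estimator
\[
\hat\mu^{\mathrm{IS}}_M(f) := \frac{\sum_{m=1}^M w_m f(\Sigma^{(m)})}{\sum_{m=1}^M w_m}.
\]
Accordingly I would write $\hat\mu^{\mathrm{SIR}}_{M_l,N_l}(f) - \pi(f) = \bigl[\hat\mu^{\mathrm{SIR}}_{M_l,N_l}(f) - \hat\mu^{\mathrm{IS}}_{M_l}(f)\bigr] + \bigl[\hat\mu^{\mathrm{IS}}_{M_l}(f) - \pi(f)\bigr]$, naming the first bracket the \emph{resampling error} and the second the \emph{importance sampling error}, and control each separately.

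For the importance sampling error I would apply the strong law of large numbers to the numerator and denominator exactly as in \eqref{eq: M role}: $\frac1M\sum_m w_m f(\Sigma^{(m)}) \to \E_\tau[w f]$ and $\frac1M\sum_m w_m \to \E_\tau[w]$, so the ratio tends to $\E_\tau[wf]/\E_\tau[w] = \pi(f)$. The only integrability needed is $\E_\tau[|wf|] = \int |f|\, d\pi < \infty$, which is precisely the hypothesis $f \in L_1(S^K_{++},\pi)$, while $\E_\tau[w]=\int d\pi$ is finite and positive since $\pi$ is proper and $\tau$ shares its support. This term vanishes as $M_l\to\infty$, irrespective of $N_l$.

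For the resampling error I would split $f$ into a bounded part $f_C := f\,\mathbf{1}_{\{|f|\le C\}}$ and a tail $g_C := f - f_C$. For $f_C$, the conditional variance of the resample average is at most $C^2/N_l$, so conditional Chebyshev gives $P\bigl(|\hat\mu^{\mathrm{SIR}}_{M_l,N_l}(f_C) - \hat\mu^{\mathrm{IS}}_{M_l}(f_C)| > t\bigr) \le C^2/(N_l t^2) \to 0$ uniformly in the proposals as $N_l\to\infty$. For the tail, conditional Jensen gives $\E\bigl[\,|\hat\mu^{\mathrm{SIR}}(g_C)| \mid \{\Sigma^{(m)}\}\bigr] \le \hat\mu^{\mathrm{IS}}_{M_l}(|g_C|)$, which converges in probability to $\pi(|g_C|)=\int_{\{|f|>C\}}|f|\,d\pi$ by the same law-of-large-numbers argument; restricting to the high-probability event $\{\hat\mu^{\mathrm{IS}}_{M_l}(|g_C|)\le 2\pi(|g_C|)\}$ and applying conditional Markov yields $\limsup_l P(|\hat\mu^{\mathrm{SIR}}(g_C)|>t)\le 2\pi(|g_C|)/t$. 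Since $f\in L_1(\pi)$, dominated convergence forces $\pi(|g_C|)\to 0$ as $C\to\infty$, so a standard $\varepsilon$-argument (first choose $C$ large, then send $M_l,N_l\to\infty$) closes the estimate.

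The main obstacle is exactly this resampling step under the mere $L_1$ assumption: the clean Chebyshev bound on the conditional variance requires an $L_2$ control that is unavailable, so the truncation device together with the good-event argument taming $\hat\mu^{\mathrm{IS}}_{M_l}(|g_C|)$ — a ratio, whose expectation is \emph{not} the ratio of expectations — is where the care lies. Everything else, namely the two SLLN limits and the conditionally i.i.d.\ structure of the resampled points, is routine once the decomposition is in place; notably, no constraint relating the growth rates of $M_l$ and $N_l$ is needed for consistency.
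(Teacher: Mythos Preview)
Your argument is correct and self-contained; the paper's proof is instead a two-line verification of the hypotheses of Theorem~9.2.9 in \cite{cappe2005inference} (namely $\pi\ll\tau$, and consistency of the proposal sample for $(\tau,L_1(\tau))$ with $w\in L_1(\tau)$, the latter secured by the boundedness of $w$ from Proposition~\ref{prop: boundedness}). Your decomposition into importance-sampling error plus resampling error, with the truncation $f=f_C+g_C$ and conditional Chebyshev/Markov on the resampling side, is essentially what the proof of that black-box theorem does under the hood; so the difference is packaging rather than substance. Two small remarks: first, in your tail step you bound $|\hat\mu^{\mathrm{SIR}}(g_C)|$ but the resampling error for the tail is $\hat\mu^{\mathrm{SIR}}(g_C)-\hat\mu^{\mathrm{IS}}(g_C)$, so you should also note that $|\hat\mu^{\mathrm{IS}}(g_C)|\le\hat\mu^{\mathrm{IS}}(|g_C|)\le 2\pi(|g_C|)$ on the same good event (trivial, but worth stating); second, your route never invokes boundedness of $w$, only $\E_\tau[w]<\infty$ and $\E_\tau[|wf|]=\pi(|f|)<\infty$, so it is nominally more general than the paper's citation, though in this application $w$ is bounded anyway.
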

In principle, the consistency holds true as soon as both sample sizes $\ml$ and $\nl$ go to infinity, whichever their relative speed. However, it affects the finite-sample precision of $    \Hat{\mu}^\mathrm{SIR}_{\ml,\nl}$ and the running time of the algorithm. The following result indicates the precision in terms of $\ml/\nl$, therefore guiding the choice of $M, N$ in practice. 
\begin{theorem}\label{thm: algo1 CLT}
Assume $f \in L_2(S^K_{++}, \pi)$, in addition that, $\lim_{l \rightarrow \infty} M_l /N_l = \alpha$ for some $\alpha \in [0, \infty]$, we have when $l \rightarrow \infty,$
 \begin{enumerate}
     \item[(i)] if $\alpha < 1$, then  
\begin{equation}\sqrt{M}\left(\Hat{\mu}^\mathrm{SIR}_{M_l,N_l}(f) -  \pi(f)\right) \stackrel{\mathcal{D}}{\longrightarrow} \mathcal{N}(0, \Tilde{\sigma}^2(f)),
\end{equation}
where $\Tilde{\sigma}^2(f) = \alpha \mathbb{V}_\pi(f) + \mathbb{V}_{\tau}\left\{w [f - \pi(f)]\right\}$ with $\tau$ the proposal defined in Equation \eqref{prop}.
\item[(ii)] if $\alpha \geq 1$, then  
\begin{equation}\sqrt{N}\left(\Hat{\mu}^\mathrm{SIR}_{M_l,N_l}(f) -  \pi(f)\right) \stackrel{\mathcal{D}}{\longrightarrow} \mathcal{N}(0, \Tilde{\sigma}^2(f)), 
\end{equation}
where $\Tilde{\sigma}^2(f) = \mathbb{V}_\pi(f) + \alpha^{-1}\mathbb{V}_{\tau}\left\{w [f - \pi(f)]\right\}$.
\end{enumerate}
\end{theorem}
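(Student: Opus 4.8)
The plan is to decompose the SIR error into two asymptotically independent pieces — a resampling fluctuation and a self-normalized importance sampling (IS) fluctuation — and to treat each by a separate central limit theorem before recombining them through a conditional characteristic function argument. Centering the test function as $\tilde f := f - \pi(f)$, I would introduce the intermediate self-normalized IS estimator built on the $M_l$ proposal draws,
\[
\hat\mu^{IS}_{M_l}(f) := \frac{\sum_{m=1}^{M_l} w_m\, f(\Sigma^{(m)})}{\sum_{m=1}^{M_l} w_m},
\]
and write
\[
\hat\mu^{\mathrm{SIR}}_{M_l,N_l}(f) - \pi(f)
= \big[\hat\mu^{\mathrm{SIR}}_{M_l,N_l}(f) - \hat\mu^{IS}_{M_l}(f)\big]
+ \big[\hat\mu^{IS}_{M_l}(f) - \pi(f)\big],
\]
the first bracket being the resampling error and the second the IS error.

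For the IS error, standard self-normalized importance sampling theory — the delta method applied to the pair $\big(\tfrac{1}{M_l}\sum_m w_m\tilde f(\Sigma^{(m)}),\ \tfrac{1}{M_l}\sum_m w_m\big)$, which is legitimate under $f\in L_2(\pi)$ — gives $\sqrt{M_l}\big(\hat\mu^{IS}_{M_l}(f) - \pi(f)\big) \stackrel{\mathcal D}{\longrightarrow} \mathcal N\big(0,\ \mathbb{V}_\tau\{w\tilde f\}\big)$. Note that this term is measurable with respect to $\mathcal F_{M_l} := \sigma(\{\Sigma^{(m)}\}_{m=1}^{M_l})$.

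For the resampling error I would condition on $\mathcal F_{M_l}$: given the candidates, the $N_l$ resampled draws are i.i.d.\ from the discrete law with masses $w_m/\sum_m w_m$, with conditional mean $\hat\mu^{IS}_{M_l}(f)$ and conditional variance $s_{M_l}^2$, which is itself the IS estimate of $\pi(f^2)-(\pi f)^2$, so that $s_{M_l}^2 \stackrel{P}{\longrightarrow} \mathbb{V}_\pi(f)$ by the law of large numbers underlying Theorem \ref{thm: algo1 consistence} applied to $f$ and $f^2$. A conditional Lindeberg--Feller CLT for this triangular array then shows the inner conditional characteristic function $\psi_l(t) := \E\big[\exp\big(it\,a_l(\hat\mu^{\mathrm{SIR}}-\hat\mu^{IS})\big)\mid \mathcal F_{M_l}\big]$ converges in probability to $\exp(-\tfrac{t^2}{2}\sigma_1^2)$, where $a_l$ is the scaling ($\sqrt{M_l}$ in case (i), $\sqrt{N_l}$ in case (ii)) and $\sigma_1^2 = \lim_l a_l^2 s_{M_l}^2/N_l$. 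I then combine the pieces by writing $\phi_l(t) = \E\big[e^{it\,a_l(\hat\mu^{IS}-\pi f)}\,\psi_l(t)\big]$, splitting off the deterministic limit $c := \exp(-\tfrac{t^2}{2}\sigma_1^2)$, and bounding the remainder $\E\big[e^{it\,a_l(\hat\mu^{IS}-\pi f)}(\psi_l(t)-c)\big]$ by dominated convergence (using $|\psi_l|\le 1$ and $\psi_l\to c$ in probability); the surviving factor tends to $\exp(-\tfrac{t^2}{2}\sigma_2^2)$, the rescaled IS Gaussian CF. Inserting $a_l^2/N_l\to\alpha$ in case (i) and $N_l/M_l\to\alpha^{-1}$ in case (ii) produces the product of Gaussian characteristic functions and hence the two stated variances $\alpha\mathbb{V}_\pi(f)+\mathbb{V}_\tau\{w\tilde f\}$ and $\mathbb{V}_\pi(f)+\alpha^{-1}\mathbb{V}_\tau\{w\tilde f\}$.

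The main obstacle is making the conditional Lindeberg condition rigorous: the conditional law varies with $M_l$ and the truncation threshold $\varepsilon\sqrt{N_l}\,s_{M_l}$ grows jointly with both indices, so I must show the IS-weighted truncated second moment $\tfrac{1}{\sum_m w_m}\sum_m w_m \tilde f(\Sigma^{(m)})^2 \mathbf 1\{|\tilde f(\Sigma^{(m)})|>\varepsilon\sqrt{N_l}s_{M_l}\}$ vanishes, which is exactly where $L_2(\pi)$ integrability enters, via dominated convergence for the tail of $\tilde f^2$ under $\pi$. A secondary delicate point is justifying that the two fluctuations combine as \emph{independent} Gaussians despite sharing the candidate randomness; this is resolved by the in-probability convergence of $\psi_l$ to a constant, which decouples the conditional limit from the $\mathcal F_{M_l}$-measurable IS error in the outer expectation.
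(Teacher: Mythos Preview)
Your decomposition into a self-normalized IS piece plus a conditional resampling piece, with the two recombined through a conditional characteristic-function argument, is correct and is precisely the machinery behind Theorem~9.2.15 in \cite{cappe2005inference}. The paper's own proof of this theorem is in fact a one-liner: it simply invokes that reference and then checks its hypotheses, the only nontrivial point being that $f\in L_2(S^K_{++},\pi)$ implies $fw\in L_2(S^K_{++},\tau)$, which follows because the weight $w$ is \emph{bounded} (Proposition~\ref{prop: boundedness}). Your route is thus a hands-on reconstruction of that black-box result; interestingly, it is almost identical to the argument the paper does write out in full for the \emph{clipped} CLT (Theorem~\ref{thm: algo3 CLT}), where no off-the-shelf citation is available and exactly your $A_l+B_l$ split appears.

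One point you should tighten: when you say the delta-method CLT for the self-normalized IS estimator ``is legitimate under $f\in L_2(\pi)$'', this is not true in general --- you need $\mathbb{V}_\tau(w\tilde f)<\infty$, i.e.\ $\int w^2\tilde f^2\,d\tau<\infty$, which does not follow from $\int w\tilde f^2\,d\tau<\infty$ unless $w$ is bounded. In this paper it is, and that is exactly the hinge the paper's short proof turns on; make that explicit. Your handling of the conditional Lindeberg condition and of the decoupling via $\psi_l\to$ constant in probability is fine and matches what the paper does in the clipped case.
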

Firstly, the central limit theorem requires stronger regularity which is a $L_2$-integrability on function $f$ compared to convergence in probability. Secondly, as previously, the theorem holds for any relative speed $\alpha$ of divergence between $M$ and $N$. However, $\alpha$ impacts the convergence rate and the asymptotic variance. The convergence rate is determined by the slower one between $M$ and $N$, while the asymptotic variance is a linear combination of two variances. More specifically, when $\alpha = \infty$, $\Tilde{\sigma}^2(f) = \mathbb{V}_\pi(f)$, which is the asymptotic variance of the sample mean estimator 
\begin{equation}
    \frac{1}{N}\sum\limits_{n=1}^Nf(\Sigma^{(n)}), \quad \Sigma^{(n)} \stackrel{iid}{\sim} \pi.
\end{equation}
This implies when $M$ diverges faster than $N$, the error produced by sampling from the proposal can by ignored, the output can be considered as exact samples from the target, which aligns with the result in calculation \eqref{eq: M role}. On the other hand, when $\alpha = 0$, $\Tilde{\sigma}^2(f) = \mathbb{V}_{\tau}\left\{w [f - \pi(f)]\right\}$, which is the asymptotic variance of the importance sampling estimator 
\begin{equation}
    \frac{\sum_{m=1}^M w_m f(\Tilde{\Sigma}^{(m)})}{\sum_{m=1}^M w_m}, \quad \Tilde{\Sigma}^{(m)} \stackrel{iid}{\sim} \tau.
\end{equation}
In this case, the global estimation error all comes from the error produced by sampling from the proposal. Because the resampling step is faster than the sampling step, in practice, we can always set $N$ larger than $M$, then use $M$ to tune the desired precision. 
\begin{algorithm}[htp]
\caption{Sampling from $\siw(\nu, \Psi, 1)$}\label{alg: siw}
\begin{algorithmic}[1]
\Require degree of freedom $\nu>1$, scale matrix $\Psi\in S^K_{++}$, sample size $N$, sample size for proposal $M$
\Ensure Samples $\{\Sigma^{(n)}_M\}_{n=1}^N \subset S^K_{++}$

\vspace{0.1in}
\textsc{/* Sampling from proposal */}
\For{$m = 1,\dots,M$}
  \State Draw $\alpha_{i,j} \stackrel{iid}{\sim} \mathcal{N}(0, 1), \; i,j =1, \ldots, K,$ and form the matrix $A := [\alpha_{i,j}]$
  \State Compute QR decomposition of $A$, denoted by $A = QR$, and set $\Gamma = Q$
  \For{$i = 1,\dots,K$}
      \State Draw $\lambda_i \sim \mathcal{IG}\bigl(\nu-1,\frac{1}{2}\Gamma_{i}^\top\Psi\Gamma_{i}\bigr)$
  \EndFor
  \State $\Log w_m = K\Log f_\Gamma(\nu-1) - \sum_{i=1}^K(\nu-1)\Log\left(\frac{1}{2}\Gamma_{i}^\top \Psi \Gamma_{i}\right).$
  \State  Set $I_i, i =1, \ldots, K$ such that $ \lambda_{I_1} \geq \ldots \geq \lambda_{I_K}.$
    \State Define matrix $\Tilde{\Delta} := \mbox{diag}\{\lambda_{I_1}, ..., \lambda_{I_K}\}$, and matrix $\Tilde{\Gamma}$ such that $\Tilde{\Gamma}_{:,i} = \left[\Gamma\right]_{:,I_i}$.
  \State Set $\Tilde{\Sigma}^{(m)} \leftarrow \Tilde{\Gamma} \,\Tilde{\Delta}\,\Tilde{\Gamma}^\top$.
\EndFor

\vspace{0.1in}
\textsc{/* Calculating the normalized weights */}
\State 
For $m = 1, \ldots, M$, compute 
$p_m = \Tilde{w}_m/\sum_{m=1}^M \Tilde{w}_m,$ where $\Tilde{w}_m = \Exp(\Log w_m - \max_m\{ \Log w_m \}).$

\vspace{0.1in}
\textsc{/* Importance resampling */}
\For{$n = 1, \ldots, N$}
\State $ m^{\star} \sim \mathcal{M}_M(1 |  p_1, \ldots,  p_M),$ 
\State $\Sigma^{(n)}_M := \Tilde{\Sigma}^{(m^\star)}.  $
\EndFor

\Return $\{\Sigma^{(n)}_M\}_{n=1}^N$
\end{algorithmic}
\end{algorithm}

\noindent


\subsection{Robust version of Algorithm \ref{alg: siw}}\label{sec: algo Psi general robust} 
Even though the proposed weight \eqref{eq: weight} theoretically leads to convergent SIR estimators, in practice, if there is a large discrepancy in weights $w_m$ across proposal samples, it can cause large variance of SIR estimators, $\Hat{\mu}^\mathrm{SIR}_{M,N}(f)$, thus make estimations with finite sample sizes unstable. Moreover, in some extreme cases, the discrepancy can be so large as to make the estimators totally fail due to the machine underflow, illustrated in Figure \ref{fig: log w p}. 

For our weight, large discrepancy can arise, when $\Psi$ has a large discrepancy in its eigenvalues. Recall that the unnormalized weight $w_m$ used in Algorithm \ref{alg: siw} is 
\begin{equation}
    w_m = \prod_{i=1}^K \frac{f_\Gamma(\nu-1)}{\left(\frac{1}{2}\Gamma_{i}^\top \Psi \Gamma_{i}\right)^{\nu-1}}, m = 1, ..., M,
\end{equation}
where $\|\Gamma_{i}\|_{l_2} = 1, \forall \Gamma, i$. The discrepancy in weight is originally from $\Gamma_{i}^\top \Psi \Gamma_{i}$ since $\nu$ is fixed for all samples. However, such discrepancy can be worsen especially by large $\nu$. 

To address this issue, we propose to use SIR with clipped weights. We consider the clipping method proposed in \cite{vazquez2017importance}, which consists in assigning the $M_T$ biggest weights a new value given by the one of the $M_T$-biggest weight. This method is able to reduce the discrepancy of weights. The new SIR framework with clipping is summarized in Algorithm \ref{alg: siw adapted}.
\begin{figure}
    \centering
    \includegraphics[width=0.5\linewidth]{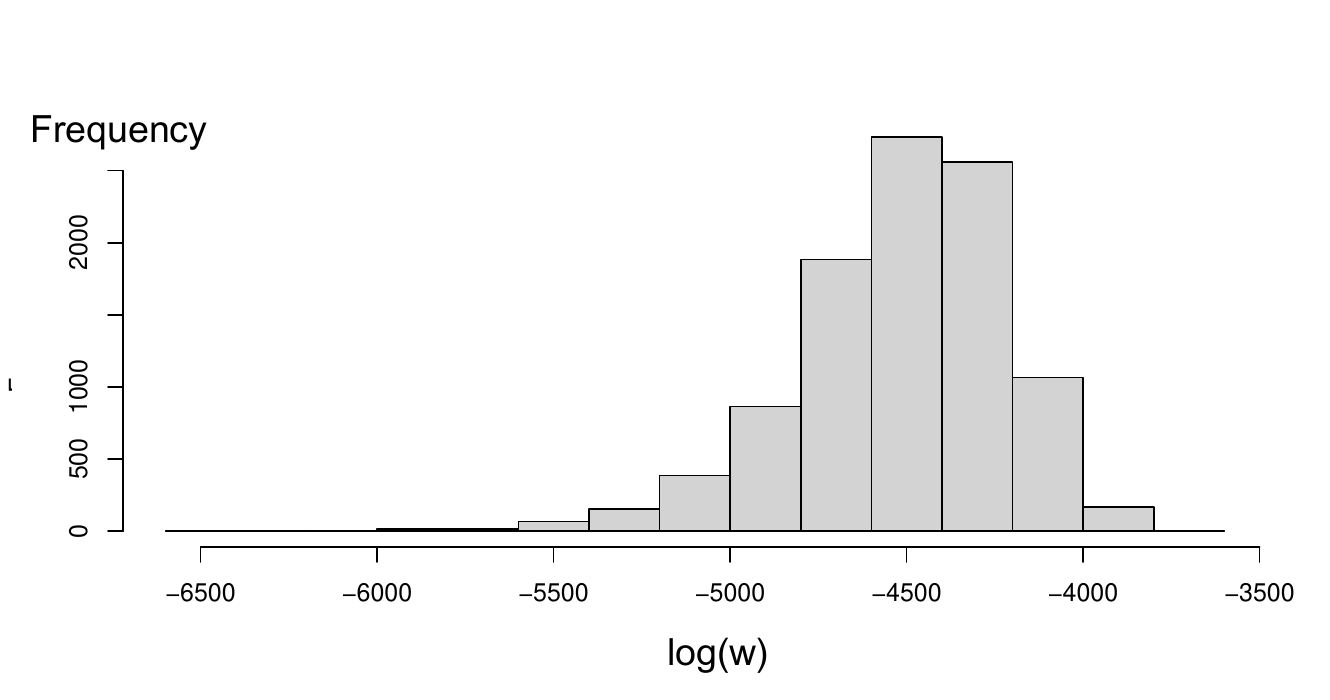}
        \includegraphics[width=0.45\linewidth]{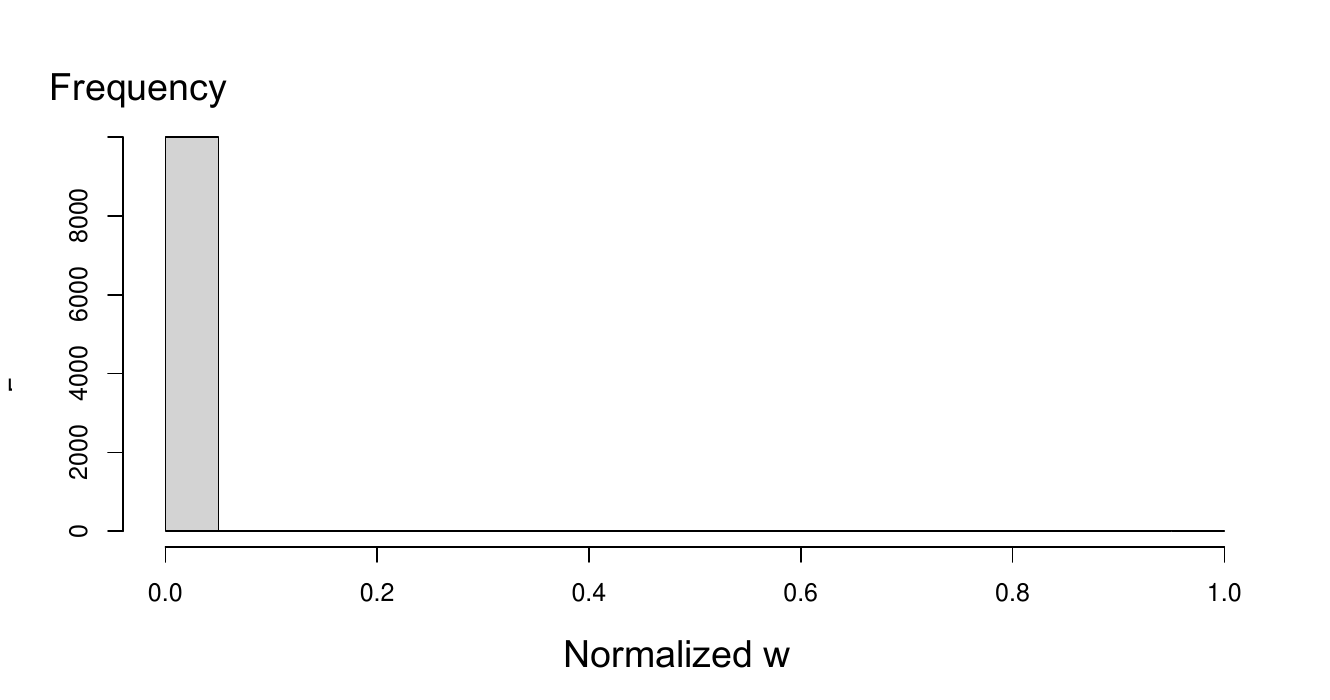}
    \caption{Illustration of the weight collapse phenomenon. 
On the left is a histogram of the log-importance weights ($\log w_m$). Although all weights are finite, their large variance indicates a massive discrepancy in their original values. 
On the right is the corresponding histogram of the normalized weights ($p_m$). Due to the large discrepancy, a single weight dominates, with its normalized value being close to 1 while all others are near 0. Consequently, the resampling step will almost exclusively select one proposal sample, causing the SIR algorithm to fail.}
    \label{fig: log w p}
\end{figure}
\begin{algorithm}
\caption{Robustified sampling from $\siw(\nu, \Psi, 1)$}\label{alg: siw adapted}
\begin{algorithmic}[1]
\Require degree of freedom $\nu>1$, scale matrix $\Psi\in S^K_{++}$, sample size $N$, sample size for proposal $M$, clipping size $M_T$.
\Ensure Samples $\{\Sigma^{(n)}_{\ml,\mtl}\}_{n=1}^N \subset S^K_{++}$

\vspace{0.1in}
\textsc{/* Sampling from proposal */}

\State Same as Algorithm \ref{alg: siw}. 

\vspace{0.1in}
\textsc{/* Clipping */}
\State 
Let $\Log w_{(M_T)}$ the $M_T$-th greatest weight. For $m = 1, \ldots, M$, define  
\vspace{-0.1in}
\begin{equation}
    \begin{cases}
      \Log w_{m}^\prime =  \Log w_{(M_T)}, \; &\mbox{if} \;  \Log w_{m} >  \Log w_{(M_T)}, \\
      \Log w_{m}^\prime = \Log w_{m}, \; &\mbox{otherwise}.
    \end{cases}
\end{equation}

\textsc{/* Calculating the normalized weights */}
\State 
For $m = 1, \ldots, M$, compute 
$p_m = \Tilde{w}_m/\sum_{m=1}^M \Tilde{w}_m,$ where $\Tilde{w}_m = \Exp(\Log w_m^\prime - \max_m\{ \Log w_m^\prime \}).$

\vspace{0.1in}
\textsc{/* Importance resampling */}
\State Same as Algorithm \ref{alg: siw}. 

\Return $\{\Sigma^{(n)}_{\ml,\mtl}\}_{n=1}^N$
\end{algorithmic}
\end{algorithm}
As previously, we shall study the convergence of the output samples of Algorithm \ref{alg: siw adapted}. Let us consider the SIR estimators with clipping defined as follows.
\begin{definition}
Define the SIR estimators with clipping
\begin{equation}
\Hat{\mu}^\mathrm{SIR}_{M,M_T,N}(f) =
\frac{1}{N}\sum\limits_{n=1}^Nf\bigl(\Sigma^{(n)}_{M,M_T}\bigr).
\end{equation}
\end{definition}
Denote a sequence of values of $(M,M_T,N)$ by $\{(M_l,\mtl, N_l)\}_{l \in \mathbb{N^+}}$, which satisfies $\lim_{l \rightarrow \infty} M_l = \infty$, $\lim_{l \rightarrow \infty} \mtl = \infty$ and $\lim_{l \rightarrow \infty} N_l = \infty$. 
\begin{theorem}\label{thm: algo3 consistence}
Assume $f \in L_1(S^K_{++}, \tau)$, and $\lim_{l \rightarrow \infty} M_{T,l}/M_l = 0$, we have 
\begin{equation}
    \Hat{\mu}^\mathrm{SIR}_{M_l,M_{T,l},N_l}(f) \stackrel{P}{\longrightarrow} \pi(f), \quad l \rightarrow \infty.
\end{equation}
\end{theorem}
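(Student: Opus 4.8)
The plan is to split the total error $\hat\mu^{\mathrm{SIR}}_{\ml,\mtl,\nl}(f)-\pi(f)$ into a \emph{resampling error} and a \emph{clipped importance-sampling error} by inserting the clipped self-normalized IS estimator
\[
\hat\mu^{\mathrm{IS}}_{\ml,\mtl}(f) \;:=\; \frac{\sum_{m=1}^{\ml} w_m'\, f(\tilde\Sigma^{(m)})}{\sum_{m=1}^{\ml} w_m'},
\]
and to show each piece vanishes in probability. The single structural fact driving the argument is that the weight is bounded above and below by strictly positive constants: since each column has $\|\Gamma_i\|_{l_2}=1$, we have $\Gamma_i^\top\Psi\Gamma_i\in[\lambda_{\min}(\Psi),\lambda_{\max}(\Psi)]\subset(0,\infty)$, so by \eqref{eq: weight} there are constants $0<w_{\min}\le w^*:=\operatorname{ess\,sup}_\tau w<\infty$ with $w(\Delta,\Gamma)\in[w_{\min},w^*]$. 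This is exactly what makes the hypothesis $f\in L_1(\tau)$ the right one: $\E_\tau[w|f|]\le w^*\,\E_\tau[|f|]<\infty$, so $\pi(f)=\E_\tau[wf]/\E_\tau[w]$ is well defined and equals the self-normalized IS limit, as in \eqref{eq: M role}.

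For the clipped IS error I would first note that the threshold $T_{\ml}:=w_{(\mtl)}$ satisfies $T_{\ml}\stackrel{P}{\longrightarrow}w^*$, because $\mtl/\ml\to 0$ (with $\mtl\to\infty$) forces the $\mtl$-th largest of $\ml$ i.i.d.\ copies of $w\le w^*$ to converge to the essential supremum. Writing $w_m-w_m'=(w_m-T_{\ml})^+$ and using the uniform bound $(w_m-T_{\ml})^+\le w^*-T_{\ml}$, the clipping biases are controlled by
\[
\Bigl|\tfrac1{\ml}\sum_{m}(w_m-w_m')\,f(\tilde\Sigma^{(m)})\Bigr| \;\le\; (w^*-T_{\ml})\,\tfrac1{\ml}\sum_{m}\bigl|f(\tilde\Sigma^{(m)})\bigr|,
\]
whose right-hand side tends to $0\cdot\E_\tau[|f|]=0$ in probability by the SLLN. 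Combined with $\tfrac1{\ml}\sum_m w_m f(\tilde\Sigma^{(m)})\to\E_\tau[wf]$ and $\tfrac1{\ml}\sum_m w_m\to\E_\tau[w]\ge w_{\min}>0$ (SLLN, valid since $wf,w\in L_1(\tau)$), the continuous mapping theorem for the ratio gives $\hat\mu^{\mathrm{IS}}_{\ml,\mtl}(f)\stackrel{P}{\longrightarrow}\pi(f)$.

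For the resampling error $R_{\ml,\mtl,\nl}:=\hat\mu^{\mathrm{SIR}}_{\ml,\mtl,\nl}(f)-\hat\mu^{\mathrm{IS}}_{\ml,\mtl}(f)$ I would condition on the proposal sample $\mathcal F_{\ml}$, under which the resampled points are i.i.d.\ from $\sum_m p_m\delta_{\tilde\Sigma^{(m)}}$ and $\E[R_{\ml,\mtl,\nl}\mid\mathcal F_{\ml}]=0$. Because only $L_1$ is assumed, I would truncate $f=f\bone_{\{|f|\le C\}}+f\bone_{\{|f|>C\}}$: for the bounded part the conditional variance is at most $C^2/\nl\to0$, while the tail part is handled in $L_1$ using $p_m=w_m'/\sum_j w_j'\le w^*/(w_{\min}\ml)$, which gives $\E\bigl[\sum_m p_m\,|f|\bone_{\{|f|>C\}}(\tilde\Sigma^{(m)})\bigr]\le (w^*/w_{\min})\,\E_\tau[|f|\bone_{\{|f|>C\}}]$, uniformly in $\ml,\nl$ and $\to0$ as $C\to\infty$. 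A standard three-$\varepsilon$ argument then yields $R_{\ml,\mtl,\nl}\stackrel{P}{\longrightarrow}0$, and the theorem follows.

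The main obstacle is the clipping bias in the \emph{numerator}: the clipped samples are precisely those carrying the largest weights, and a priori $f$ could be large exactly there, so knowing only that a vanishing \emph{fraction} of weights is altered is insufficient. The resolution is the uniform deterministic bound $(w_m-T_{\ml})^+\le w^*-T_{\ml}$, which decouples the magnitude of each clipped increment from $f$ and drives the whole bias to zero once $T_{\ml}\to w^*$; boundedness of $w$ is what makes this available. A secondary difficulty is operating under $L_1$ rather than $L_2$ in the resampling step, which the truncation above resolves.
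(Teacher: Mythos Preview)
Your proof is correct, and it takes a genuinely different route from the paper's.

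\medskip
\noindent\textbf{What the paper does.} The paper invokes Theorem~9.2.8 of \citet{cappe2005inference} as a black box for the resampling step. To verify its hypotheses it must show that the clipped conditional expectations $\mathbb E[|f|\bone_{\{|f|\ge C\}}\mid\mathcal F_l]$ converge in probability to $\pi(|f|\bone_{\{|f|\ge C\}})$; this is done by comparing to the unclipped version and bounding the discrepancy term-by-term via $|w_m-w_m'|\le 2B_w$ on each of the $\mtl$ clipped indices, then dividing by $\ml$ and using $\mtl/\ml\to0$. The clipped IS piece is handled separately (Lemma~C.3) by the same counting bound.

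\medskip
\noindent\textbf{What you do differently.} Two substantive points. First, you observe that $w$ is bounded \emph{below} as well as above, since $\Gamma_i^\top\Psi\Gamma_i\le\lambda_{\max}(\Psi)<\infty$; the paper only uses the upper bound. The lower bound gives you the deterministic estimate $p_m\le w^*/(w_{\min}\ml)$, which makes the resampling error directly tractable by a truncation-in-$C$ argument, bypassing the Capp\'e et al.\ machinery entirely. Second, for the clipping bias you use convergence of the threshold order statistic $T_{\ml}\stackrel{P}{\to}w^*$ together with the uniform bound $(w_m-T_{\ml})^+\le w^*-T_{\ml}$, rather than the paper's cruder per-term bound $2B_w$ on $\mtl$ indices. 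Your bound is sharper per term but costs an extra order-statistic convergence step.

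\medskip
\noindent\textbf{What each approach buys.} Your argument is fully self-contained and elementary---no external SIR theorem is needed---but it leans on structure specific to this proposal (the two-sided boundedness of $w$), and would not transfer verbatim to settings where the importance weight is not bounded away from zero. The paper's argument uses only the upper bound on $w$ and sits closer to the general SIR consistency theory, so it would extend more directly to other proposals at the cost of importing a heavier black box.
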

\begin{theorem}\label{thm: algo3 CLT}
Assume $f \in L_2(S^K_{++}, \tau)$, and $\lim_{l \rightarrow \infty} M_{T,l}/\sqrt{M_l} = 0$, in addition that, $\lim_{l \rightarrow \infty} M_l /N_l = \alpha$ for some $\alpha \in [0, \infty]$, we have when $l \rightarrow \infty$
 \begin{enumerate}
     \item[(i)] if $\alpha < 1$, then  
\begin{equation}\sqrt{\ml}\left(\Hat{\mu}^\mathrm{SIR}_{M_l,M_{T_l},N_l}(f) -  \pi(f)\right) \stackrel{\mathcal{D}}{\longrightarrow} \mathcal{N}(0, \Tilde{\sigma}^2(f)),
\end{equation}
where $\Tilde{\sigma}^2(f) = \alpha \mathbb{V}_\pi(f) + \mathbb{V}_{\tau}\left\{w [f - \pi(f)]\right\}$.
\item[(ii)] if $\alpha \geq 1$, then  
\begin{equation}\sqrt{\nl}\left(\Hat{\mu}^\mathrm{SIR}_{M_l,M_{T_l},N_l}(f) -  \pi(f)\right) \stackrel{\mathcal{D}}{\longrightarrow} \mathcal{N}(0, \Tilde{\sigma}^2(f)), 
\end{equation}
where $\Tilde{\sigma}^2(f) = \mathbb{V}_\pi(f) + \alpha^{-1}\mathbb{V}_{\tau}\left\{w [f - \pi(f)]\right\}$.
\end{enumerate}
\end{theorem}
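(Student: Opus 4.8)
The plan is to show that the weight clipping is \emph{asymptotically invisible} at the relevant scale, so that the limiting law coincides with the one obtained for the unclipped estimator in Theorem \ref{thm: algo1 CLT}. The decisive structural fact is that, for the proposal \eqref{prop}, the weight \eqref{eq: weight} is \emph{bounded}: since each column $\Gamma_i$ is a unit vector, $\Gamma_i^\top \Psi \Gamma_i \in [\lambda_{\min}(\Psi),\lambda_{\max}(\Psi)]$ with $\lambda_{\min}(\Psi)>0$, so there exist constants $0<w_{\min}\le w_{\max}<\infty$ with $w(\Gamma)\in[w_{\min},w_{\max}]$ for every $\Gamma\in\mathcal{O}_K$. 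Boundedness makes $L_2(S^K_{++},\tau)=L_2(S^K_{++},\pi)$ and guarantees $\mathbb{V}_\tau\{w[f-\pi(f)]\}<\infty$, so all the integrability used in the proof of Theorem \ref{thm: algo1 CLT} is available under the present $L_2(\tau)$ hypothesis; this is also the reason the limiting variance $\Tilde{\sigma}^2(f)$ is literally the same in both theorems.

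First I would reuse the two-stage decomposition behind Theorem \ref{thm: algo1 CLT},
\begin{equation}
\hat{\mu}^{\mathrm{SIR}}_{\ml,\mtl,\nl}(f) - \pi(f) = \Bigl[\hat{\mu}^{\mathrm{SIR}}_{\ml,\mtl,\nl}(f) - \hat{\mu}^{\mathrm{IS}}_{\ml,\mtl}(f)\Bigr] + \Bigl[\hat{\mu}^{\mathrm{IS}}_{\ml,\mtl}(f) - \pi(f)\Bigr],
\end{equation}
where $\hat{\mu}^{\mathrm{IS}}_{\ml,\mtl}(f)=\sum_{m} p_m f(\Tilde{\Sigma}^{(m)})$ is the self-normalised importance sampling estimator built from the clipped weights. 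Conditionally on the proposal sample and the clipped weights, the $\nl$ resampled draws are i.i.d.\ from the weighted empirical measure with atoms $\Tilde{\Sigma}^{(m)}$ and masses $p_m$, so the first (resampling) bracket is conditionally centred at $\hat{\mu}^{\mathrm{IS}}_{\ml,\mtl}(f)$ with conditional variance $\nl^{-1}\bigl[\sum_m p_m f_m^2 - (\sum_m p_m f_m)^2\bigr]$, $f_m:=f(\Tilde{\Sigma}^{(m)})$; a conditional Lindeberg CLT then delivers, exactly as in Theorem \ref{thm: algo1 CLT}, a $\mathcal{N}(0,\mathbb{V}_\pi(f))$ limit at rate $\sqrt{\nl}$, provided the clipped weighted sample variance of $f$ converges to $\mathbb{V}_\pi(f)$.

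The crux is to establish two ``clipped-equals-unclipped'' limits: (i) $\sqrt{\ml}\bigl(\hat{\mu}^{\mathrm{IS}}_{\ml,\mtl}(f)-\hat{\mu}^{\mathrm{IS}}_{\ml}(f)\bigr)\xrightarrow{P}0$, so the clipped IS error inherits the $\mathcal{N}\bigl(0,\mathbb{V}_\tau\{w[f-\pi(f)]\}\bigr)$ limit of the unclipped self-normalised IS CLT used in Theorem \ref{thm: algo1 CLT}; and (ii) the clipped weighted moments $\sum_m p_m f_m^k$ converge to $\E_\pi[f^k]$ for $k=1,2$, which secures the resampling variance in the previous paragraph. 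For (i) I would use the algebraic identity
\begin{equation}
\hat{\mu}^{\mathrm{IS}}_{\ml,\mtl}(f)-\hat{\mu}^{\mathrm{IS}}_{\ml}(f) = \frac{b\,\hat{\mu}^{\mathrm{IS}}_{\ml}(f)-a}{S'},
\end{equation}
where $C$ is the clipped index set ($|C|=\mtl$), $b=\sum_{m\in C}(w_m-w_{(\mtl)})$, $a=\sum_{m\in C}(w_m-w_{(\mtl)})f_m$, and $S'=\sum_{m=1}^{\ml} w'_m\ge \ml\,w_{\min}$. Since $0\le w_m-w_{(\mtl)}\le w_{\max}-w_{\min}$, the $b$-term is $O_P(\mtl/\sqrt{\ml})\to0$. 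The $a$-term is the \textbf{main obstacle}: the crude Cauchy--Schwarz bound $\sum_{m\in C}|f_m|\le\sqrt{\mtl}\,(\sum_{m=1}^{\ml}f_m^2)^{1/2}=O_P(\sqrt{\mtl\,\ml})$ only gives $O_P(\sqrt{\mtl})\to\infty$, so one must additionally exploit the vanishing gap $\delta_{\ml}:=w_{\max}-w_{(\mtl)}\to0$ (valid because $\mtl/\ml\to0$ drives the $\mtl$-th largest weight to the essential supremum of $w$), which turns the bound into $\sqrt{\ml}\,|a|/S'=O_P(\delta_{\ml}\sqrt{\mtl})$. I would then verify $\delta_{\ml}\sqrt{\mtl}\to0$ under $\mtl/\sqrt{\ml}\to0$ together with mild regularity of the law of $w$ near its supremum (a bounded density there gives $\delta_{\ml}=O_P(\mtl/\ml)$, whence $\delta_{\ml}\sqrt{\mtl}=O_P(\mtl^{3/2}/\ml)\to0$). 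Claim (ii) follows from the identical estimate applied to $f^2$ and to the normalising constant.

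Finally I would assemble the two pieces as in Theorem \ref{thm: algo1 CLT}: the clipped IS error and the conditionally centred resampling error are asymptotically independent, so their variances add after rescaling by the common rate. Writing $\alpha=\lim \ml/\nl$ and rescaling by $\sqrt{\ml}$ (case $\alpha<1$) or by $\sqrt{\nl}$ (case $\alpha\ge1$) reproduces $\Tilde{\sigma}^2(f)=\alpha\,\mathbb{V}_\pi(f)+\mathbb{V}_\tau\{w[f-\pi(f)]\}$ and $\mathbb{V}_\pi(f)+\alpha^{-1}\mathbb{V}_\tau\{w[f-\pi(f)]\}$, respectively. The only genuinely new estimate relative to Theorem \ref{thm: algo1 CLT} is the control of $a$ in step (i), and I expect that step — specifically quantifying $\delta_{\ml}$ and confirming $\delta_{\ml}\sqrt{\mtl}\to0$ — to be the hardest part of the argument.
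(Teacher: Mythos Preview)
Your overall architecture matches the paper's proof: decompose $\hat\mu^{\mathrm{SIR}}_{M_l,M_{T,l},N_l}(f)-\pi(f)=A_l+B_l$ with $A_l$ the clipped self-normalised IS error and $B_l$ the conditionally centred resampling residual; establish a CLT for $\sqrt{M_l}\,A_l$ (the paper isolates this as a separate lemma on the clipped normalised IS estimator) and a conditional triangular-array CLT for $\sqrt{N_l}\,B_l$ via Theorem~9.5.13 of \cite{cappe2005inference}, after checking that the clipped weighted first and second moments of $f$ converge to their $\pi$-values (your claim~(ii)); then assemble through the joint characteristic function $\mathbb{E}\bigl[\exp\{iu\sqrt{N_l}B_l+iv\sqrt{M_l}A_l\}\bigr]$, using that $A_l$ is $\mathcal{F}_l$-measurable. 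So the skeleton, the use of bounded weights to equate $L_2(\tau)$ with $L_2(\pi)$, and the final variance bookkeeping are all as in the paper.

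The one substantive divergence --- and it is a gap in your proposal --- is the control of the ``$a$-term''. You bound $|w_m-w_{(M_{T,l})}|\le\delta_{M_l}$ and then need $\delta_{M_l}\sqrt{M_{T,l}}\to 0$, which you obtain only under an \emph{additional} hypothesis on the law of $w$ near its essential supremum (``a bounded density there''). That assumption is not part of Theorem~\ref{thm: algo3 CLT}, so as written you prove a strictly weaker statement, and verifying it for the concrete weight \eqref{eq: weight} on $\mathcal{O}_K$ is itself nontrivial. The paper avoids this detour entirely by bounding the clipping perturbation in $L_1$ rather than pathwise: with only $M_{T,l}$ weights altered and each satisfying $|w_m'-w_m|\le 2B_w$, the paper argues that the expected absolute unnormalised clipping difference is at most $2B_w\,(M_{T,l}/M_l)\,\mathbb{E}_\tau|f|$, so that $\sqrt{M_l}$ times it is $O(M_{T,l}/\sqrt{M_l})\to 0$ directly from the stated hypothesis --- no Cauchy--Schwarz, no $\delta_{M_l}$, no density regularity. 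In short, your pathwise Cauchy--Schwarz overcounts the clipped contribution to $\sqrt{M_{T,l}M_l}$ and forces you to recoup the loss through $\delta_{M_l}$; taking expectation first harvests the deterministic count $M_{T,l}$ of clipped indices directly and closes the argument with only $f\in L_1(\tau)$.
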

Theorem \ref{thm: algo3 CLT} shows the convergence rate is determined the same way by $M$ and $N$ as previously. Thus we can still fix $N$ larger than $M$. An important difference in the case of clipping is that, divergence of $M$ and $N$ is not sufficient to reach asymptotic results. The clipped weights should meanwhile not be too many, which is at the magnitude of $o(\ml)$ for the convergence in probability and of $o(\sqrt{\ml})$ for the CLT. While theoretical results suggest that the condition $M_T = o(\sqrt{M_l})$ is necessary to maintain the $\sqrt{M_l}$ convergence rate, our empirical findings indicate this condition can be relaxed in certain cases. A larger clipping size, $M_T$, inherently reduces the variance of the SIR estimator. When this reduction is achieved without degrading the convergence rate, it is clearly advantageous. However, in other scenarios, an excessively large $M_T$ can indeed slow the convergence. This effect can be offset by increasing the number of proposal samples, $M$, albeit at the cost of increased computation time. Consequently, the selection of $M_T$ and $M$ involves a crucial trade-off between the estimator's variance, its convergence rate (which dictates precision), and the overall computational cost. We will detail the way in Section \ref{sec: num} to observe the variance and convergence rate.

Lastly, for the assumption $f \in L_2(S^K_{++}, \tau)$, when $f$ is bounded, the assumption is naturally satisfied. For the more general case of unbounded functions, its validity can be established by leveraging the properties of the $L_2$ space, for which Theorem 2.6 serves as the cornerstone.
\begin{theorem}\label{thm: existence E}
    Let $\bm\Sigma \sim \tau$, with $\tau$ defined in Equation \eqref{prop}. Provided $\Psi \in S^K_{++}$ and $\nu > 1$, $\mathbb{E}\left[\bm\Sigma^{-p}\right]$ exists and is finite, for $p \geq 1$. Additionally, when $p < \nu - 1$,  $\mathbb{E}\left[\bm\Sigma^p\right]$ exists and is finite.
\end{theorem}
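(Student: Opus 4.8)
The plan is to strip the sorting step out of the proposal, reduce the matrix moment to a scalar moment of a single inverse-gamma coordinate, and then control that moment uniformly using the spectrum of $\Psi$. First I would observe that the reordering in \eqref{prop} does not change the sampled matrix. Writing $\bm\Delta_0 = \mathrm{diag}(\bm\lambda_1^0,\ldots,\bm\lambda_K^0)$, the common permutation $I_1,\ldots,I_K$ applied to the diagonal of $\bm\Delta_0$ and to the columns of $\bm\Gamma_0$ cancels in the product, so that
\[
\bm\Sigma = \Tilde{\bm\Gamma}\Tilde{\bm\Delta}\Tilde{\bm\Gamma}^\top = \sum_{i=1}^K \bm\lambda_{I_i}^0\,[\bm\Gamma_0]_{:,I_i}[\bm\Gamma_0]_{:,I_i}^\top = \sum_{i=1}^K \bm\lambda_i^0\,[\bm\Gamma_0]_{:,i}[\bm\Gamma_0]_{:,i}^\top = \bm\Gamma_0\bm\Delta_0\bm\Gamma_0^\top .
\]
Thus $\bm\Sigma$ admits the spectral representation $\bm\Gamma_0\bm\Delta_0\bm\Gamma_0^\top$ with $\bm\Gamma_0$ Haar-distributed and, conditionally on $\bm\Gamma_0$, independent inverse-gamma eigenvalues; since $\bm\Sigma\in S^K_{++}$ almost surely, $\bm\Sigma^{\pm p} = \bm\Gamma_0\bm\Delta_0^{\pm p}\bm\Gamma_0^\top = \sum_i (\bm\lambda_i^0)^{\pm p}[\bm\Gamma_0]_{:,i}[\bm\Gamma_0]_{:,i}^\top$ is well defined for real $p$. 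This is the crucial simplification, as it lets me avoid the joint law of sorted eigenvalues and permuted eigenvectors entirely.

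Next I would reduce to scalar integrability entrywise. Since the expectation of a random matrix is taken componentwise, $\mathbb{E}[\bm\Sigma^{\pm p}]$ exists and is finite as soon as $\mathbb{E}\big|[\bm\Sigma^{\pm p}]_{jk}\big|<\infty$ for every $j,k$. Using that each column of $\bm\Gamma_0$ has unit $\ell_2$-norm, hence entries bounded by $1$ in absolute value, and that $(\bm\lambda_i^0)^{\pm p}>0$, I obtain the deterministic bound
\[
\big|[\bm\Sigma^{\pm p}]_{jk}\big| = \Big|\sum_{i=1}^K (\bm\lambda_i^0)^{\pm p}\,[\bm\Gamma_0]_{j,i}[\bm\Gamma_0]_{k,i}\Big| \le \sum_{i=1}^K (\bm\lambda_i^0)^{\pm p},
\]
so the problem reduces to showing $\sum_{i=1}^K \mathbb{E}\big[(\bm\lambda_i^0)^{\pm p}\big]<\infty$.

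Then I would condition on $\bm\Gamma_0$ and invoke the inverse-gamma moment formulas. Given $\bm\Gamma_0$, one has $\bm\lambda_i^0\sim\mathcal{IG}(\nu-1,\beta_i)$ with $\beta_i=\tfrac12[\bm\Gamma_0]_{:,i}^\top\Psi[\bm\Gamma_0]_{:,i}$, whence
\[
\mathbb{E}\big[(\bm\lambda_i^0)^{p}\mid\bm\Gamma_0\big]=\beta_i^{p}\,\frac{f_\Gamma(\nu-1-p)}{f_\Gamma(\nu-1)}\ \ (p<\nu-1),\qquad \mathbb{E}\big[(\bm\lambda_i^0)^{-p}\mid\bm\Gamma_0\big]=\beta_i^{-p}\,\frac{f_\Gamma(\nu-1+p)}{f_\Gamma(\nu-1)} .
\]
The point that makes these bounds uniform in $\bm\Gamma_0$ is that $\beta_i$ is squeezed by the spectrum of $\Psi$: because $[\bm\Gamma_0]_{:,i}$ is a unit vector and $\Psi\succ0$, we have $\tfrac12\lambda_{\min}(\Psi)\le\beta_i\le\tfrac12\lambda_{\max}(\Psi)$ with $0<\lambda_{\min}(\Psi)\le\lambda_{\max}(\Psi)<\infty$. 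For the negative moments, $\nu>1$ gives $\nu-1>0$, so $f_\Gamma(\nu-1+p)$ is finite and the conditional moment is bounded by $(\tfrac12\lambda_{\min}(\Psi))^{-p}f_\Gamma(\nu-1+p)/f_\Gamma(\nu-1)$; for the positive moments, the extra requirement $p<\nu-1$ makes $f_\Gamma(\nu-1-p)$ finite and bounds the conditional moment by $(\tfrac12\lambda_{\max}(\Psi))^{p}f_\Gamma(\nu-1-p)/f_\Gamma(\nu-1)$. In both cases the bound is a finite constant free of $\bm\Gamma_0$, so by the tower property and summing the $K$ terms the claimed finiteness follows (the negative case in fact holds for all $p>0$, a fortiori for $p\ge1$).

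I do not expect a serious obstacle here; the only delicate point is the first step, namely checking that the sorting leaves $\bm\Sigma$ unchanged, since that is exactly what reduces the analysis to the conditionally independent inverse-gamma coordinates. The remainder is routine inverse-gamma moment bookkeeping together with the two-sided control $\tfrac12\lambda_{\min}(\Psi)\le\beta_i\le\tfrac12\lambda_{\max}(\Psi)$ of the scale parameter.
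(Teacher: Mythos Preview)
Your proposal is correct and follows essentially the same approach as the paper: expand $\bm\Sigma^{\pm p}$ through the spectral decomposition, bound each entry by $\sum_i(\bm\lambda_i^0)^{\pm p}$ using that the orthonormal eigenvectors have entries of absolute value at most $1$, and then control the inverse-gamma moments via the tower property. Your write-up is in fact more careful than the paper's on two points: you explicitly verify that the sorting step in \eqref{prop} leaves $\bm\Sigma$ unchanged, and you justify why the outer expectation over $\bm\Gamma_0$ is finite by sandwiching $\beta_i$ between $\tfrac12\lambda_{\min}(\Psi)$ and $\tfrac12\lambda_{\max}(\Psi)$, whereas the paper simply asserts that $\mathbb{E}[\mathbb{E}[\bm\lambda_k^{\pm p}\mid\bm\Gamma]]<\infty$ without making the uniform bound explicit.
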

We show five examples of application of Theorem \ref{thm: existence E}. 
\begin{exmp}[Matrix entry]\label{ex Sigma ij}
 Let us consider the function $f(\Sigma) = \Sigma_{ij}$, which maps a matrix to one of its scalar entries. For this function to be in $L_2(S^K_{++}, \tau)$, we must verify that $\mathbb{E}[(\Sigma_{ij})^2]$ is finite. Given $\mathbb{E}[(\Sigma_{ij})^2] \leq \mathbb{E}[\sum_{j=1}^K(\Sigma_{ij})^2] = \mathbb{E}[\Sigma^2]_{ii}$, where $[\Sigma^2]_{ii}$ is the $(i,i)$ entry of matrix $\Sigma^2$, $\mathbb{E}[(\Sigma_{ij})^2]$ is finite when the moment of order $p=2$, $\mathbb{E}[\Sigma^2]$, is finite. The later holds true when $p < \nu - 1$, which simplifies to $\nu > 3$. 
\end{exmp}
\begin{exmp}[Trace]
    Consider the trace function, $f(\Sigma) = \mathrm{tr}(\Sigma)$. The trace can be written as, $\mathrm{tr}(\Sigma) = \sum_{i=1}^K \Sigma_{ii}$. The $L_2$ space is a vector space, which implies that any finite linear combination of functions in $L_2$ is also in $L_2$. As established in the previous example, each diagonal entry $\Sigma_{ii}$ is in $L_2$ under the condition $\nu > 3$. By this linearity property, their sum, the trace, is therefore also in $L_2$ under the same condition.
\end{exmp}
\begin{exmp}[Determinant]
Finally, let us examine the determinant function, $f(\Sigma) = \det(\Sigma)$. To verify its $L_2$-integrability, we must check if $\mathbb{E}[(\det(\Sigma))^2]$ is finite. The determinant is a multivariate polynomial of degree $K$ in the matrix entries $\Sigma_{ij}$. Consequently, $(\det(\Sigma))^2$ is a polynomial of degree $2K$. The expectation of such a polynomial is guaranteed to be finite if all moments of the entries $\Sigma_{ij}$ up to order $2K$ exist. According to Theorem 2.5, this condition is met if $\nu > p+1$ for $p=2K$. Thus, the determinant is an $L_2$-integrable function provided that $\nu > 2K + 1$.
\end{exmp}

\begin{exmp}(Inverse square root of determinant)
Let $f(\Sigma) = |\Sigma|^{-1/2}$. Its $L_2$-integrability requires that $\mathbb{E}[|\Sigma|^{-1}]$ is finite. By Hadamard's inequality, $|\Sigma|^{-1} = |\Sigma^{-1}| \le \prod_{i=1}^K (\Sigma^{-1})_{ii}$. The expectation of this product is finite if moments of the entries of $\Sigma^{-1}$ up to order $K$ exist, which is  guaranteed as long as $\nu > 1$. Therefore, $f$ is an $L_2$-integrable function provided that $\nu > 1$. 
\end{exmp}

\begin{exmp}(Gaussian likelihood)
Let $f(\Sigma) = c \cdot |\Sigma|^{-1/2} \exp(-\frac{1}{2}\sum_t(\mathbf{x}_t-\mathbf{u})^T\Sigma^{-1}(\mathbf{x}_t-\mathbf{u}))$, where $c$ is a constant. Its square is proportional to $|\Sigma|^{-1} \exp(-\mathrm{tr}(\Sigma^{-1}S_{\mathbf{x}}))$, where $S_{\mathbf{x}} = \sum_t(\mathbf{x}_t-\mathbf{u})(\mathbf{x}_t-\mathbf{u})^T$ is a positive semi-definite matrix. Since $\mathrm{tr}(\Sigma^{-1}S_{\mathbf{x}}) \ge 0$, the exponential term is bounded by 1. Thus, $(f(\Sigma))^2$ is dominated by a term proportional to $|\Sigma|^{-1} = \det(\Sigma^{-1})$. The integrability of $f$ is therefore guaranteed if $\mathbb{E}[\det(\Sigma^{-1})]$ is finite. As established in the previous example, this holds if $\nu > 1$.
\end{exmp}

\section{Numerical experiments}\label{sec: num}

\subsection{Evaluation of Algorithm \ref{alg: siw Psi=cI}}
As shown in Section \ref{sec: algo Psi simplified}, when $\Psi = cI_{K}, \, c > 0$, we can derive the exact sampling solution without SIR. In this section, we verify the sampling by comparing its sample estimators of moment with their theoretical values provided in \cite{berger2020Bayesian}. More specifically, we consider the estimator of $p$-moment
\begin{equation}\label{eq: sample moments algo2}
    \Hat{m}_p^N(\nu,c) = \frac{1}{N} \sum\limits_{n=1}^N\left[ \Sigma^{(n)}\right]^p,
\end{equation}
where $\Sigma^{(n)}, \, n =1,\ldots, N$ are samples from Algorithm \ref{alg: siw Psi=cI}. We focus on the three moments of $p \in \{-1, 1, 2\}$. Their corresponding true values are 
\begin{equation}
\begin{aligned}
    &m_{-1}(\nu,c) = \mathbb{E}\left[\bm\Sigma^{-1}\right] = \frac{2(\nu-1)}{c}I_K, \; \forall \nu > 1;  \\
    &m_1(\nu,c) = \mathbb{E}\left[\bm\Sigma\right] = \frac{c}{2(\nu-2)}I_K, \; \forall \nu > 2;  \\
    &m_2(\nu,c) = \mathbb{E}\left[\bm\Sigma^{2}\right] = \frac{c^2}{4(\nu-2)(\nu-3)}I_K, \; \forall \nu > 3. 
\end{aligned}
\end{equation}
We consider two values of $\nu$: $4$ and $100$, and fix $c$ $1$. $K \in \{10, 100, 1000\}$, and $N \in \{100, 2100, 3100\}$ are tested. For each combination of $(\nu, c, K, N)$, we run Algorithm \ref{alg: siw Psi=cI} and calculate mean absolute error: 
\begin{equation}
e_p = \left\|\Hat{m}_p^N(\nu,c) - m_p(\nu,c)\right\|_{l_1} = 
\frac{\sum_{i=1}^{K}\sum_{j=1}^{K}\left|[\Hat{m}_p^N(\nu,c)]_{ij} - [m_p(\nu,c)]_{ij}\right|}{K^2},
\end{equation}
for each $p = \{-1, 1, 2\}.$ The results are reported in Table \ref{tbl: e1 e2 e-1}.

\begin{table}[htbp]
  \centering
  \caption{Comparison of empirical errors for the first ($e_1$), second ($e_2$), and inverse ($e_{-1}$) moment estimators against their theoretical values:  $m_1(4,1)=0.25I_K$, $m_1(100,1)=0.0625I_K$, $m_2(4,1)=0.125I_K$, $m_2(100,1)=1.05\times10^{-4}I_K$, $m_{-1}(4,1)=6I_K$, $m_{-1}(100,1)=198I_K$, for different sample sizes $N$.}
  \label{tbl: e1 e2 e-1}
  
  \small
  \setlength{\tabcolsep}{4pt}

  \begin{tabular}{@{}p{2.8cm}l ccc ccc@{}} 
    \toprule
    & & \multicolumn{3}{c}{$\nu = 4, c = 1$} & \multicolumn{3}{c}{$\nu = 100, c = 1$} \\
    \cmidrule(lr){3-5} \cmidrule(lr){6-8}
    Metric & $K$ & 100 & 1100 & 2100 & 100 & 1100 & 2100 \\
    \midrule

    \multirow{3}{=}{\centering \large$e_1$ }
    & 10   & 0.0250 & \colmark{0.0250} & 0.0250 & $5.06\times10^{-4}$ & \colmark{$5.01\times10^{-4}$} & $5.10\times10^{-4}$ \\
    & 100  & 0.0028 & \colmark{0.0025} & 0.0025 & $5.09\times10^{-5}$ & \colmark{$5.10\times10^{-5}$} & $5.10\times10^{-5}$ \\
    & 1000 & $3.13\times10^{-4}$ & \colmark{$2.54\times10^{-4}$} & $2.52\times10^{-4}$ & $5.19\times10^{-6}$ & \colmark{$5.11\times10^{-6}$} & $5.10\times10^{-6}$ \\

    \addlinespace

    \rowe
    \multirow{3}{=}{\centering \large$e_2$ \par \scriptsize $m_2(4,1)=0.125I_K$ \par $m_2(100,1)=1.05\times10^{-4}I_K$}
    & 10   & 0.0125 & \crosscell{0.0125} & 0.0125 & $2.62\times10^{-6}$ & \crosscell{$2.63\times10^{-6}$} & $2.63\times10^{-6}$ \\
    \rowe
    & 100  & 0.0021 & \crosscell{0.0013} & 0.0013 & $2.64\times10^{-7}$ & \crosscell{$2.63\times10^{-7}$} & $2.63\times10^{-7}$ \\
    \rowe
    & 1000 & $3.00\times10^{-4}$ & \crosscell{$1.32\times10^{-4}$} & $1.27\times10^{-4}$ & $2.74\times10^{-8}$ & \crosscell{$2.64\times10^{-8}$} & $2.63\times10^{-8}$ \\

    \addlinespace

    \multirow{3}{=}{\centering \large$e_{-1}$}
    & 10   & 0.6006 & \colmark{0.5999} & 0.5999 & 19.660 & \colmark{19.786} & 19.793 \\
    & 100  & 0.0618 & \colmark{0.0602} & 0.0601 & 1.9760 & \colmark{1.9795} & 1.9797 \\
    & 1000 & 0.0068 & \colmark{0.0061} & 0.0060 & 0.2012 & \colmark{0.1983} & 0.1981 \\
    
    \bottomrule
  \end{tabular}
\end{table}

The tables show that the sample moments in Equation \eqref{eq: sample moments algo2} are very close to the theoretical values even with $N=100$ for all $K$, which validates the correctness of Algorithm \ref{alg: siw adapted}. 
Moreover, we can see that when $K$ goes up the error decreases. This is possibly because the sample moments estimate the off-diagonal entries which are zero better than the diagonal non-zero values. 

In addition, Table \ref{tbl: running time algo Psi = cI} reports running time of Algorithm \ref{alg: siw adapted} in terms of $N$ and $K$. It can be seen that the sampling is highly rapid with less than $4$ seconds for $K=100, N= 2100$. Therefore, it is able to go large dimensions as $K=1000$ which is impossible for the original sampling algorithm proposed in \cite{berger2020Bayesian}. Specifically, the new sampling method proposed in \cite{berger2020Bayesian} is only tested up to $K =100$. Because of the nested Gibbs sampling and the fact that no speed-up is proposed for the special case of $\Psi = cI_K, \, c > 0$, generating one additional sample takes $0.262$ seconds for $K=100$, which is much inferior than our method. 


\begin{table}[htbp]
  \centering
  \caption{Running time of Algorithm 3 in seconds for different parameter settings.}
  \label{tbl: running time algo Psi = cI}
  \begin{tabular}{@{}l ccc ccc@{}}
    \toprule
    & \multicolumn{3}{c}{$\nu = 4, c = 1$} & \multicolumn{3}{c}{$\nu = 100, c = 1$} \\
    \cmidrule(lr){2-4} \cmidrule(lr){5-7}
    $K$ & $N=100$ & $N=1100$ & $N=2100$ & $N=100$ & $N=1100$ & $N=2100$ \\
    \midrule
    10   & 0.005  & 0.054  & 0.091  & 0.0047 & 0.0465 & 0.0958 \\
    100  & 0.170  & 1.885  & 3.615  & 0.1654 & 2.0949 & 3.4536 \\
    1000 & $1.23 \times 10^2$ & $1.36 \times 10^3$ & $2.60 \times 10^3$ & $1.25 \times 10^2$ & $1.36 \times 10^3$ & $2.55 \times 10^3$ \\
    \bottomrule
  \end{tabular}
\end{table} 
\subsection{Evaluation of Algorithm \ref{alg: siw}}\label{sec: num algo siw}
In this section, we test Algorithm \ref{alg: siw}. The main focuses are verifying the theoretical results of convergence and evaluating the discrepancy of weights in terms of $M$ and $K$. 

We firstly present the experiments to verify the convergence of $\Hat{\mu}^\mathrm{SIR}_{M,N}(f)$. To this end, we focus on $f_{ij}(\Sigma) = \Sigma_{ij}, i,j = 1, \ldots K$. Example \ref{ex Sigma ij} shows that when $\nu > 3$, $f \in L_2(S^K_{++}, \pi)$ hence $f \in L_1(S^K_{++}, \pi)$.   Therefore, we set $\nu > 3$ as well in this section to make sure that the convergences in Theorems \ref{thm: algo1 consistence} and \ref{thm: algo1 CLT} hold for $\pi(f_{ij})$ and $\Hat{\mu}^\mathrm{SIR}_{M,N}(f_{ij})$.

Because for the general case $\Psi \neq cI_K$ there is no explicit theoretical results on $\E_{\Sigma\sim \pi}\Sigma$, to verify the convergence of $\Hat{\mu}^\mathrm{SIR}_{M,N}(f_{ij})$, we consider the difference between two realizations of $\Hat{\mu}^\mathrm{SIR}_{M,N}(f_{ij}),$ obtained by running Algorithm \ref{alg: siw} twice with the same parameter and setting values, instead of comparing $\Hat{\mu}^\mathrm{SIR}_{M,N}(f_{ij})$ with $\pi(f_{ij})$. More specifically, for a given $\siw(\nu, \Psi,1)$ and $M, N$, we consider the error metric 
\begin{equation}
e_1^{\mathrm{SIR}} = 
\frac{\sum_{i=1}^{K}\sum_{j=1}^{K}\left|\Hat{\mu}^{\mathrm{SIR},1}_{M,N}(f_{ij}) - \Hat{\mu}^{\mathrm{SIR},2}_{M,N}(f_{ij})\right|}{K^2},
\end{equation}
where $\Hat{\mu}^{\mathrm{SIR},i}_{M,N}(f_{ij}) \stackrel{iid}{\sim} \Hat{\mu}^\mathrm{SIR}_{M,N}(f_{ij}), \; i =1,2$, and $\Hat{\mu}^\mathrm{SIR}_{M,N}(f_{ij})$ is the SIR estimator given in Definition \ref{def: SIR} associated to the given $(\nu,\Psi,N,M)$. When $\Hat{\mu}^\mathrm{SIR}_{M,N}(f_{ij}) \stackrel{P}{\longrightarrow} \mu(f_{ij})$ holds, $e_1^{\mathrm{SIR}} \stackrel{P}{\longrightarrow} 0$. Moreover, when the CLT holds true, there is $\Hat{\mu}^\mathrm{SIR}_{M,N}(f_{ij}) - \mu(f_{ij}) = \mathcal{O}_p(1/\min\{\sqrt{M},\sqrt{N}\})$. It follows that $e_1^{\mathrm{SIR}} = \mathcal{O}_p(1/\min\{\sqrt{M},\sqrt{N}\})$. Therefore, we rely on the converging behaviors of curves $e_1^{\mathrm{SIR}}$ to verify the theoretical results. 

We test $K =10, 100$. Because in practice, we suggest to always set $N$ larger than $M$, in experiments, we fix $N = 5M$. Thus, the convergence rate is given by $M$. We consider the sequence of $M = 500, 1000, 1500, \ldots, 10000$. $\nu$ is set as $4$ and $20$. For each combination of $(K, \nu)$, we generate $2$ $\Psi$'s respectively of large and small discrepancies in eigenvalues. More specifically, we first draw two orthonormal matrices from $\pi_{\mathcal{O}_K}$. Then we generate the eigenvalues of small discrepancy as
\begin{equation}
    \mbox{Case }1: \{2, 1.01\} \cup \{\lambda_k+1, k = 2, ..., K-1\}, \quad \lambda_k \stackrel{iid}{\sim} \mathcal{U}(0.01,1),
\end{equation}
and the ones of large discrepancy as 
\begin{equation}
    \mbox{Case }2: \{1, 1.01\} \cup \{\lambda_k, k = 2, ..., K-1\}, \quad \lambda_k \stackrel{iid}{\sim} \mathcal{U}(0.01,1).
\end{equation}
For each combination of $(K, \nu, \Psi)$, we calculate $e_1^{\mathrm{SIR}}$ for all $M$ in the sequence in order to draw a curve. To furthermore study the stability of SIR estimator, we create $10$ such curves by running $10$ independent simulations for the same $\Psi$. 
First, we show the convergence results by the curves of $e_1^{\mathrm{SIR}}$ in Figures \ref{fig: case 1} and \ref{fig: case 2}. 

Figure \ref{fig: 10, 4, i} shows the evolutions of $e_1^{\mathrm{SIR}}$ and $\sqrt{M}e_1^{\mathrm{SIR}}$ in terms of proposal sample size $M$, respectively on the left and right, corresponding to $K=10, \nu =4$ and small eigen-discrepancy. All curves decrease towards zero. Moreover, the sample standard deviation represented by the blue shade is decreasing as well, which implies a $L_2$ convergence hence convergence in probability to zero of the distance of two SIR estimators $e_1^{\mathrm{SIR}}$. Furthermore, we would like to investigate the convergence rate. On the right subfigure of Figure \ref{fig: 10, 4, i}, it can be seen that all curves of $\sqrt{M}e_1^{\mathrm{SIR}}$ are bounded, which is consistent with the expected result that $e_1^{\mathrm{SIR}} = \mathcal{O}_p(1/\sqrt{M})$. Thus it validates numerically that $\Hat{\mu}^\mathrm{SIR}_{M,N}(f_{ij}) - \mu(f_{ij}) = \mathcal{O}_p(1/\sqrt{M})$, which is a consequence of the derived CLT. We recall that we set $N > M$, thus the convergence rate in CLT should be given by $\sqrt{M}$.

When $\nu$ increases from $4$ to $20$, similar convergence patterns can be found in Figure~\ref{fig: 10, 20, i}, except that the sample standard deviation does not decrease as $M$ goes up. Even though this is not unexpected since we did not provide $L_2$ convergence. It can still reveal the fact that the SIR estimators perform less well when $\nu$ increases, which accentuates the discrepancy in weights. This can be seen also in Table \ref{tbl:ess_algo siw}, with slightly reduced effective sample size (ESS) on Line $(10,20,\mbox{Case } 1)$. 

As $K$ increases to $100$, the SIR estimators perform better in Figures \ref{fig: 100, 4, i} and \ref{fig: 100, 20, i} with smaller sample standard deviations, as well as in Table \ref{tbl: e1} with higher ESS's. This is not counter-intuitive because we intentionally retained the distance between $\lambda_{\max}$ and $\lambda_{\min}$ at the same level as $K=10$. The fact that there are more eigenvalues within the same range of $\lambda_{\max}$ and $\lambda_{\min}$ for $K=100$ can reduce the overall weight discrepancy.

\begin{figure}[htbp!]
\centering 
\begin{subfigure}[b]{\textwidth}
\centering
\includegraphics[width=0.45\linewidth]{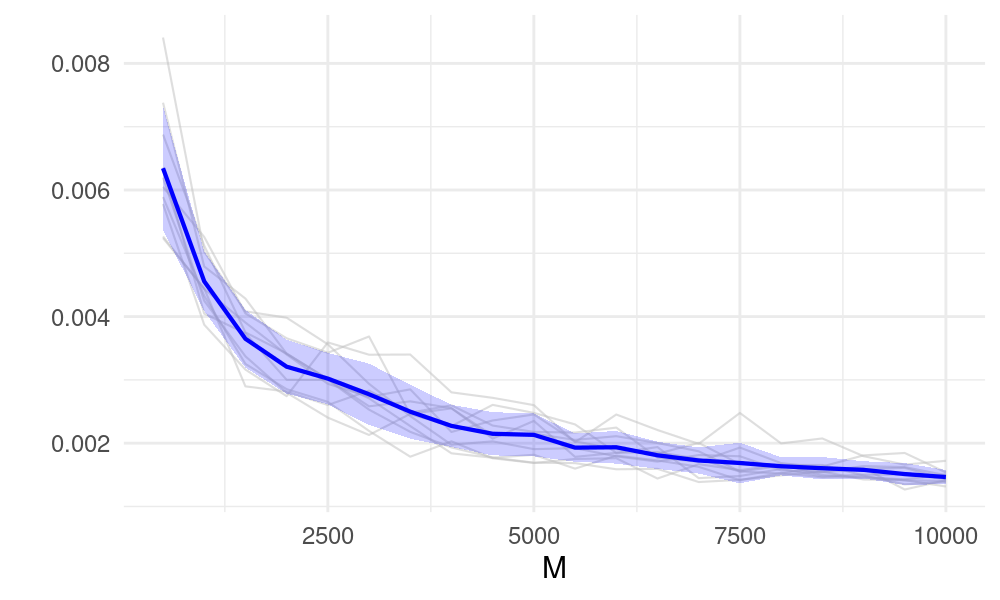} 
\includegraphics[width=0.45\linewidth]{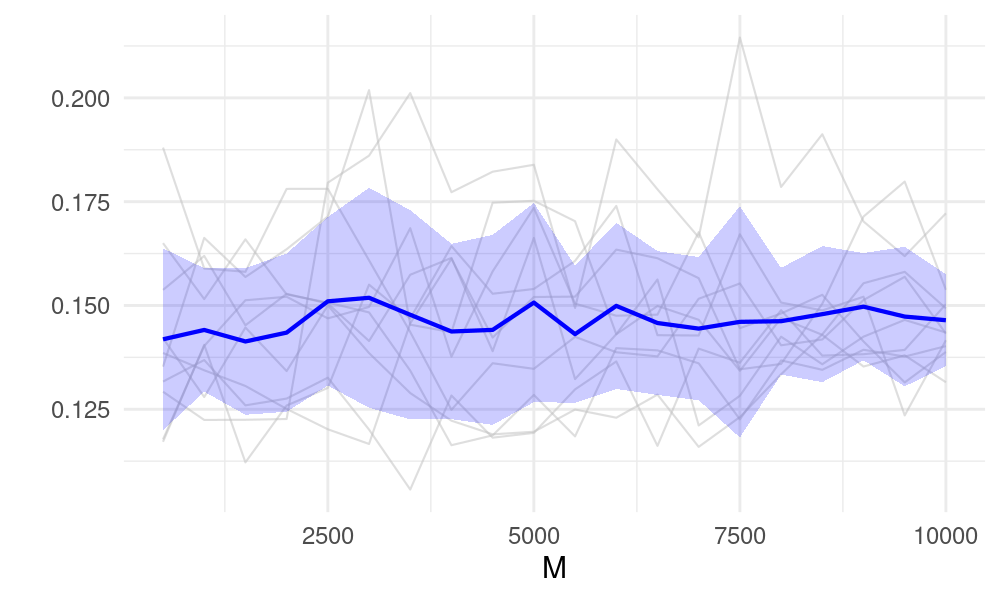} 
\subcaption{
$K=10, \nu=4$.
}
\label{fig: 10, 4, i}
\end{subfigure}
\begin{subfigure}[b]{\textwidth}
\centering
\includegraphics[width=0.45\linewidth]{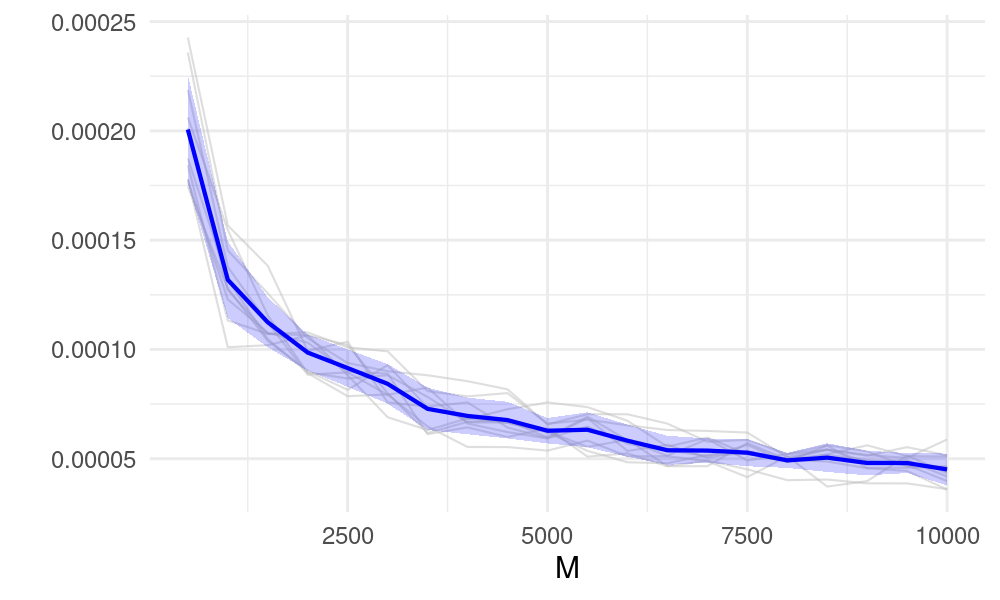} 
\includegraphics[width=0.45\linewidth]{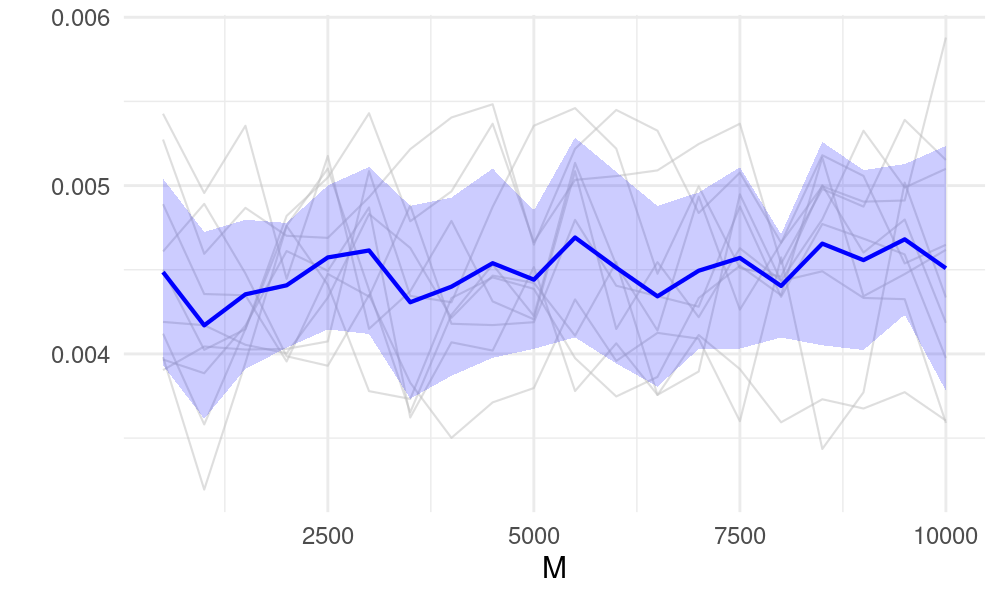} 
\subcaption{
$K=10, \nu=20$.
}
\label{fig: 10, 20, i}
\end{subfigure}
\begin{subfigure}[b]{\textwidth}
\centering
\includegraphics[width=0.45\linewidth]{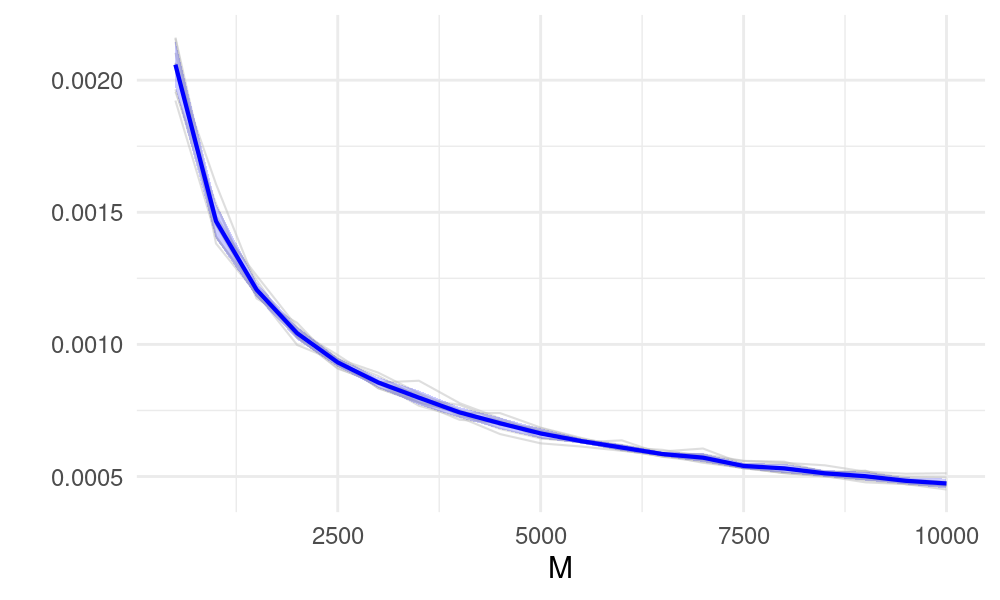} 
\includegraphics[width=0.45\linewidth]{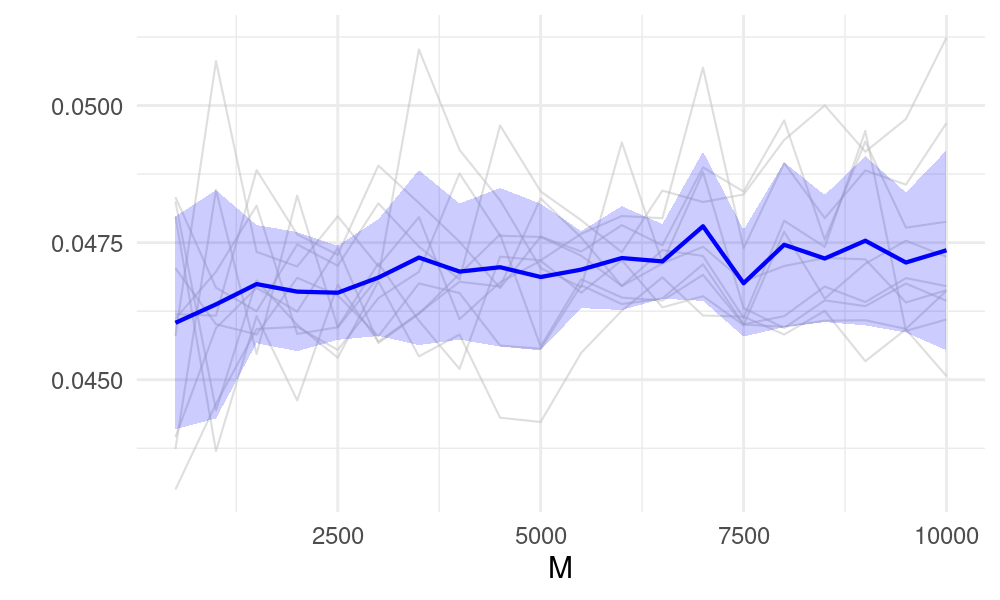} 
\subcaption{
$K=100, \nu=4$.
}
\label{fig: 100, 4, i}
\end{subfigure}
\begin{subfigure}[b]{\textwidth}
\centering
\includegraphics[width=0.45\linewidth]{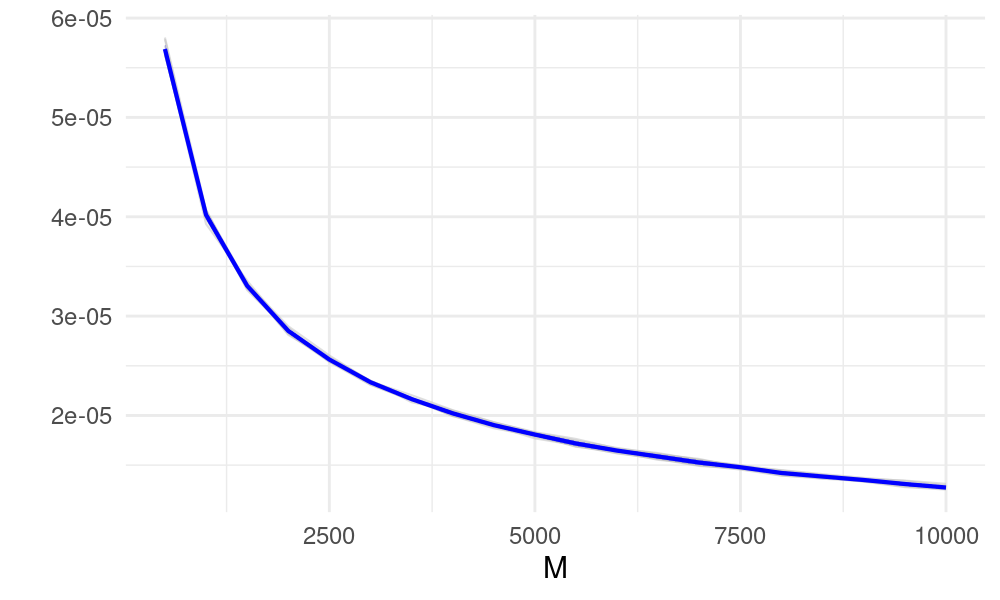} 
\includegraphics[width=0.45\linewidth]{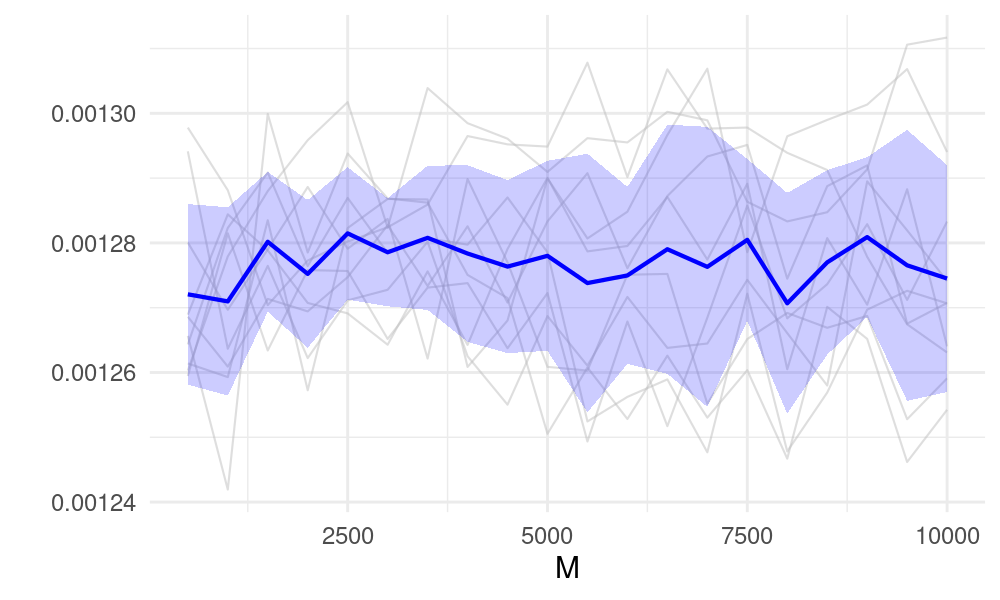} 
\subcaption{
$K=100, \nu=20$.
}
\label{fig: 100, 20, i}
\end{subfigure}
\caption{Evolutions of $e_1^{\mathrm{SIR}}$ (left) and $\sqrt{M}e_1^{\mathrm{SIR}}$ (right) in terms of $M$ for the eigen-discrepancy of \textbf{Case 1}. Gray curves represent independent simulations. Blue curves represent their means and blue shades represent their standard deviation.}
\label{fig: case 1}
\end{figure}

When eigen-discrepancy increases with pattern $2$, we can see in Table \ref{tbl: e1} that ESS decreases in all settings of $K,\nu$, more significantly with $\nu=20$. In the same table, a weight collapse is observed with $K=10,\nu=20$. In this case, a very few weights dominate, thus the resampling samples consist in only several unique samples, which cause the failures in Figure \ref{fig: 10, 20, ii}. In the next section, we will test again these same settings with the same generated $\Psi$'s using Algorithm \ref{alg: siw adapted}. Before proceeding to the improved results, we lastly report the running time of Algorithm \ref{alg: siw} in Table \ref{tbl: runing time algo Psi general}. 

\begin{figure}[htbp!]
\centering 
\begin{subfigure}[b]{\textwidth}
\centering
\includegraphics[width=0.45\linewidth]{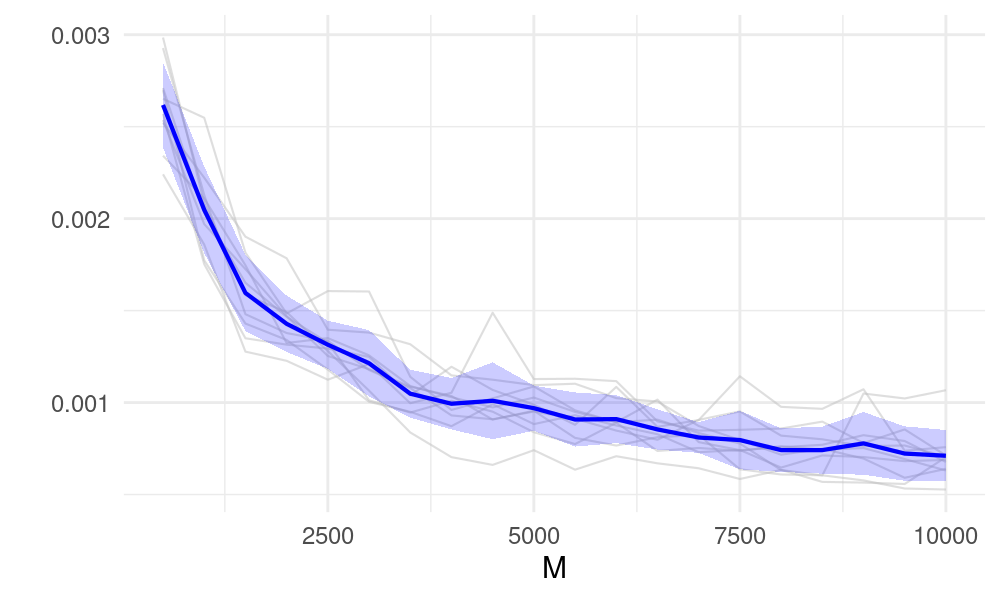} 
\includegraphics[width=0.45\linewidth]{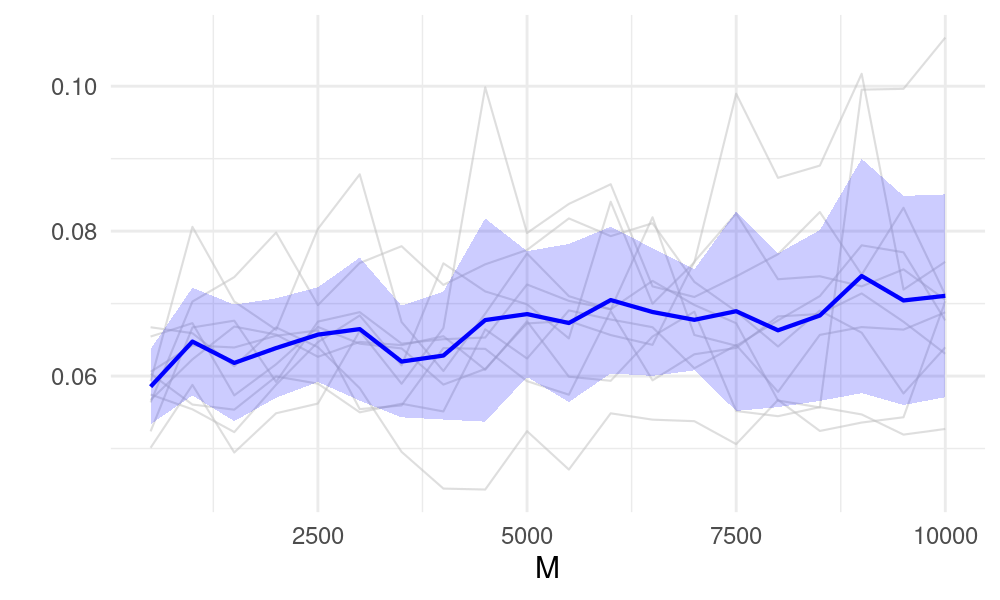} 
\subcaption{
$K=10, \nu=4$.
}
\label{fig: 10, 4, ii}
\end{subfigure}
\begin{subfigure}[b]{\textwidth}
\centering
\includegraphics[width=0.45\linewidth]{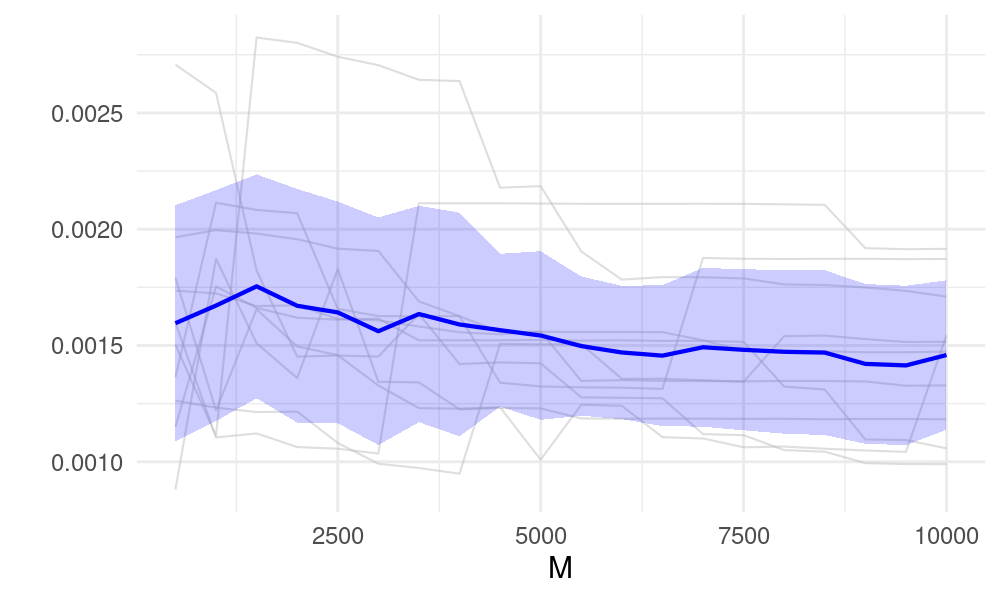} 
\includegraphics[width=0.45\linewidth]{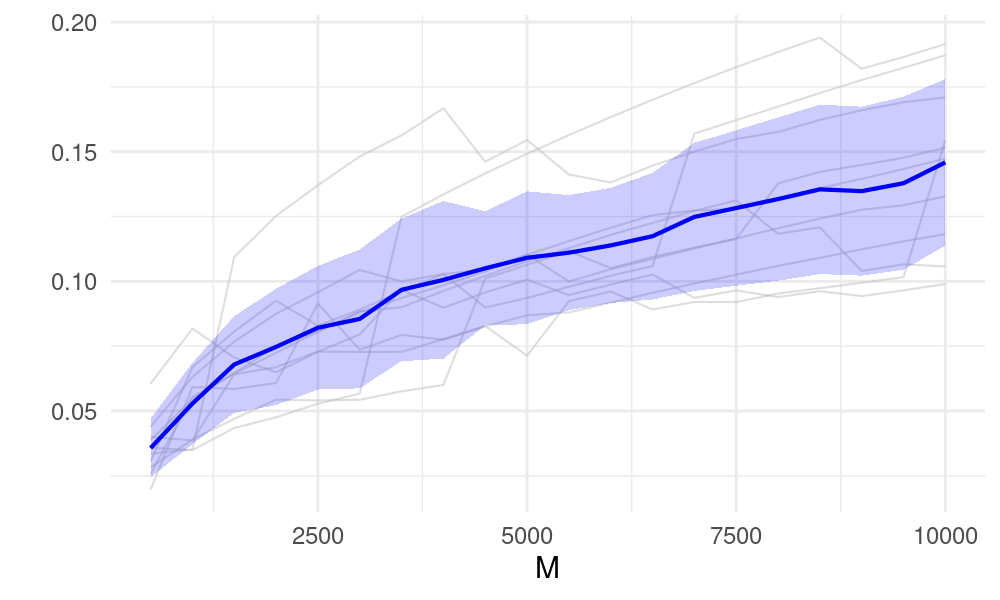} 
\subcaption{
$K=10, \nu=20$.
}
\label{fig: 10, 20, ii}
\end{subfigure}
\begin{subfigure}[b]{\textwidth}
\centering
\includegraphics[width=0.45\linewidth]{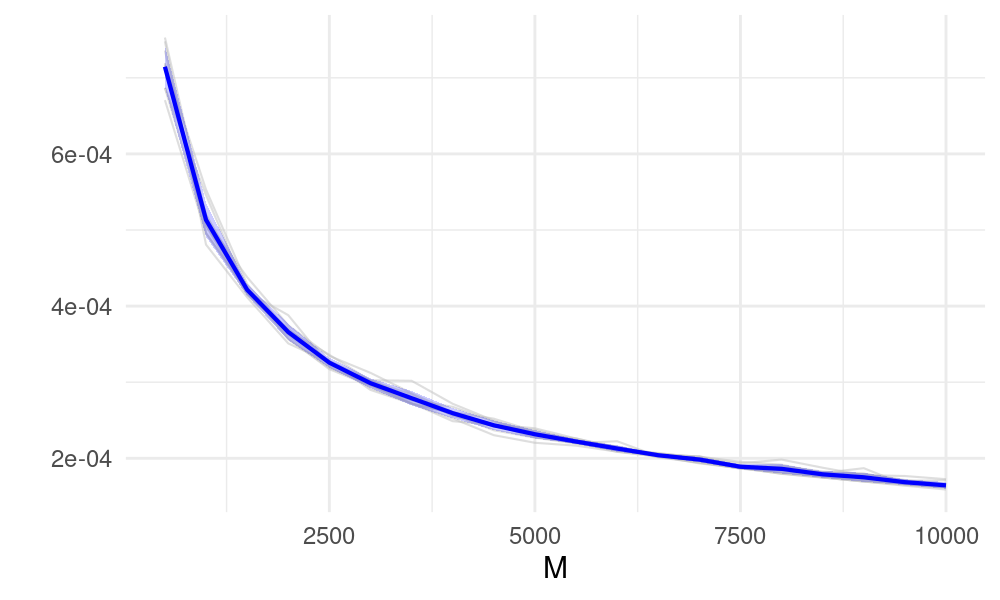} 
\includegraphics[width=0.45\linewidth]{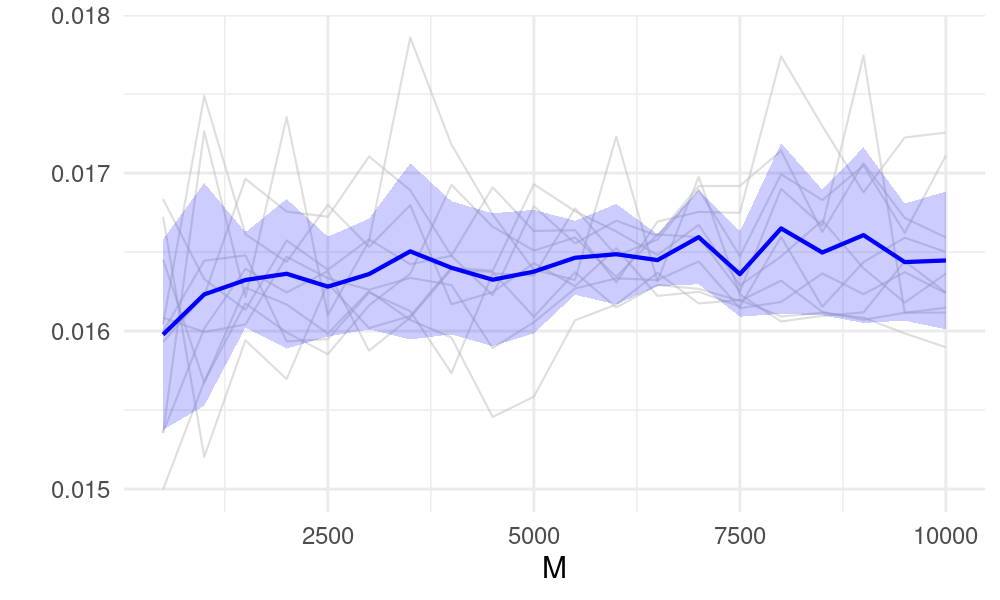} 
\subcaption{
$K=100, \nu=4$.
}
\label{fig: 100, 4, ii}
\end{subfigure}
\begin{subfigure}[b]{\textwidth}
\centering
\includegraphics[width=0.45\linewidth]{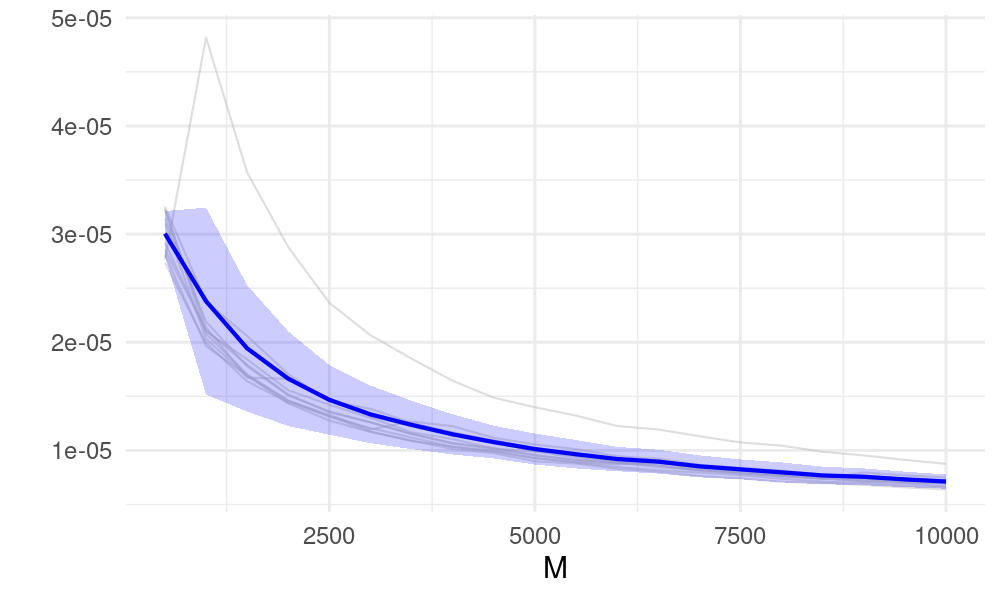} 
\includegraphics[width=0.45\linewidth]{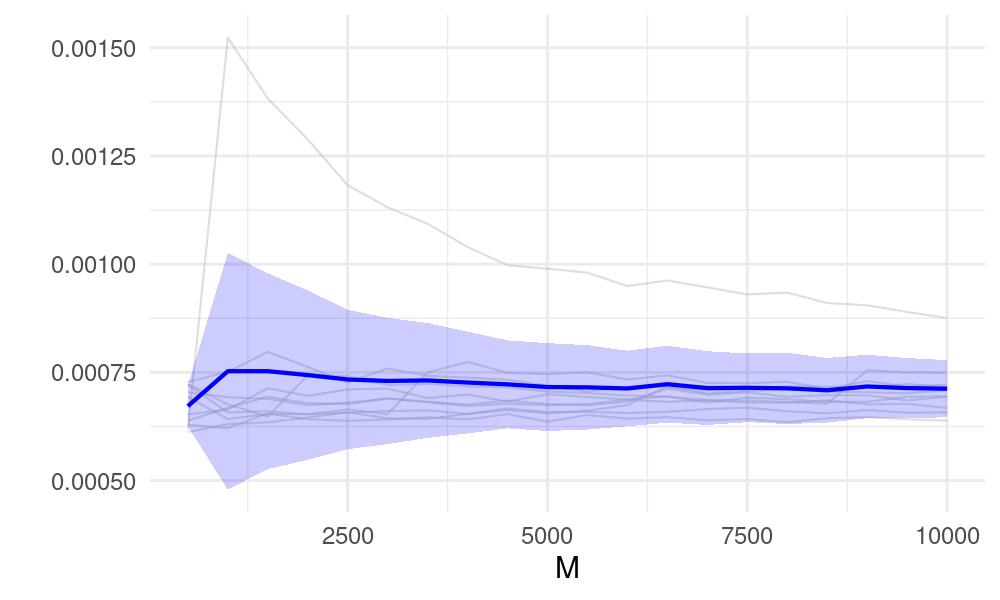} 
\subcaption{
$K=100, \nu=20$.
}
\label{fig: 100, 20, ii}
\end{subfigure}
\caption{Evolutions of $e_1^{\mathrm{SIR}}$ (left) and $\sqrt{M}e_1^{\mathrm{SIR}}$ (right) in terms of $M$ for the eigen-discrepancy of \textbf{Case 2}. }
\label{fig: case 2}
\end{figure}

\definecolor{rowhi}{RGB}{245,245,245}     
\definecolor{mixcolor}{RGB}{235,235,235}  
\definecolor{colhi}{RGB}{235,246,252}     

\colorlet{mix}{rowhi!50!colhi}  

\newcolumntype{H}{>{\columncolor{colhi}}c}  
\newcolumntype{N}{c}                        

\newcommand{\mixcell}[1]{\cellcolor{mixcolor}#1}

\begin{table}[htbp!]
  \centering
  \caption{Mean ESS (in \% of $M$) over 10 simulations of Algorithm \ref{alg: siw}.}
  \label{tbl:ess_algo siw}

  \begin{tabular}{@{}l N H N H N H@{}}
    \toprule
    \rowcolor{white}
    \diagbox[height=2.5em, width=8em]{$K,\nu,\Psi$}{$M$}
      & 500 & 2500 & 4500 & 6500 & 8500 & 10000 \\
    \midrule

    (10, 4, Case 1)   & 99.8\% & 99.8\% & 99.8\% & 99.8\% & 99.8\% & 99.8\% \\
    (100, 4, Case 1)  & 100.0\% & 100.0\% & 100.0\% & 100.0\% & 100.0\% & 100.0\% \\
    (10, 20, Case 1)  & 92.0\% & 92.0\% & 92.0\% & 91.9\% & 91.9\% & 91.9\% \\
    (100, 20, Case 1) & 99.0\% & 99.1\% & 99.1\% & 99.1\% & 99.1\% & 99.1\% \\

    \rowcolor{rowhi}
    (10, 4, Case 2)   & 58.2\% & \mixcell{54.8\%} & 56.4\% & \mixcell{57.1\%} & 55.4\% & \mixcell{54.4\%} \\
    \rowcolor{rowhi}
    (100, 4, Case 2)  & 98.2\% & \mixcell{98.3\%} & 98.3\% & \mixcell{98.3\%} & 98.3\% & \mixcell{98.3\%} \\
    \rowcolor{rowhi}
    (10, 20, Case 2)  & 0.4\%  & \mixcell{0.1\%}  & 0.0\%  & \mixcell{0.0\%}  & 0.0\%  & \mixcell{0.0\%} \\
    \rowcolor{rowhi}
    (100, 20, Case 2) & 43.8\% & \mixcell{38.5\%} & 37.6\% & \mixcell{37.8\%} & 38.6\% & \mixcell{38.6\%} \\

    \bottomrule
  \end{tabular}
\end{table}

\definecolor{rowhi}{RGB}{245,245,245}     
\definecolor{mixcolor}{RGB}{235,235,235}  
\definecolor{colhi}{RGB}{235,246,252}     

\begin{table}[htbp!]
  \centering
  \caption{Mean running time over $10$ simulations (in seconds) of Algorithm \ref{alg: siw}.}
  \label{tbl: runing time algo Psi general}

  \begin{tabular}{@{}l N H N H N H@{}}
    \toprule
    \rowcolor{white}
    \diagbox[height=2.5em, width=8em]{$K,\nu,\Psi$}{$M$}
      & 500 & 2500 & 4500 & 6500 & 8500 & 10000 \\
    \midrule

    (10, 4, Case 1)   & 0.053 & 0.208 & 0.377 & 0.528 & 0.684 & 0.800 \\
    (100, 4, Case 1)  & 1.494 & 6.512 & 11.522 & 16.540 & 21.576 & 25.361 \\
    (10, 20, Case 1)  & 0.039 & 0.199 & 0.357 & 0.511 & 0.668 & 0.789 \\
    (100, 20, Case 1) & 1.456 & 6.426 & 11.418 & 16.392 & 21.353 & 25.081 \\

    \rowcolor{rowhi}
    (10, 4, Case 2)   & 0.039 & \mixcell{0.198} & 0.342 & \mixcell{0.487} & 0.631 & \mixcell{0.739} \\
    \rowcolor{rowhi}
    (100, 4, Case 2)  & 1.463 & \mixcell{6.534} & 11.599 & \mixcell{16.654} & 21.688 & \mixcell{25.466} \\
    \rowcolor{rowhi}
    (10, 20, Case 2)  & 0.041 & \mixcell{0.186} & 0.330 & \mixcell{0.475} & 0.621 & \mixcell{0.731} \\
    \rowcolor{rowhi}
    (100, 20, Case 2) & 1.439 & \mixcell{6.413} & 11.407 & \mixcell{16.397} & 21.385 & \mixcell{25.128} \\

    \bottomrule
  \end{tabular}
\end{table}

We only record the running time of sampling from proposal for the convenience of organizing experiments. In practice, the steps of calculating the normalized weights and importance resampling take very little time. Table \ref{tbl: runing time algo Psi general} shows that the running time of Algorithm \ref{alg: siw} is still very fast even though slightly slower than Algorithm \ref{alg: siw Psi=cI}. In addition, no significant difference is observed when $\nu$ or $\Psi$ change. The main factors that impact the running time is $K$ and $M$. 
We recall that generating one sample using the nested Gibbs sampling in \cite{berger2020Bayesian} takes 0.262 seconds for $K=100$. In comparison, generating $500$ ``effective" samples, which are samples from proposal to be resampled, using our Algorithm \ref{alg: siw} for $K=100$ needs only around 1.5 seconds, thus $0.003$ seconds per proposal sample. Based on the same set of proposal samples ($M$ fixed), we can obtain a much larger set of output samples ($N \gg M$) with almost no addition time, since resampling only performs sampling from a Multinomial distribution.

\subsection{Evaluation of Algorithm \ref{alg: siw adapted}}\label{sec: eva algo adapted}
In this section, we focus on study the improvement of estimator robustness that brought by the clipping. Additionally, we investigate the impacts of different clipping sizes on the robustness and the convergence of the SIR estimators. To this end, we test $3$ clipping sizes, $M_T = M^{0.2}, M^{0.45}, M^{0.8}$. The other settings are the same as in Section \ref{sec: num algo siw}, except we only consider large eigen-discrepancy. Firstly, the new ESS's are reported in Table \ref{tab:ess_clipping}. As more weights are clipped, more increase is observed in ESS, implying more robustness in the SIR estimators, with respect to Table \ref{tbl:ess_algo siw}. However for each fixed $M_T$, as $M$ increases, the efficiency of clipping declines slightly. 
\definecolor{rowhi}{RGB}{245,245,245}   
\definecolor{rowhi2}{RGB}{245,240,240}   %
\definecolor{mixcolor}{RGB}{235,235,235}   
\definecolor{mixcolor2}{RGB}{223,220,220}   %
\definecolor{colhi}{RGB}{235,246,252}   
\colorlet{mix}{rowhi!50!colhi}          

\newcommand{\mixcelll}[1]{\cellcolor{mixcolor2}#1}

\begin{table}[htbp!]
  \centering
  \caption{Mean ESS (in $\%$ of $M$) for different clipping sizes $M_T$. The results correspond to the \textbf{large} eigen-discrepancy case (\textbf{Case 2}). }
  \label{tab:ess_clipping}

  \begin{tabular}{@{}l N H N N H N N H N@{}}
    \toprule
    \rowcolor{white}
    & \multicolumn{3}{c}{$M_T = M^{0.2}$}
    & \multicolumn{3}{c}{$M_T = M^{0.45}$}
    & \multicolumn{3}{c}{$M_T = M^{0.8}$} \\
    \cmidrule(lr){2-4} \cmidrule(lr){5-7} \cmidrule(lr){8-10}
    \rowcolor{white}
    \diagbox[height=2.5em, width=8em]{$K,\nu$}{$M$}
    & 2500 & 6500 & 10000 & 2500 & 6500 & 10000 & 2500 & 6500 & 10000 \\
    \midrule

    \rowcolor{rowhi}
    (10, 4)
      & 66.8\% & \mixcell{64.1\%} & 62.2\%
      & 78.2\% & \mixcell{75.5\%} & 73.6\%
      & 93.2\% & \mixcell{92.2\%} & 91.7\% \\

    (100, 4)
      & 98.3\% & 98.3\% & 98.3\%
      & 98.4\% & 98.3\% & 98.3\%
      & 99.0\% & 98.9\% & 98.9\% \\

    \rowcolor{rowhi2}
    (10, 20)
      & 0.2\% & \mixcelll{0.1\%} & 0.1\%
      & 3.0\% & \mixcelll{1.7\%} & 1.4\%
      & 41.1\% & \mixcelll{35.3\%} & 33.0\% \\

    (100, 20)
      & 46.0\% & 43.5\% & 43.4\%
      & 52.9\% & 50.7\% & 49.8\%
      & 77.2\% & 74.8\% & 73.6\% \\
    \bottomrule
  \end{tabular}
\end{table}

Secondly, we report the evolutions of $e_1^{\mathrm{SIR}}$. In this section, we only show the results of $K=10$ with the ones of $K=100$ in the appendices in order to avoid redundant remarks. Firstly, Figures \ref{fig: 10, 20, Case 2 MT20}, \ref{fig: 10, 20, Case 2 MT45}, and \ref{fig: 10, 20, Case 2 MT80} report the results on $K=10,\nu=20$ with large eigen-discrepancy. This parameter setting has a weight collapse. We can see, even though $M_T = M^{0.2}$ is not enough to bring back a reasonable ESS, the curves of $e_1^{\mathrm{SIR}}$ start to show already the convergence pattern, as displayed in the left subfigure of Figure \ref{fig: 10, 20, Case 2 MT20}. However the corresponding curves of $\sqrt{M}e_1^{\mathrm{SIR}}$ seem unbounded. When $M_T = M^{0.45}$, convergence and boundedness both show. However the ESS is still too low to have robust estimations. We then test $M_T = M^{0.8}$. The clipping size is larger than the maximal size required by the CLT which is $M^{0.5}$, thus there is no theoretical guarantee on the convergence rate. However it is less than $M$, thus the consistency in probability still holds. We can see in Figure \ref{fig: 10, 20, Case 2 MT80} the corresponding convergence pattern. Thanks to the relatively large number of clipped weights, the standard deviation is much smaller, implying robust estimation even with large sample size. Moreover, on the right, the boundedness is also present even without theoretical results. Thus for this tricky case of $K=10, \nu=20$ with large eigen-discrepancy, we suggest to use large clipping size greater than $M^{0.5}$. 
\begin{figure}[htbp]
\centering 
\begin{subfigure}[b]{\textwidth}
\centering
\includegraphics[width=0.45\linewidth]{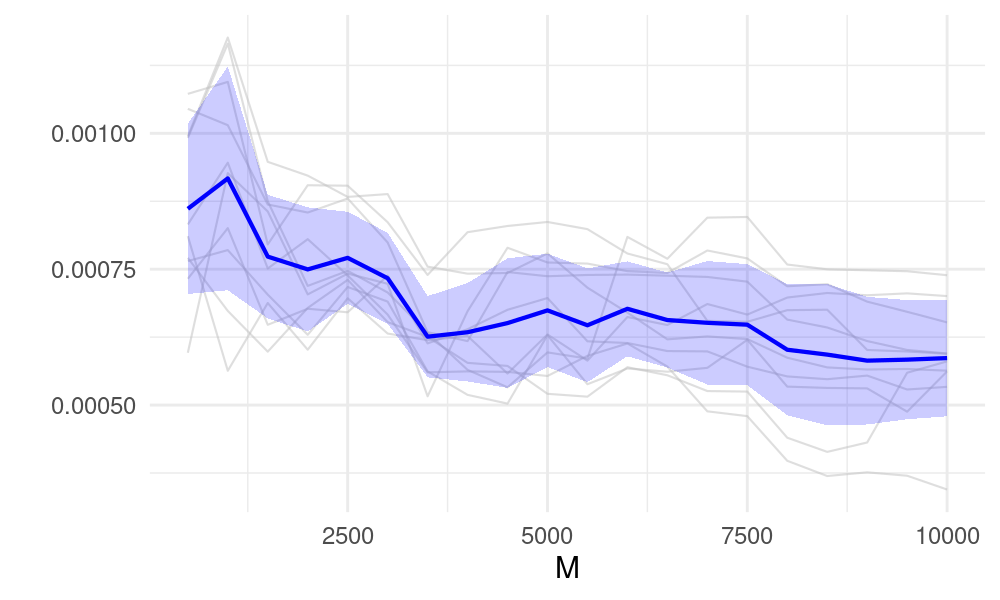} 
\includegraphics[width=0.45\linewidth]{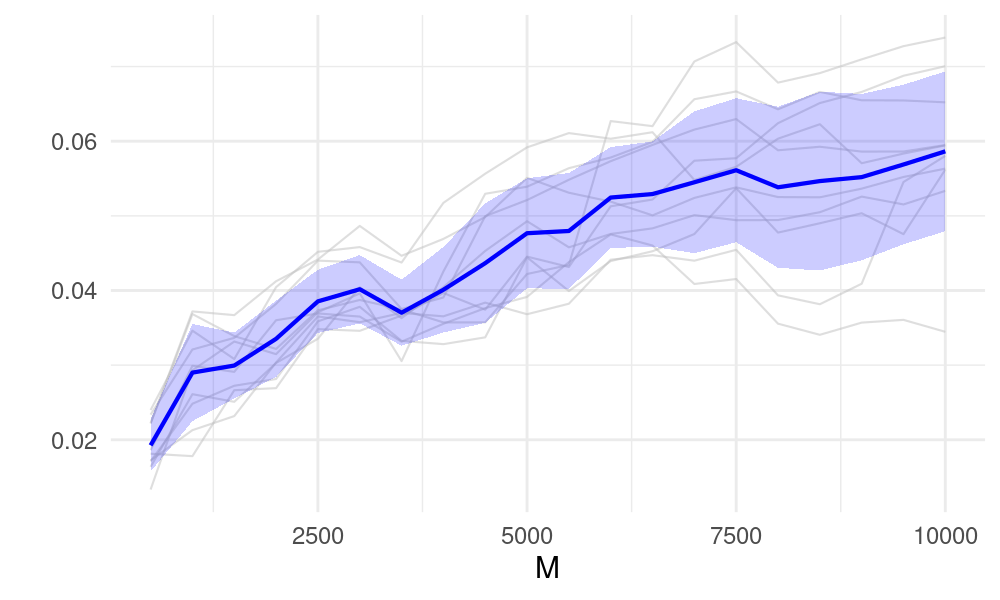} 
\subcaption{
$M_T = M^{0.2}$.
}
\label{fig: 10, 20, Case 2 MT20}
\end{subfigure}
\begin{subfigure}[b]{\textwidth}
\centering
\includegraphics[width=0.45\linewidth]{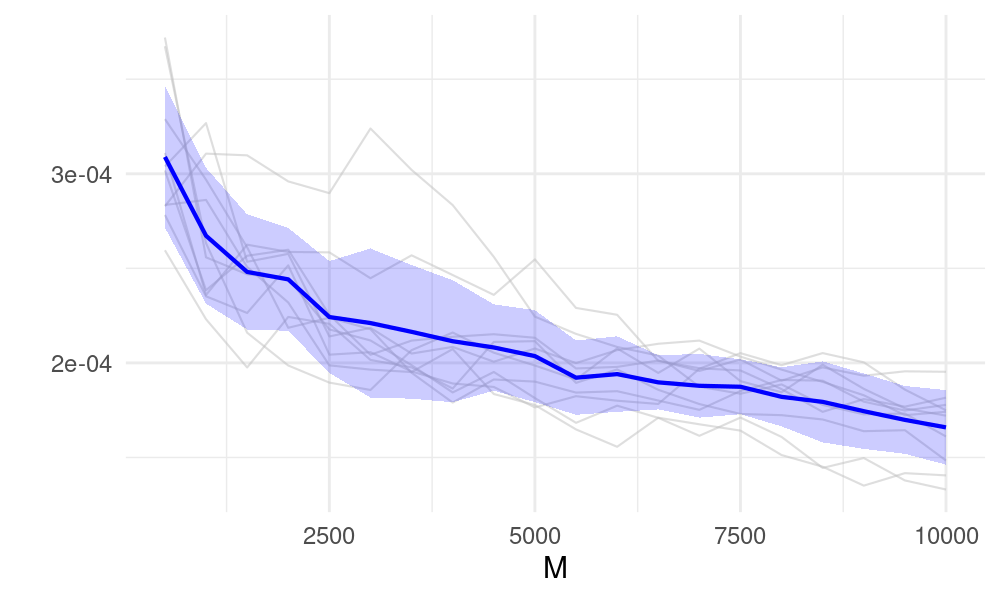} 
\includegraphics[width=0.45\linewidth]{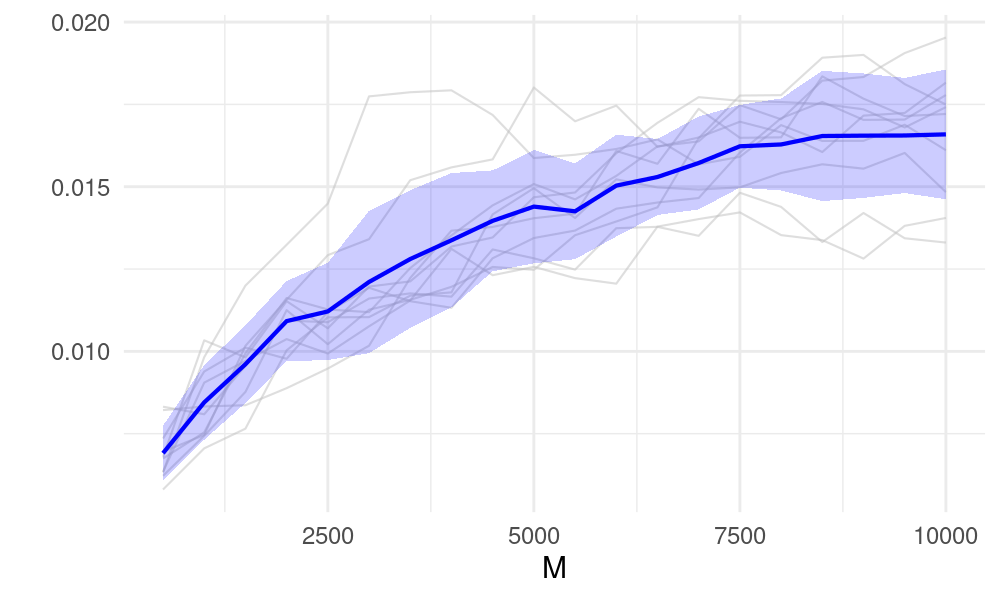}
\subcaption{
$M_T = M^{0.45}$.
}
\label{fig: 10, 20, Case 2 MT45}
\end{subfigure}
\begin{subfigure}[b]{\textwidth}
\centering
\includegraphics[width=0.45\linewidth]{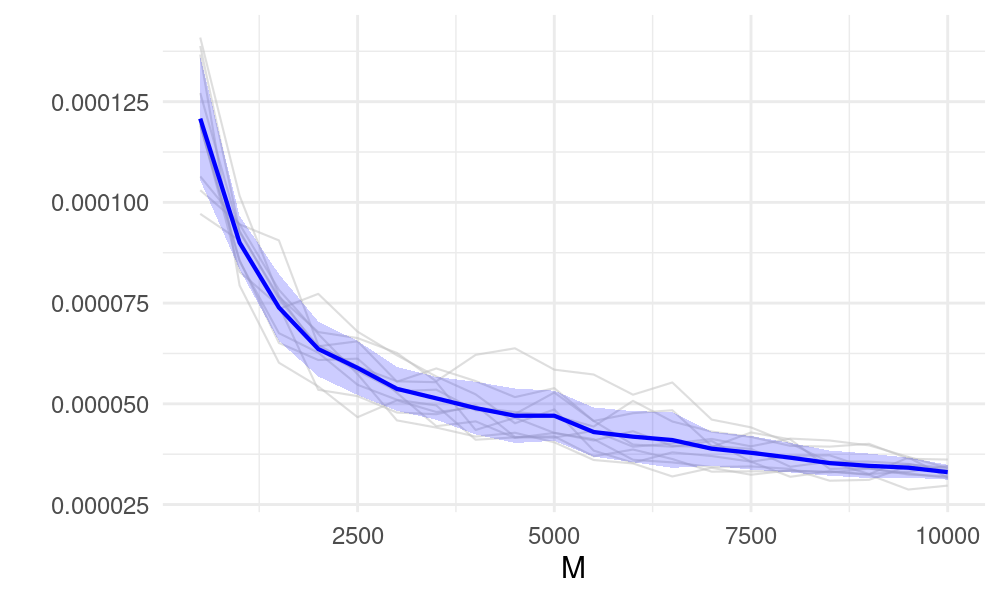} 
\includegraphics[width=0.45\linewidth]{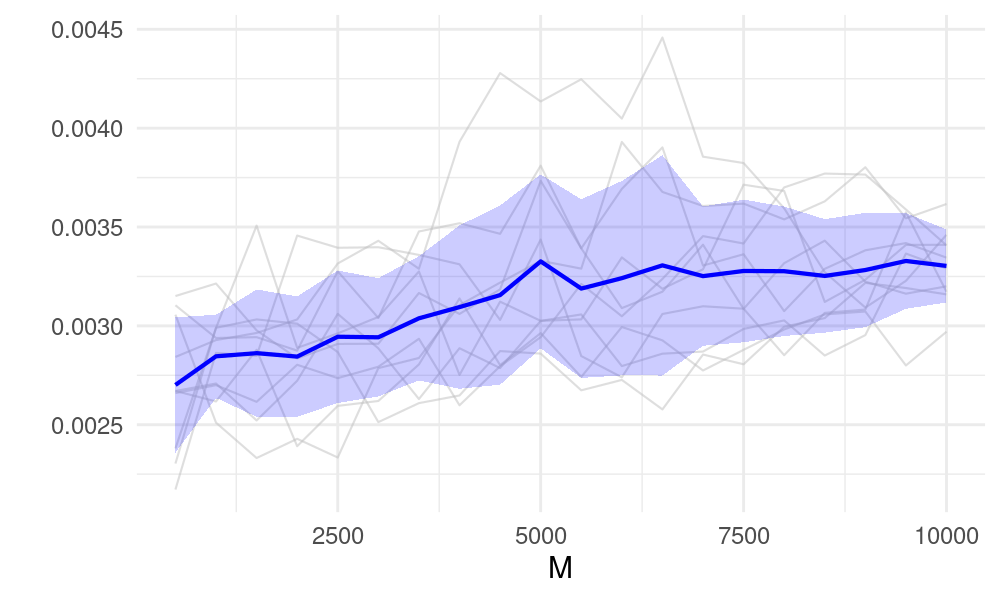} 
\subcaption{
$M_T = M^{0.8}$.
}
\label{fig: 10, 20, Case 2 MT80}
\end{subfigure}
\caption{Evolutions of $e_1^{\mathrm{SIR}}$ (left) and $\sqrt{M}e_1^{\mathrm{SIR}}$ (right) in terms of $M$ for $K=10, \nu=20, \mbox{eigen-discrepancy of Case 2}$, different clipping sizes. }
\label{fig: case 2, K=10, nu=20}
\end{figure}
\begin{figure}[htbp]
\centering 
\begin{subfigure}[b]{\textwidth}
\centering
\includegraphics[width=0.45\linewidth]{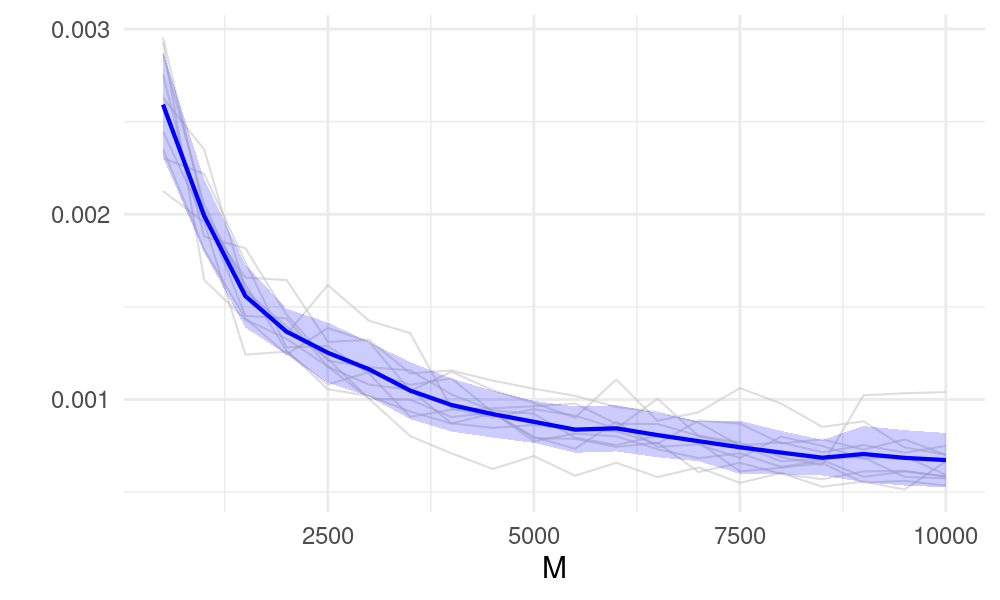} 
\includegraphics[width=0.45\linewidth]{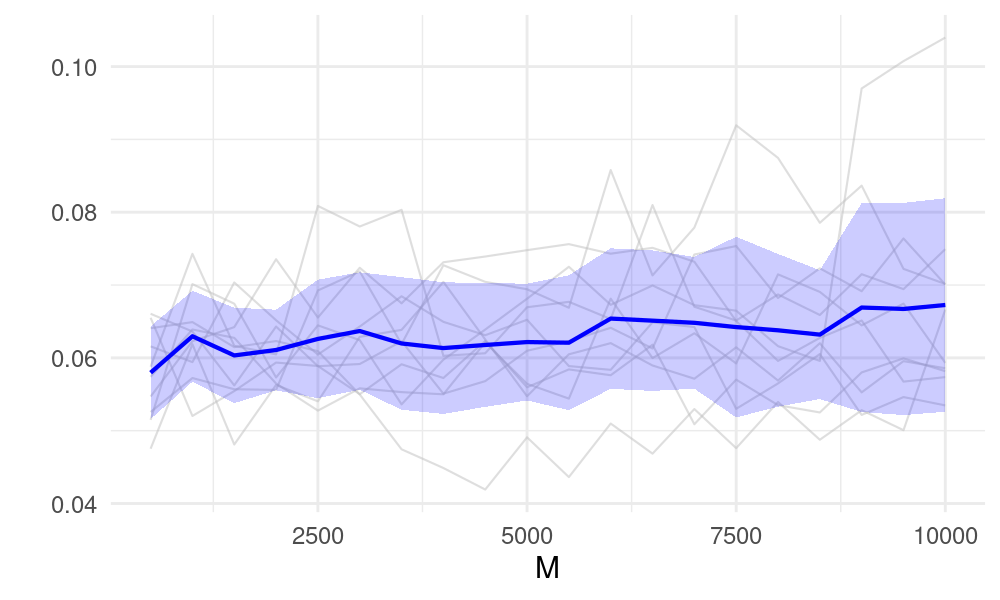} 
\subcaption{
$M_T = M^{0.2}$.
}
\label{fig: 10, 4, Case 2 MT20}
\end{subfigure}
\begin{subfigure}[b]{\textwidth}
\centering
\includegraphics[width=0.45\linewidth]{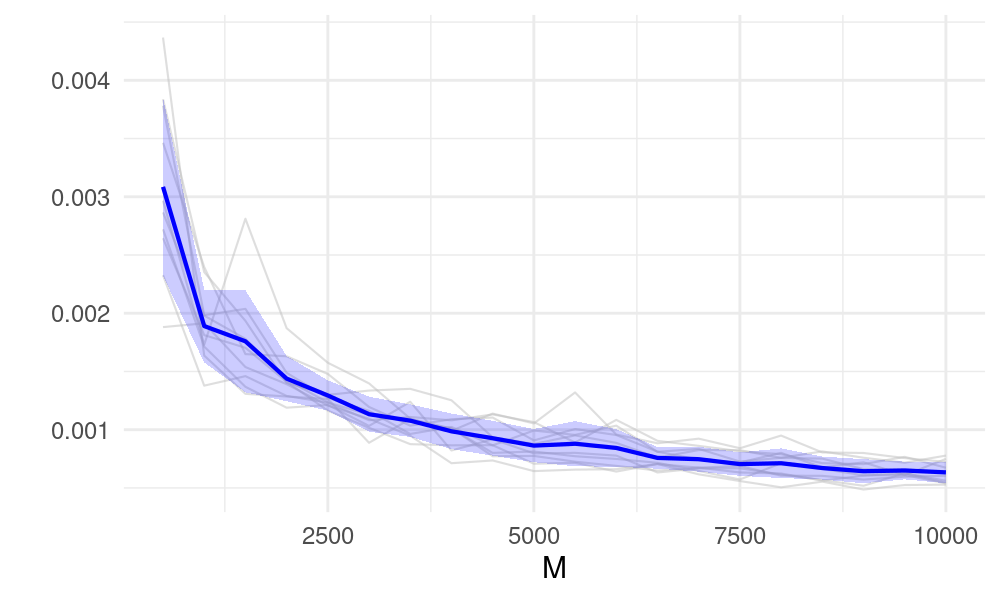} 
\includegraphics[width=0.45\linewidth]{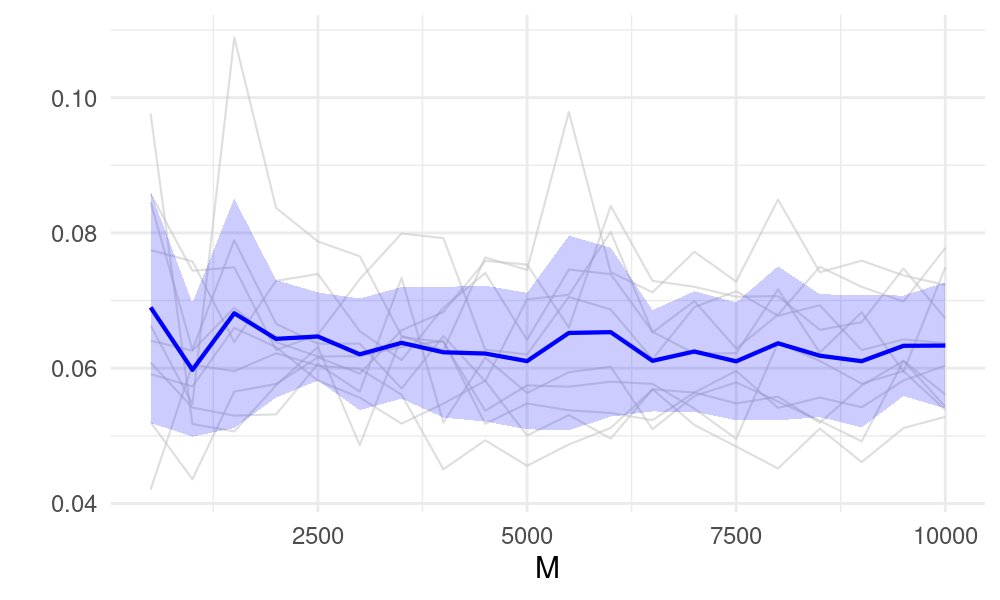}
\subcaption{
$M_T = M^{0.45}$.
}
\label{fig: 10, 4, Case 2 MT45}
\end{subfigure}
\begin{subfigure}[b]{\textwidth}
\centering
\includegraphics[width=0.45\linewidth]{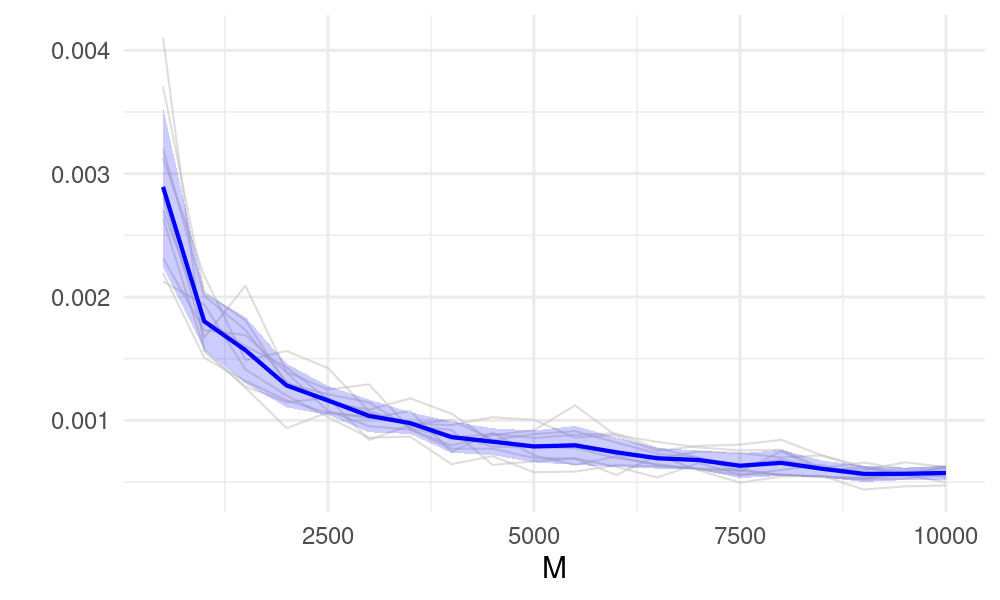} 
\includegraphics[width=0.45\linewidth]{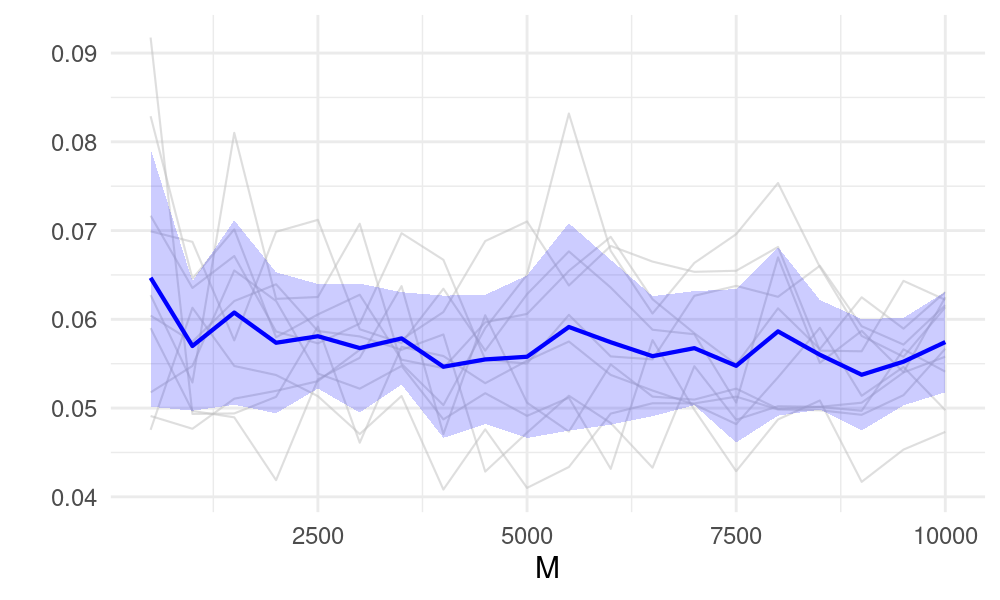} 
\subcaption{
$M_T = M^{0.8}$.
}
\label{fig: 10, 4, Case 2 MT80}
\end{subfigure}
\caption{Evolutions of $e_1^{\mathrm{SIR}}$ (left) and $\sqrt{M}e_1^{\mathrm{SIR}}$ (right) in terms of $M$ for $K=10, \nu=4, \mbox{eigen-discrepancy of Case 2}$, different clipping sizes. }
\label{fig: case 2, K=10, nu=4}
\end{figure}

The results on $K=10, \nu=4$ are reported in Figure \ref{fig: case 2, K=10, nu=4}. As previously, clipping improves the robustness of the SIR estimators, with more improvement from larger clipping sizes. In this case, $M^{0.8}$ also does not violate the convergence rate $\sqrt{M}$, and lead to smallest standard deviation. Therefore, we also suggest to use  $M^{0.8}$ to construct SIR estimation in this case. 

As for the running time of Algorithm \ref{alg: siw adapted}, we do not report in particular because it is the same as Algorithm \ref{alg: siw}, which was shown in Table \ref{tbl: runing time algo Psi general}. Note that the two algorithms share the same procedure of sampling from proposal, which determines the running time. Therefore, the proposed Algorithm \ref{alg: siw adapted} has both advantages in running time and robustness, which can always be considered used. 

\section{Conclusion}
In this work, we proposed a novel and efficient algorithm for sampling from the Shrinkage Inverse-Wishart (SIW) distribution. Unlike the existing nested Gibbs sampler, our method is based on sampling importance resampling, making it significantly faster. We provided a theoretical analysis of the convergence behavior of our algorithm which we verified across  simulations.
To further improve robustness, particularly in scenarios where importance weights exhibit large discrepancies, we introduced a clipping strategy and adapted the previous convergence results. Our empirical results validate again the theoretical findings and provide practical guidelines for selecting appropriate clipping sizes.
Overall, the proposed algorithm offers a principled and computationally attractive alternative to traditional methods for SIW sampling, with potential applications in Bayesian inference involving covariance matrix priors.

\section*{Acknowledgement}
The author acknowledges Sophie Achard and Julyan Arbel for the proofreading and fruitful discussions on the methods. The author acknowledges also Thomas Guilmeau for discussions of the convergence results of SIR estimators. 

\bibliographystyle{agsm}
\bibliography{main}

\newpage
\appendix

\section{Result on order statistics used in Algorithm \ref{alg: siw Psi=cI}}\label{thm: order stat}
\begin{theorem}[\cite{david2004order}]
Let $\bm y_i \stackrel{iid}{\sim} \pi, \, i = 1, \ldots, K,$ with some probability density function $\pi$, we have 
$$\pi(y_{(1)}, \ldots, y_{(K)}) = n!\prod_{i=1}^K\pi(y_{(i)})\bone_{\{y_{(K)} > \ldots > y_{(1)}\}}.$$
\end{theorem}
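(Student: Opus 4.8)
The plan is to derive the joint density of the order statistics from the joint density of the i.i.d.\ sample via a partition-and-change-of-variables argument, exploiting the permutation symmetry of the product density. Writing $n = K$ for the sample size, I first record that by independence the sample $(y_1,\dots,y_K)$ has joint density $\prod_{i=1}^K \pi(y_i)$ on $\R^K$ (restricted to the common support). Because each $y_i$ admits a density, the coordinates are almost surely distinct: the ``tie set'' $\bigcup_{i<j}\{y_i = y_j\}$ is a finite union of hyperplanes, hence has Lebesgue measure zero and probability zero. Thus the order statistics $y_{(1)} < \dots < y_{(K)}$ are well defined with strict inequalities almost surely, which is what justifies the strict indicator $\bone_{\{y_{(K)} > \dots > y_{(1)}\}}$ in the statement.

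Next I would partition $\R^K$ (minus the tie set) into the $K!$ open ``wedges'' $R_\sigma = \{\, y : y_{\sigma(1)} < \dots < y_{\sigma(K)}\,\}$ indexed by permutations $\sigma \in S_K$; these are pairwise disjoint and cover almost all of $\R^K$. On each wedge $R_\sigma$ the order-statistics map $(y_1,\dots,y_K) \mapsto (y_{(1)},\dots,y_{(K)})$ coincides with the coordinate permutation $u_k = y_{\sigma(k)}$, which is a linear bijection from $R_\sigma$ onto the single ordered wedge $W = \{u_1 < \dots < u_K\}$, whose Jacobian is a permutation matrix of absolute determinant $1$.

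The core step is then to apply the change-of-variables formula wedge by wedge and sum the contributions. For a fixed $(u_1,\dots,u_K) \in W$, the density of the order statistics is the sum over all $\sigma \in S_K$ of the pushforward of $\prod_{i=1}^K \pi(y_i)$ under the map restricted to $R_\sigma$. Because the product density is invariant under permuting its arguments, each of the $K!$ summands equals $\prod_{i=1}^K \pi(u_i)$; combined with the unit Jacobian this yields $K!\,\prod_{i=1}^K \pi(u_i)$ on $W$, which is exactly the claimed formula once the indicator is reinstated (the $n!$ in the statement being $K!$ with $n=K$).

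The only genuinely delicate point is bookkeeping the symmetry correctly, i.e.\ confirming that all $K!$ wedges contribute an identical term; this is precisely where the identically-distributed assumption enters, through the invariance of $\prod_i \pi(y_i)$ under permutation of its arguments, and is what produces the combinatorial factor. Everything else, namely the negligibility of the tie set and the unit-Jacobian relabelling, is routine. An equivalent route, should one prefer to avoid the explicit partition, is to compute the joint CDF $P(y_{(1)} \le t_1, \dots, y_{(K)} \le t_K)$ by summing over compatible orderings and then differentiate in $t_1,\dots,t_K$; this reaches the same expression, but the multivariate differentiation is more cumbersome, so I would favour the change-of-variables argument above.
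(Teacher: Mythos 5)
Your proof is correct and is exactly the classical argument: the paper does not prove this statement itself but recalls it from the cited reference (David and Nagaraja, 2004), whose standard derivation is the same computation you give --- partition $\R^K$ minus the null tie set into the $K!$ ordering wedges, apply the unit-Jacobian coordinate permutation on each, and use the permutation symmetry of $\prod_i \pi(y_i)$ to see that all wedges contribute identically, producing the factorial. You also correctly resolve the notational slip in the statement, where the $n!$ should read $K!$ since the sample size is $K$.
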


\section{Proofs in Section \ref{sec: algo Psi general}}

\begin{proof}(Proof of Proposition \ref{prop})
Consider any diagonal $\Tilde{\Delta}$ such that its $i$-th diagnoal entry denoted by $\Tilde{\lambda}_k$ satisfies $\Tilde{\lambda}_1 \geq \Tilde{\lambda}_2 \geq \ldots \geq \Tilde{\lambda}_K > 0$, and any $\Tilde{\Gamma} \in \mathcal{O}_K$. The density of  $(\Tilde{\lambda}, \Tilde{\Gamma})$ denoted by $\pi(\Tilde{\lambda}, \Tilde{\Gamma})$ then writes as 
\begin{equation}
\begin{aligned}
  \pi(\Tilde{\lambda}, \Tilde{\Gamma}) &= \sum_{I_1, \ldots, I_K}  \pi(\Tilde{\lambda}, \Tilde{\Gamma} | \Tilde{\lambda}_{i} = \lambda_{I_i}^0, \, \Tilde{\Gamma}_{:,i} = \left[\Gamma_0\right]_{:,I_i}, \, i =1, ..., K) \pi(\Tilde{\lambda}_i = \lambda_{I_i}^0, \Tilde{\Gamma}_{:,i} = \left[\Gamma_0\right]_{:,I_i}, i =1, ..., K ) \\
  &=\sum_{I_1, \ldots, I_K}  \pi(\Tilde{\lambda}_i = \lambda_{I_i}^0, \, \Tilde{\Gamma}_{:,i} = \left[\Gamma_0\right]_{:,I_i}, \, i =1, ..., K)\\
  &=\sum_{I_1, \ldots, I_K}  \pi(\Tilde{\lambda}_i = \lambda_{I_i}^0, i =1, ..., K | \,  \Tilde{\Gamma}_{:,i} = \left[\Gamma_0\right]_{:,I_i}, i =1, ..., K)\pi(  \Tilde{\Gamma}_{:,i} = \left[\Gamma_0\right]_{:,I_i}, i =1, ..., K) \\
  &=\sum_{I_1, \ldots, I_K}\prod_{i=1}^K \pi_{\mathcal{IG}}(\Tilde{\lambda}_i |\nu -1, \frac{1}{2}\Tilde{\Gamma}_{:,i}^\top \Psi \Tilde{\Gamma}_{:,i})\pi_{\mathcal{U}\{\Gamma \; \in \; \mathcal{O}_K\}} = K!\prod_{i=1}^K \pi_{\mathcal{IG}}(\Tilde{\lambda}_i |\nu -1, \frac{1}{2}\Tilde{\Gamma}_{:,i}^\top \Psi \Tilde{\Gamma}_{:,i})\pi_{\mathcal{U}\{\Gamma \; \in \; \mathcal{O}_K\}}. 
\end{aligned}
\end{equation}
Therefore, we have 
\begin{equation}
\begin{aligned}
w(\Delta,\Gamma) &\propto \frac{\prod_{i = 1}^K k_{\mathcal{IG}}(\lambda_i |\nu -1, \frac{1}{2}\Gamma_{i}^\top \Psi \Gamma_{i}) k_{\mathcal{U}\{ \mathcal{O}_K\}}(\Gamma)}{K!\prod_{i=1}^K \pi_{\mathcal{IG}}(\lambda_i |\nu -1, \frac{1}{2}\Gamma_{i}^\top \Psi \Gamma_{i})\pi_{\mathcal{U}\{\Gamma \; \in \; \mathcal{O}_K\}}} \\
&\propto \frac{\prod_{i = 1}^K k_{\mathcal{IG}}(\lambda_i |\nu -1, \frac{1}{2}\Gamma_{i}^\top \Psi \Gamma_{i}) }{\prod_{i=1}^K \pi_{\mathcal{IG}}(\lambda_i |\nu -1, \frac{1}{2}\Gamma_{i}^\top \Psi \Gamma_{i})}=  \frac{1}{\prod_{i=1}^K c_\mathcal{IG}(\nu -1, \frac{1}{2}\Gamma_{i}^\top \Psi \Gamma_{i})}.    
\end{aligned}
\end{equation} 
We recall that both $\siw(\nu,\Psi,1)$ and the uniform distribution over the Stiefel manifold $\pi_{\mathcal{U}\{\Gamma \; \in \; \mathcal{O}_K\}}$ are proper priors, in another words, have finite mass over their domains (see \cite[Theorem 1]{berger2020Bayesian} and \cite{chikuse2003statistics}). Thus the two $\propto$ above well represent a difference up to a constant. 
\end{proof}

A key fact to prove the convergence in our paper is that the proposed weight is bounded, as shown in \ref{prop: boundedness}.
\begin{prop}\label{prop: boundedness}
    $w(\Delta,\Gamma)$ is non-degenerate, furthermore it is bounded. For all samples of $(\Delta, \Gamma)$
\begin{equation}\label{eq: bound w}
  w(\Delta,\Gamma) = B\prod_{i=1}^K \frac{1}{c_\mathcal{IG}(\nu -1, \frac{1}{2}\Gamma_{i}^\top \Psi \Gamma_{i})} \leq B\frac{f_\Gamma(\nu-1)^K}{\left(\frac{1}{2} \lambda_{\min}(\Psi)\right)^{K(\nu-1)}}, 
\end{equation}
where $B$ is some constant, $f_\Gamma$ is gamma function, and $\lambda_{\min}(\Psi)$ is the smallest eigenvalue of $\Psi$. 
\end{prop}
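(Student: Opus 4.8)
The plan is to reduce the entire statement to a one-line estimate on the Rayleigh quotients $\Gamma_i^\top \Psi \Gamma_i$, exploiting the fact that each column $\Gamma_i$ of an orthonormal matrix has unit Euclidean norm.

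First I would substitute the closed form of $c_\mathcal{IG}$ from Equation \eqref{eq: weight} into the weight, rewriting it as
\[
w(\Delta, \Gamma) = B\, f_\Gamma(\nu-1)^K \prod_{i=1}^K \Bigl(\tfrac{1}{2}\Gamma_i^\top \Psi \Gamma_i\Bigr)^{-(\nu-1)},
\]
where $B$ is the normalizing constant. That $B$ is a genuine finite, positive constant is exactly the properness statement invoked at the end of the proof of Proposition \ref{prop}: both $\siw(\nu,\Psi,1)$ and the uniform law on $\mathcal{O}_K$ have finite mass, so the proportionality sign conceals only a finite multiplicative factor and the displayed identity is legitimate.

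Next, since $\|\Gamma_i\|_{l_2}=1$, the Rayleigh-quotient bounds for a symmetric matrix give
\[
0 < \lambda_{\min}(\Psi) \le \Gamma_i^\top \Psi \Gamma_i \le \lambda_{\max}(\Psi),
\]
the leftmost inequality being strict because $\Psi \in S^K_{++}$. As $\nu>1$ makes the exponent $\nu-1$ strictly positive, the map $t \mapsto t^{-(\nu-1)}$ is decreasing on $(0,\infty)$, so the lower Rayleigh bound yields
\[
\Bigl(\tfrac{1}{2}\Gamma_i^\top \Psi \Gamma_i\Bigr)^{-(\nu-1)} \le \Bigl(\tfrac{1}{2}\lambda_{\min}(\Psi)\Bigr)^{-(\nu-1)}.
\]
Taking the product over $i=1,\dots,K$ and multiplying back by $B\,f_\Gamma(\nu-1)^K$ produces the claimed upper bound. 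Running the identical argument with $\lambda_{\max}(\Psi)$ in place of $\lambda_{\min}(\Psi)$ yields a strictly positive lower bound, so $w$ is bounded away from both $0$ and $\infty$; this is precisely the non-degeneracy claim.

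I do not anticipate a substantive obstacle here, since the core is a single Rayleigh-quotient estimate. The only points that require care are confirming that $B$ is finite and positive — which is why I would lean on the properness results cited in Proposition \ref{prop} rather than treat the $\propto$ cavalierly — and tracking the sign of the exponent, since the direction of the bound would reverse were $\nu<1$; this is consistent with the standing assumption $\nu>1$ that already guarantees the inverse-gamma kernels are well defined.
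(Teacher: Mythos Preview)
Your proposal is correct and matches the paper's approach; in fact the paper does not provide a separate proof for this proposition, treating the bound as immediate from the explicit form of $c_{\mathcal{IG}}$ and the Rayleigh-quotient inequality $\Gamma_i^\top \Psi \Gamma_i \ge \lambda_{\min}(\Psi)$ for unit vectors $\Gamma_i$. You have simply written out the one-line estimate the paper leaves implicit, together with the properness justification for the constant $B$ and the companion lower bound via $\lambda_{\max}(\Psi)$ for non-degeneracy.
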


\begin{proof}(Proof of Theorem \ref{thm: algo1 consistence})

The result can be obtained by applying Theorem 9.2.9 in \cite{cappe2005inference}. To this end, we verify the two required assumptions. 
Firstly, for Assumption 9.1.1. Note that both target distribution $\pi$ and proposal distribution $\tau$ admit probability density function, and their division gives a non-degenerate weight function $w$ in Equation \eqref{eq: weight} with $w > 0$. Thus we have $\pi \ll \tau$.
Secondly, consider $C = L_1(S^K_{++}, \tau)$, the large number law shows that the weighted sample $\{(\Tilde{\Sigma}^{(m)},1)\}_{m=1, \ldots, M}$ is consistent for $(\tau, C)$. In addition, $w$ is bounded hence $w\in C$. Thus Assumption 9.2.6 is verified.
Applying Theorem 9.2.9, we have $\{(\Sigma^{(n)}_M,1)\}_{n=1, \ldots, N}$ is consistent for $(\pi, L_1(S^K_{++}, \pi))$ as soon as $M, N$ both tend to infinity.
\end{proof}

\begin{proof}(Proof of Theorem \ref{thm: algo1 CLT})

The result can be obtained by applying Theorem 9.2.15 in \cite{cappe2005inference}. To this end, we verify the two required assumptions. 
Firstly, we have shown in the previous proof that Assumption 9.1.1. is verified. Secondly, because $w$ is bounded, it falls into $L_2(S^K_{++}, \tau)$. Thus for all $f \in \{L_2(S^K_{++}, \pi) : fw \in L_2(S^K_{++}, \tau)\}$, we have the the two CLTs in the theorem. Furthermore, in our case, $\{L_2(S^K_{++}, \pi) : fw \in L_2(S^K_{++}, \tau)\} =   L_2(S^K_{++}, \pi)$. Note that, for all $f \in L_2(S^K_{++}, \pi)$
$$\int f^2w^2 d\tau \leq B_w\int f^2 w d\tau = B_w\int f^2 \frac{d\pi}{d\tau} d\tau = B_w\int f^2 w d\tau = B_w\int f^2 d\pi < +\infty,$$
where $B_w$ is the bound on $w$ given in Equation \eqref{eq: bound w}. This completes the proof.
\end{proof}

\section{Proofs of Theorems \ref{thm: algo3 consistence} and \ref{thm: algo3 CLT}}
The proofs of the consistency and CLT of SIR estimators with clipping builds on asymptotic results of the corresponding self-normalized importance sampling estimators with clipping, which requires furthermore the results of the corresponding unnormalized importance sampling estimators with clipping. Therefore we first prove the consistency and CLT of the corresponding unnormalized importance sampling estimators with clipping in Section \ref{sec: unnormalized IS clipping}, then proceed to the results of self-normalized importance sampling estimators with clipping in Section \ref{sec: normalized IS clipping}. The final proofs of Theorems \ref{thm: algo3 consistence} and \ref{thm: algo3 CLT} are given in Section \ref{sec: main proof}. 

We start by recalling that $(\Tilde{\Sigma}^{(m)},\, w_{m}),  \; m = 1, \ldots, M$, are the intermediate outputs of the sampling procedure from Algorithm \ref{alg: siw adapted}, $w_m^\prime$ are the clipped weights defined as
\begin{equation}
    \begin{cases}
      w_{m}^\prime =  w_{(\mtl)}, \; &\mbox{if} \;  w_{m} >  w_{(\mtl)}, \\
      w_{m}^\prime = w_{m}, \; &\mbox{otherwise},
    \end{cases}
\end{equation}
and $\mtl$ be the number of clipped weights. 
\begin{definition}
Define the unnormalized importance sampling estimators with clipping 
\begin{equation}
\hat{\mu}^{UIS}_{\ml,\mtl}(f) = \frac{B}{\ml}\sum\limits_{m=1}^\ml w_m^\prime f(\Tilde{\Sigma}^{(m)}),
\end{equation}    
where $B$ is the constant given in Proposition \ref{prop}. Because the weight $w_m$ used in practice is different from the true ratio between target density and proposal density by a constant. Thus to obtain the corresponding unnormalized importance sampling estimator which needs to be consistent to $\pi(f)$, we put back the constant. However, the constant will cancel out in the normalized importance sampling estimators with clipping and the SIR estimators, thus we can ignore them when proceeding into the further proofs in Sections \ref{sec: normalized IS clipping} and \ref{sec: main proof}.  
\end{definition}
\begin{definition}
Define the normalized importance sampling estimators with clipping 
\begin{equation}
\Hat{\mu}^{IS}_{\ml,\mtl}(f) =
\frac{\sum_{m=1}^\ml w_m^\prime f(\Tilde{\Sigma}^{(m)})}{\sum_{m=1}^\ml w_m^\prime}.
\end{equation}
\end{definition}

\subsection{Asymptotic results of the unnormalized importance sampling estimators with clipping}\label{sec: unnormalized IS clipping}
Let us consider a sequence of values of $(M,\mtl,N)$, denoted by $\{(M_l,\mtl, N_l)\}_{l \in \mathbb{N^+}}$, which satisfies $\lim_{l \rightarrow \infty} M_l = \infty$, $\lim_{l \rightarrow \infty} \mtl = \infty$ and $\lim_{l \rightarrow \infty} N_l = \infty$.
\begin{lemma}\label{lem: unnormalized IS with clipping}
For any $f \in L_1(S^K_{++}, \tau)$, and $\lim_{l \rightarrow \infty} \mtl/\ml = 0$, we have 
\begin{equation}\label{eq: lem unnormalized IS clipping}
    \hat{\mu}^{UIS}_{\ml,\mtl}(f) \stackrel{P}{\longrightarrow} \pi(f).
\end{equation}
\end{lemma}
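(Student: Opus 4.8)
The plan is to compare the clipped unnormalized estimator with its unclipped counterpart and to show that the clipping perturbation is asymptotically negligible. Define the unclipped estimator
\begin{equation}
\hat{\mu}^{UIS}_{M_l}(f) = \frac{B}{M_l}\sum_{m=1}^{M_l} w_m f(\tilde{\Sigma}^{(m)}).
\end{equation}
Since $\tilde{\Sigma}^{(m)} \stackrel{iid}{\sim} \tau$, and since putting back the constant $B$ makes $B w_m = \frac{d\pi}{d\tau}(\tilde{\Sigma}^{(m)})$ the exact Radon--Nikodym derivative, which is bounded by the constant $B_w$ of Equation \eqref{eq: bound w} (Proposition \ref{prop: boundedness}), the integrand $B w f$ lies in $L_1(\tau)$ whenever $f \in L_1(\tau)$, because $\int |Bwf|\,d\tau \le B_w \int |f|\,d\tau < \infty$. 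The strong law of large numbers then gives $\hat{\mu}^{UIS}_{M_l}(f) \to \mathbb{E}_\tau[Bwf] = \int f\,d\pi = \pi(f)$ almost surely, hence in probability. It therefore suffices to prove that the remainder $R_l := \hat{\mu}^{UIS}_{M_l,M_{T,l}}(f) - \hat{\mu}^{UIS}_{M_l}(f) \to 0$ in probability.

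Next I would isolate the contribution of clipping. Let $S_l = \{m : w_m > w_{(M_{T,l})}\}$ be the set of clipped indices; by construction $|S_l| \le M_{T,l}$, and $w_m' = w_{(M_{T,l})}$ on $S_l$, so
\begin{equation}
R_l = \frac{B}{M_l}\sum_{m \in S_l}\bigl(w_{(M_{T,l})} - w_m\bigr) f(\tilde{\Sigma}^{(m)}).
\end{equation}
Since $0 < w_{(M_{T,l})} < w_m$ on $S_l$ and $B w_m \le B_w$, we get $B|w_{(M_{T,l})} - w_m| \le B_w$, whence
\begin{equation}
|R_l| \le \frac{B_w}{M_l}\sum_{m \in S_l} |f(\tilde{\Sigma}^{(m)})| \le \frac{B_w}{M_l}\sum_{j=1}^{M_{T,l}} |f|_{(j)},
\end{equation}
where $|f|_{(1)} \ge \cdots \ge |f|_{(M_l)}$ are the decreasing order statistics of $\{|f(\tilde{\Sigma}^{(m)})|\}_{m=1}^{M_l}$; the last inequality holds because the sum over any index set of cardinality at most $M_{T,l}$ cannot exceed the sum of the $M_{T,l}$ largest values. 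This deterministic bound decouples the weights from $f$ and reduces the whole problem to controlling the top $M_{T,l}$ order statistics of an integrable sample.

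The main obstacle is precisely this last quantity, since $f$ is only assumed integrable, not bounded, so the crude bound $M_{T,l}\|f\|_\infty / M_l$ is unavailable. I would handle it by truncation: for any fixed level $t > 0$,
\begin{equation}
\frac{1}{M_l}\sum_{j=1}^{M_{T,l}} |f|_{(j)} \le \frac{t\,M_{T,l}}{M_l} + \frac{1}{M_l}\sum_{m=1}^{M_l} |f(\tilde{\Sigma}^{(m)})|\,\mathbf{1}\bigl(|f(\tilde{\Sigma}^{(m)})| > t\bigr),
\end{equation}
bounding each of the top order statistics either by $t$ or, when it exceeds $t$, by the full truncated sum. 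The first term vanishes because $M_{T,l}/M_l \to 0$, while the law of large numbers sends the second term to $\mathbb{E}_\tau[|f|\,\mathbf{1}(|f| > t)]$ in probability. Given $\epsilon > 0$, dominated convergence (valid since $f \in L_1(\tau)$) lets me first fix $t$ large enough that $\mathbb{E}_\tau[|f|\,\mathbf{1}(|f| > t)] < \epsilon$, and then take $l$ large; this shows $\frac{1}{M_l}\sum_{j=1}^{M_{T,l}} |f|_{(j)} \to 0$ in probability, hence $R_l \to 0$ and $\hat{\mu}^{UIS}_{M_l,M_{T,l}}(f) \to \pi(f)$ in probability. I expect the order-statistics truncation to be the only delicate point; everything else is bookkeeping around the boundedness of the weight already established in Proposition \ref{prop: boundedness}.
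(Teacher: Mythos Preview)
Your proof is correct. The overall decomposition matches the paper's: first show the unclipped estimator $\frac{B}{M_l}\sum_m w_m f(\tilde\Sigma^{(m)})$ converges to $\pi(f)$ (the paper invokes Theorem~9.1.2 of Capp\'e et al., you use the strong law of large numbers directly; both work because $w$ is bounded), then show the clipping remainder $R_l$ vanishes using $|w_m - w_{(M_{T,l})}| \le \text{const}$ on the clipped set.

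Where you differ is in the last step. The paper takes expectations and writes
\[
\mathbb{E}|R_l| \;\le\; \text{const}\cdot \frac{1}{M_l}\sum_{w_m > w_{(M_{T,l})}} \mathbb{E}_\tau|f(\tilde\Sigma^{(m)})| \;\le\; \text{const}\cdot \frac{M_{T,l}}{M_l}\,\mathbb{E}_\tau|f|,
\]
treating the sum over the random clipped index set $S_l$ as if it simply had $M_{T,l}$ iid summands. That step is not justified for unbounded $f$: the set $S_l$ is determined by the weights $w_m$, which depend on $\Gamma^{(m)}$ and are therefore not independent of $|f(\tilde\Sigma^{(m)})|$, so $\mathbb{E}\bigl[\sum_{m\in S_l}|f(\tilde\Sigma^{(m)})|\bigr]$ need not be bounded by $M_{T,l}\,\mathbb{E}_\tau|f|$. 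Your route avoids this entirely: dominating $\sum_{m\in S_l}|f(\tilde\Sigma^{(m)})|$ by the top $M_{T,l}$ order statistics of $|f|$ and then splitting those at a truncation level $t$ gives a clean argument that works under the stated hypothesis $f\in L_1(\tau)$ alone. So your argument is the same in structure but more careful at the one delicate point, and in fact closes a gap that the paper's proof leaves open for unbounded $f$.
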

\begin{proof}
Consider the unnormalized importance sampling estimator without the clipping
\begin{equation}
      \frac{B}{\ml}\sum\limits_{m=1}^\ml w_m f(\Tilde{\Sigma}^{(m)}).
\end{equation}
Applying Theorem 9.1.2 in \cite{cappe2005inference}, we have  
\begin{equation}\label{eq: B-lem00}
      \frac{B}{\ml}\sum\limits_{m=1}^\ml w_m f(\Tilde{\Sigma}^{(m)}) \longrightarrow \pi(f), \mbox{ almost surely.}
\end{equation}
Note that to apply the theorem, two conditions are to be satisfied. Firstly, the target integral should exist. In our case, given $f$ is bounded, we have $\pi(|f|) < +\infty.$ Secondly, we have that the target probability measure is absolutely continuous with respect to the proposal measure. Note that both measures admit probability density functions, and the proposed weight is non-degenerate. 

Consider the difference between the two estimators 
\begin{equation}
\begin{aligned}
    &B\left|\frac{1}{\ml}\sum\limits_{m=1}^\ml w_m^\prime f(\Tilde{\Sigma}^{(m)}) - \frac{1}{\ml}\sum\limits_{m=1}^\ml w_m f(\Tilde{\Sigma}^{(m)})\right| \\
    =  \; &B\left| \frac{1}{\ml}\sum\limits_{w_{m} >  w_{(\mtl)}} \left(w_m^\prime f(\Tilde{\Sigma}^{(m)}) -  w_m f(\Tilde{\Sigma}^{(m)})\right)\right|   
    \leq  \frac{B}{\ml}\sum\limits_{w_{m} >  w_{(\mtl)}} \left|w_{(\mtl)} f(\Tilde{\Sigma}^{(m)}) -  w_m f(\Tilde{\Sigma}^{(m)})\right| \\
    \leq \; &\frac{B}{\ml}\sum\limits_{w_{m} >  w_{(\mtl)}} \sqrt{w_{(\mtl)}^2 +  w_m^2 } \sqrt{2}\left|   f(\Tilde{\Sigma}^{(m)})\right| \leq \frac{B}{\ml}\sum\limits_{w_{m} >  w_{(\mtl)}} 2B_w \left|   f(\Tilde{\Sigma}^{(m)})\right| ,
\end{aligned}
\end{equation}
where $B_w$ is the bound of weight defined in Equation \eqref{eq: bound w}. Apply expectation on both sides, we have
\begin{equation}
\begin{aligned}
    &\mathbb{E}B\left|\frac{1}{\ml}\sum\limits_{m=1}^\ml w_m^\prime f(\Tilde{\Sigma}^{(m)}) - \frac{1}{\ml}\sum\limits_{m=1}^\ml w_m f(\Tilde{\Sigma}^{(m)})\right| \\
    \leq \; &\frac{B}{\ml}\sum\limits_{w_{m} >  w_{(\mtl)}} 2B_w \mathbb{E}\left|   f(\Tilde{\Sigma}^{(m)})\right| \leq B\frac{\mtl}{\ml} 2B_w\mathbb{E}\left|   f(\Tilde{\Sigma}^{(1)})\right|.
\end{aligned}
\end{equation}
Note that $\left|   f(\Tilde{\Sigma}^{(m)})\right|, m = 1, ..., \ml $ are iid, and $\mathbb{E}\left|   f(\Tilde{\Sigma}^{(1)})\right| < +\infty$ due to the $L_1$ assumption.
Therefore, given $\mtl/\ml \rightarrow 0$, we have 
\begin{equation}\label{eq: B-lem1}
    \mathbb{E}B\left|\frac{1}{\ml}\sum\limits_{m=1}^\ml w_m^\prime f(\Tilde{\Sigma}^{(m)}) - \frac{1}{\ml}\sum\limits_{m=1}^\ml w_m f(\Tilde{\Sigma}^{(m)})\right| \longrightarrow 0.
\end{equation}
Combine Equations \eqref{eq: B-lem00} and \eqref{eq: B-lem1}, we obtain the convergence in probability. 
\end{proof}

\begin{lemma}\label{lem: unnormalized IS with clipping CLT}
For any $f \in L_1(S^K_{++}, \tau)$, and $\lim_{l \rightarrow \infty} \mtl/\sqrt{\ml} = 0$, we have 
\begin{equation}
    \sqrt{\ml}\left(\hat{\mu}^{UIS}_{\ml,\mtl}(f) -  \pi(f)\right) \stackrel{\mathcal{D}}{\longrightarrow} \mathcal{N}(0, \mbox{var}_{\tau}(f w)),
\end{equation}
where $\mbox{var}_{\tau}(f w) = \mathbb{V} [w_mf(\Tilde{\Sigma}^{(m)})]$. The variance is well-defined because both weight and $f$ are bounded, and all $\Tilde{\Sigma}^{(m)}$ are iid.
\end{lemma}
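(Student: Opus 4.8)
The plan is to write the clipped estimator as the unclipped unnormalized importance sampling estimator plus a remainder term, to obtain the limiting Gaussian from the classical iid central limit theorem applied to the unclipped part, and then to show that the $\sqrt{\ml}$-scaled remainder is $o_P(1)$, so that Slutsky's theorem transfers the limiting law to $\Hat{\mu}^{UIS}_{\ml,\mtl}(f)$. First I would treat the unclipped estimator $\frac{B}{\ml}\sum_{m=1}^{\ml} w_m f(\Tilde{\Sigma}^{(m)})$. The pairs $(\Tilde{\Sigma}^{(m)}, w_m)$ are iid in $m$, hence so are the summands $B\,w_m f(\Tilde{\Sigma}^{(m)})$. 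Since $Bw = d\pi/d\tau$ is the genuine density ratio (Proposition \ref{prop}), each summand has mean $\E_\tau[B\,w f] = \pi(f)$ and variance $\mbox{var}_\tau(fw)$, which is finite because $w$ is bounded by $B_w$ in Equation \eqref{eq: bound w} and $f$ is bounded. The Lindeberg--L\'evy CLT then gives
$$\sqrt{\ml}\Bigl(\tfrac{B}{\ml}\sum_{m=1}^{\ml} w_m f(\Tilde{\Sigma}^{(m)}) - \pi(f)\Bigr) \stackrel{\mathcal{D}}{\longrightarrow} \mathcal{N}\bigl(0, \mbox{var}_\tau(fw)\bigr).$$

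Next I would control the remainder $R_l := \Hat{\mu}^{UIS}_{\ml,\mtl}(f) - \frac{B}{\ml}\sum_{m=1}^{\ml} w_m f(\Tilde{\Sigma}^{(m)})$. Reusing the pointwise estimate already established in the proof of Lemma \ref{lem: unnormalized IS with clipping}, only the clipped indices contribute, so
$$|R_l| \le \frac{2 B B_w}{\ml}\sum_{w_{m} > w_{(\mtl)}} \bigl|f(\Tilde{\Sigma}^{(m)})\bigr|.$$
Multiplying by $\sqrt{\ml}$ and taking expectations, I would use that at most $\mtl$ weights are clipped together with $\E\bigl|f(\Tilde{\Sigma}^{(1)})\bigr| < \infty$ (the $L_1$ assumption) to get $\E\bigl[\sqrt{\ml}\,|R_l|\bigr] \le 2 B B_w\,\E\bigl|f(\Tilde{\Sigma}^{(1)})\bigr|\,\mtl/\sqrt{\ml}$. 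Under the present hypothesis $\mtl/\sqrt{\ml} \to 0$ this bound vanishes, so $\sqrt{\ml}\,R_l \to 0$ in $L_1$ and hence in probability. Combining this with the CLT above via Slutsky's theorem yields the claimed convergence.

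The main obstacle I anticipate is the passage from the pointwise bound on $|R_l|$ to the expectation bound $\E\bigl[\sum_{w_m > w_{(\mtl)}}|f(\Tilde{\Sigma}^{(m)})|\bigr] \le \mtl\,\E|f(\Tilde{\Sigma}^{(1)})|$. The set of clipped indices $\{m: w_m > w_{(\mtl)}\}$ is random and, since $w_m$ and $f(\Tilde{\Sigma}^{(m)})$ are both functions of the same draw $\Tilde{\Sigma}^{(m)}$, correlated with the retained $f$-values, so one cannot naively factor the expectation into a deterministic count times $\E|f|$. The clean way around this is to invoke boundedness of $f$ (recorded in the statement) and bound the sum deterministically by $\mtl\,\|f\|_\infty$, or else to argue via exchangeability of the iid pairs. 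Either route reduces the delicate point to verifying that the clipped mass is $o(\sqrt{\ml})$ rather than merely $o(\ml)$, which is exactly the strengthening of the hypothesis relative to the consistency result in Lemma \ref{lem: unnormalized IS with clipping} that makes the $\sqrt{\ml}$-rate survive.
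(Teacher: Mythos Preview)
Your proposal is correct and mirrors the paper's proof: apply the CLT to the unclipped estimator (the paper invokes Theorem 9.1.2 of \cite{cappe2005inference}, which is your Lindeberg--L\'evy step), bound $\sqrt{\ml}$ times the clipping remainder in $L_1$ by a constant times $\mtl/\sqrt{\ml}\to 0$, and conclude by Slutsky. Your caution about taking expectations over the random clipped-index set is well placed---the paper writes that step loosely---and your fix via the boundedness of $f$ recorded in the statement is precisely what makes the bound $\sum_{w_m>w_{(\mtl)}}|f(\Tilde{\Sigma}^{(m)})|\le \mtl\|f\|_\infty$ rigorous.
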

\begin{proof}
    Consider the unnormalized importance sampling estimator without the clipping
\begin{equation}
      \frac{B}{\ml}\sum\limits_{m=1}^\ml w_m f(\Tilde{\Sigma}^{(m)}).
\end{equation}
Applying Theorem 9.1.2 in \cite{cappe2005inference}, we have  
\begin{equation}\label{eq: B-lem0}
      \sqrt{\ml}\left(\frac{B}{\ml}\sum\limits_{m=1}^\ml w_m f(\Tilde{\Sigma}^{(m)}) - \pi(f) \right)\stackrel{\mathcal{D}}{\longrightarrow} \mathcal{N}(0, \mbox{var}_{\tau}(f w)).
\end{equation}
The boundedness of $f$ and the non-degenerate weight validate the conditions of the theorem. Note that
\begin{equation}
\begin{aligned}
    \hat{\mu}^{UIS}_{\ml,\mtl}(f)- \pi(f) = &\frac{B}{\ml}\sum\limits_{m=1}^\ml w_m f(\Tilde{\Sigma}^{(m)}) - \pi(f) + B\Delta_l, 
\end{aligned}
\end{equation}
where $\Delta_l = \left(\frac{1}{\ml}\sum\limits_{m=1}^\ml w_m^\prime f(\Tilde{\Sigma}^{(m)}) - \frac{1}{\ml}\sum\limits_{m=1}^\ml w_m f(\Tilde{\Sigma}^{(m)})\right).$
Multiply by $\sqrt{\ml}$ on both sides, we have 
\begin{equation}\label{eq: idk name}
\begin{aligned}
    \sqrt{\ml}\left(\hat{\mu}^{UIS}_{\ml,\mtl}(f)- \pi(f)\right) = &\sqrt{\ml}\left(\frac{B}{\ml}\sum\limits_{m=1}^\ml w_m f(\Tilde{\Sigma}^{(m)}) - \pi(f)\right) + \sqrt{\ml}B\Delta_l.
\end{aligned}
\end{equation}
From Lemma \ref{lem: unnormalized IS with clipping}, there is 
\begin{equation}
\begin{aligned}
    &\sqrt{\ml}\mathbb{E}\left|\Delta_l\right| \leq \frac{\mtl}{\sqrt{\ml}} 2B_w\mathbb{E}\left|   f(\Tilde{\Sigma}^{(1)})\right| \longrightarrow 0, \; \mbox{ when }\frac{\mtl}{\sqrt{\ml}} \rightarrow 0.
\end{aligned}
\end{equation}
Thus $\sqrt{\ml}B\Delta_l \stackrel{P}{\longrightarrow} 0.$ Combine the convergence with Equations \eqref{eq: B-lem0} and \eqref{eq: idk name}, the desired result is obtained.
\end{proof}


\subsection{Asymptotic results of the normalized importance sampling estimators with clipping}\label{sec: normalized IS clipping}
\begin{lemma}\label{lem: normalized IS with clipping}
For any $f \in L_1(S^K_{++}, \tau)$, and $\lim_{l\rightarrow \infty}\mtl/\ml = 0$, we have 
\begin{equation}
     \Hat{\mu}^{IS}_{\ml,\mtl}(f) \stackrel{P}{\longrightarrow}  \pi(f).
\end{equation}
\end{lemma}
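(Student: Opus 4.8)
The plan is to reduce the self-normalized estimator to a ratio of two \emph{unnormalized} importance sampling estimators with clipping, for which Lemma \ref{lem: unnormalized IS with clipping} already supplies consistency. Multiplying both numerator and denominator of $\Hat{\mu}^{IS}_{\ml,\mtl}(f)$ by the common factor $B/\ml$, with $B$ the constant of Proposition \ref{prop: boundedness}, leaves the quotient unchanged and yields
\begin{equation}
\Hat{\mu}^{IS}_{\ml,\mtl}(f)
= \frac{\frac{B}{\ml}\sum_{m=1}^{\ml} w_m^\prime f(\Tilde{\Sigma}^{(m)})}{\frac{B}{\ml}\sum_{m=1}^{\ml} w_m^\prime}
= \frac{\hat{\mu}^{UIS}_{\ml,\mtl}(f)}{\hat{\mu}^{UIS}_{\ml,\mtl}(1)},
\end{equation}
where $\hat{\mu}^{UIS}_{\ml,\mtl}(1)$ denotes the unnormalized estimator evaluated at the constant function $f \equiv 1$.

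First I would treat the numerator: by hypothesis $f \in L_1(S^K_{++}, \tau)$ and $\mtl/\ml \to 0$, so Lemma \ref{lem: unnormalized IS with clipping} gives $\hat{\mu}^{UIS}_{\ml,\mtl}(f) \stackrel{P}{\longrightarrow} \pi(f)$. Next I would treat the denominator by applying the \emph{same} lemma to the constant function $1$, which trivially lies in $L_1(S^K_{++}, \tau)$ since it is bounded and $\tau$ is a probability measure; this gives $\hat{\mu}^{UIS}_{\ml,\mtl}(1) \stackrel{P}{\longrightarrow} \pi(1) = 1$. The value of the limit is exactly the total mass of the target, which equals $1$ because $B w = d\pi/d\tau$ and hence $\E_\tau[B w] = \pi(1) = 1$.

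Finally, with the numerator converging in probability to $\pi(f)$ and the denominator converging in probability to the nonzero constant $1$, the continuous mapping theorem applied to $(a,b) \mapsto a/b$ (continuous at $(\pi(f),1)$), equivalently Slutsky's theorem, gives
\begin{equation}
\Hat{\mu}^{IS}_{\ml,\mtl}(f) \stackrel{P}{\longrightarrow} \frac{\pi(f)}{1} = \pi(f),
\end{equation}
as claimed. I do not expect a genuine obstacle here: the argument is the standard ratio-consistency reduction for self-normalized importance sampling, and the only point demanding care is verifying that the denominator has a nonzero probability limit, which is secured by $\pi(1)=1$. The assumption $\mtl/\ml \to 0$ enters only through its role in Lemma \ref{lem: unnormalized IS with clipping}, which is invoked once for $f$ and once for the constant $1$.
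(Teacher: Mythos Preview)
Your proof is correct and follows essentially the same approach as the paper: write the self-normalized estimator as the ratio of two unnormalized clipped estimators, apply Lemma~\ref{lem: unnormalized IS with clipping} once to $f$ and once to the constant $1$, and conclude via Slutsky's theorem. The paper's own proof is slightly more terse but structurally identical.
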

\begin{proof}
Applying Lemma \ref{lem: unnormalized IS with clipping} with $f \equiv 1$, we have
\begin{equation}\label{eq: lem normalized IS clipping}
    \frac{B}{\ml}\sum\limits_{m=1}^\ml w_m^\prime \stackrel{P}{\longrightarrow} 1.
\end{equation}
Noticing that 
\begin{equation}
     \frac{\sum_{m=1}^\ml w_m^\prime f(\Tilde{\Sigma}^{(m)})}{\sum_{m=1}^\ml w_m^\prime} = \frac{\frac{B}{\ml}\sum_{m=1}^\ml w_m^\prime f(\Tilde{\Sigma}^{(m)})}{\frac{B}{\ml}\sum_{m=1}^\ml w_m^\prime}.
\end{equation}
Therefore, by using Slutsky's theorem on Equations \eqref{eq: lem normalized IS clipping} and \eqref{eq: lem unnormalized IS clipping}, we have the desired result.
\end{proof}
\begin{lemma}\label{lem: normalized IS with clipping CLT}
For any $f \in L_2(S^K_{++}, \tau)$, and $ \lim_{l \rightarrow \infty}\mtl/\sqrt{\ml} = 0$, we have 
\begin{equation}
    \sqrt{\ml}\left(\Hat{\mu}^{IS}_{\ml,\mtl}(f) -  \pi(f)\right) \stackrel{\mathcal{D}}{\longrightarrow} \mathcal{N}(0, \mbox{var}_{\tau}(f w - \pi(f)w)),
\end{equation}
where $\mbox{var}_{\tau}(f w - \pi(f)) = \mathbb{V} [w_1f(\Tilde{\Sigma}^{(1)}) - \pi(f)w_1]^2$. The variance is well-defined because $f \in L_2(S^K_{++}, \tau)$ and $w$ is bounded, in addition, all $\Tilde{\Sigma}^{(m)}$ are iid.
\end{lemma}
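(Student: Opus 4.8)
The plan is to reduce this self-normalized central limit theorem to the unnormalized CLT of Lemma~\ref{lem: unnormalized IS with clipping CLT} via a ratio decomposition followed by Slutsky's theorem, exactly mirroring the structure of the proof of Lemma~\ref{lem: normalized IS with clipping}. The first observation is that the constant $B$ cancels in the self-normalized estimator, so that
\[
\Hat{\mu}^{IS}_{\ml,\mtl}(f) = \frac{\hat{\mu}^{UIS}_{\ml,\mtl}(f)}{\hat{\mu}^{UIS}_{\ml,\mtl}(1)}, \qquad \hat{\mu}^{UIS}_{\ml,\mtl}(1) = \frac{B}{\ml}\sum_{m=1}^{\ml} w_m^\prime .
\]

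The crux is to center the numerator. Writing $g := f - \pi(f)$ and using the linearity of $\hat{\mu}^{UIS}_{\ml,\mtl}$ in its integrand,
\[
\Hat{\mu}^{IS}_{\ml,\mtl}(f) - \pi(f) = \frac{\hat{\mu}^{UIS}_{\ml,\mtl}(f) - \pi(f)\,\hat{\mu}^{UIS}_{\ml,\mtl}(1)}{\hat{\mu}^{UIS}_{\ml,\mtl}(1)} = \frac{\hat{\mu}^{UIS}_{\ml,\mtl}(g)}{\hat{\mu}^{UIS}_{\ml,\mtl}(1)} .
\]
Since $\tau$ and $\pi$ are probability measures, $\pi(g) = \pi(f) - \pi(f)\pi(1) = 0$, so $g$ is centered under the target, which is precisely what makes the numerator amenable to the unnormalized CLT.

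Next I would apply Lemma~\ref{lem: unnormalized IS with clipping CLT} to $g$. Its hypothesis is met because $f \in L_2(S^K_{++},\tau)$ and the constant $\pi(f)$ lies in $L_2(\tau)$ (as $\tau$ is finite), whence $g \in L_2(\tau)$; combined with the boundedness of $w$ from Proposition~\ref{prop: boundedness}, this gives $gw \in L_2(\tau)$, so $\mbox{var}_\tau(gw)$ is finite. One first checks that $\pi(f)$ itself is finite, e.g.\ $\mathbb{E}_\tau[w|f|] \le B_w\,\mathbb{E}_\tau[|f|] < \infty$. Under $\mtl/\sqrt{\ml}\to 0$, the lemma then yields
\[
\sqrt{\ml}\,\hat{\mu}^{UIS}_{\ml,\mtl}(g) = \sqrt{\ml}\bigl(\hat{\mu}^{UIS}_{\ml,\mtl}(g) - \pi(g)\bigr) \stackrel{\mathcal{D}}{\longrightarrow} \mathcal{N}\bigl(0, \mbox{var}_\tau(gw)\bigr),
\]
and $\mbox{var}_\tau(gw) = \mbox{var}_\tau\bigl(fw - \pi(f)w\bigr)$ is exactly the claimed asymptotic variance.

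Finally I would handle the denominator with Lemma~\ref{lem: unnormalized IS with clipping} applied to $f\equiv 1$ (requiring only $\mtl/\ml\to0$, which is implied by $\mtl/\sqrt{\ml}\to0$), giving $\hat{\mu}^{UIS}_{\ml,\mtl}(1) \stackrel{P}{\longrightarrow} \pi(1) = 1$. Multiplying the centered identity by $\sqrt{\ml}$ and invoking Slutsky's theorem on the ratio --- a numerator converging in distribution to the stated Gaussian over a denominator converging in probability to $1$ --- delivers the conclusion. The argument is essentially mechanical once the decomposition is in place; the only steps requiring genuine care, rather than a true obstacle, are verifying $g \in L_2(\tau)$ and confirming that the variance of the unnormalized CLT, specialized to the centered $g$, coincides with $\mbox{var}_\tau(fw - \pi(f)w)$.
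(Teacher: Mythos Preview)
Your proposal is correct and follows essentially the same route as the paper: write the self-normalized difference as the ratio of $\hat{\mu}^{UIS}_{\ml,\mtl}(f-\pi(f))$ over $\hat{\mu}^{UIS}_{\ml,\mtl}(1)$, apply Lemma~\ref{lem: unnormalized IS with clipping CLT} to the centered function $g=f-\pi(f)$ (using $\pi(g)=0$) for the numerator and Lemma~\ref{lem: unnormalized IS with clipping} with $f\equiv1$ for the denominator, then conclude by Slutsky. Your verification of the integrability conditions ($\pi(f)<\infty$ and $g\in L_2(\tau)$ via boundedness of $w$) is, if anything, slightly more explicit than the paper's version.
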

\begin{proof}
    Note that 
\begin{equation}
     \sqrt{\ml}\left(\frac{\sum_{m=1}^\ml w_m^\prime f(\Tilde{\Sigma}^{(m)})}{\sum_{m=1}^\ml w_m^\prime} - \pi(f)\right) = \frac{\frac{B}{\sqrt{\ml}}\sum_{m=1}^\ml w_m^\prime \left(f(\Tilde{\Sigma}^{(m)})-\pi(f)\right)}{\frac{B}{\ml}\sum_{m=1}^\ml w_m^\prime}.
\end{equation}
For the denominator, applying Lemma \ref{lem: unnormalized IS with clipping} with $f \equiv 1$, we have
    $\frac{B}{\ml}\sum\limits_{m=1}^\ml w_m^\prime  \stackrel{P}{\longrightarrow} 1$.
Note that for probability measure, we have $L_2 \subset L_1$. 
For the numerator, applying Lemma \ref{lem: unnormalized IS with clipping CLT} with $\Tilde{f} = f - \pi(f)$, we have, given $\mtl/\sqrt{\ml} \rightarrow 0$,
\begin{equation}
    \sqrt{\ml}\left(\frac{B}{\ml}\sum_{m=1}^\ml w_m^\prime \Tilde{f}(\Tilde{\Sigma}^{(m)})-\pi(\Tilde{f})\right) \stackrel{\mathcal{D}}{\longrightarrow} \mathcal{N}(0,\mbox{var}_{\tau}(\Tilde{f}w)).
\end{equation}
Note that $$\pi(\Tilde{f}) = \int \Tilde{f} d \pi = \int f - \pi(f) d \pi  = \int f d \pi - \pi(f) =0.$$ 
We have 
\begin{equation}
    \frac{B}{\sqrt{\ml}}\sum_{m=1}^\ml w_m^\prime \left(f(\Tilde{\Sigma}^{(m)}) - \pi(f)\right) \stackrel{\mathcal{D}}{\longrightarrow} \mathcal{N}(0,\mbox{var}_{\tau}(\Tilde{f}w)).
\end{equation}
Therefore, by using Slutsky's theorem, we have the desired CLT.
\end{proof}

\subsection{Asymptotic results of the sampling importance resampling estimators with clipping}\label{sec: main proof}
In this section, we provide proofs of the main theorems for Algorithm \ref{alg: siw adapted}.
\begin{proof}(Proof of Theorem \ref{thm: algo3 consistence})

We will apply Theorem 9.2.8 in \cite{cappe2005inference} by taking $(X, \mathcal{X})$, $\mu$, $\mathcal{F}_l$, and $\{\xi^{l,n}\}_{1\leq n \leq N_l}$ respectively as $\pi$, $(S^K_{++},\mathcal{B})$, $\sigma\{(\Tilde{\Sigma}^{(1)}, \ldots, \Tilde{\Sigma}^{(M_l)})\}$ and $\{\Sigma_{M_l, M_{T,l}}^{(n)}\}_{1\leq n \leq N_l}$. To this end, we need to verify its assumptions. Firstly, by the principle of resampling we have, given $\mathcal{F}_l$, $\{\Sigma_{M_l,M_{T,l}}^{(n)}\}_{1\leq n \leq N_l}$ are conditionally independent. Secondly, we need to show for any $f \in L_1(S^K_{++}, \tau)$ and $C \geq 0$, it holds 
\begin{equation}\label{eq: a messy eq}
    \frac{1}{N_l} \sum\limits_{n=1}^{N_l} \mathbb{E} \left[|f|\left(\Sigma_{M_l,M_{T,l}}^{(n)}\right)\bone_{\left\{|f|\left(\Sigma_{M_l,M_{T,l}}^{(n)}\right) \geq C\right\}} \bigg| \mathcal{F}_l\right] \stackrel{P}{\longrightarrow} \pi(|f| \bone_{\left\{|f| \geq C\right\}}).
\end{equation}
We first show when without clipping the resampling samples from the same proposal samples $(\Tilde{\Sigma}^{(1)}, \ldots, \Tilde{\Sigma}^{(M_l)})$ satisfy Equation \eqref{eq: a messy eq}. The proof of Theorem \ref{thm: algo1 consistence} shows that Assumption 9.1.1 and Assumption 9.2.6 are verified with $C = L_1(S^K_{++},\tau)$ for our proposal samples $\Tilde{\Sigma}^{(m)}, m = 1, \ldots, M_l$. Thus applying Theorem 9.2.7, we have for any $f \in L_1(S^K_{++},\ms)$
\begin{equation}\label{eq: a messy eq1}
\begin{aligned}
    &\frac{1}{N_l} \sum\limits_{n=1}^{N_l} \mathbb{E} \left[|f|\left(\Sigma_{M_l}^{(n)}\right)\bone_{\left\{|f|\left(\Sigma_{M_l}^{(n)}\right) \geq C\right\}} \bigg| \mathcal{F}_l\right] =
     \mathbb{E} \left[|f|\left(\Sigma_{M_l}^{(n)}\right)\bone_{\left\{|f|\left(\Sigma_{M_l}^{(n)}\right) \geq C\right\}} \bigg| \mathcal{F}_l\right] \\
    & = \frac{\sum_{m=1}^{M_l} w_m |f|\left(\Tilde{\Sigma}^{(m)}\right)\bone_{\left\{|f|\left(\Tilde{\Sigma}^{(m)}\right) \geq C\right\}}}{\sum_{m=1}^\ml w_m}  \stackrel{P}{\longrightarrow} \pi(|f| \bone_{\left\{|f| \geq C\right\}}).    
\end{aligned}
\end{equation}
On the other hand, for any $f \in L_1(S^K_{++}, \tau)$, 
$$\int |f| d\pi = \int |f| \frac{d\pi}{d\tau} d\tau =\int |f| w d\tau \leq B_w\int |f| d\tau < +\infty,$$
where $B_w$ is the bound on $w$ given in Equation \eqref{eq: bound w}. Thus in our case, $L_1(S^K_{++}, \tau) \subseteq L_1(S^K_{++},\ms)$, therefore Equation \eqref{eq: a messy eq1} holds for any $f \in L_1(S^K_{++}, \tau)$. Now consider the difference between the two left hand sides of Equations \eqref{eq: a messy eq} and \eqref{eq: a messy eq1} for any $f \in L_1(S^K_{++}, \tau)$:
\begin{equation}
\begin{aligned}
   &\left|\frac{1}{N_l} \sum\limits_{n=1}^{N_l} \mathbb{E} \left[|f|\left(\Sigma_{M_l,M_{T,l}}^{(n)}\right)\bone_{\left\{|f|\left(\Sigma_{M_l,M_{T,l}}^{(n)}\right) \geq C\right\}} \bigg| \mathcal{F}_l\right] -  \frac{1}{N_l} \sum\limits_{n=1}^{N_l} \mathbb{E} \left[|f|\left(\Sigma_{M_l}^{(n)}\right)\bone_{\left\{|f|\left(\Sigma_{M_l}^{(n)}\right) \geq C\right\}} \bigg| \mathcal{F}_l\right] \right|\\
   &= \left|\mathbb{E} \left[|f|\left(\Sigma_{M_l,M_{T,l}}^{(n)}\right)\bone_{\left\{|f|\left(\Sigma_{M_l,M_{T,l}}^{(n)}\right) \geq C\right\}} \bigg| \mathcal{F}_l\right] -  \mathbb{E} \left[|f|\left(\Sigma_{M_l}^{(n)}\right)\bone_{\left\{|f|\left(\Sigma_{M_l}^{(n)}\right) \geq C\right\}} \bigg| \mathcal{F}_l\right] \right|\\
   &= \left|\frac{\sum_{m=1}^\ml w_m^\prime |f|\left(\Tilde{\Sigma}^{(m)}\right)\bone_{\left\{|f|\left(\Tilde{\Sigma}^{(m)}\right) \geq C\right\}}}{\sum_{m=1}^\ml w_m^\prime} - \frac{\sum_{m=1}^{M_l} w_m |f|\left(\Tilde{\Sigma}^{(m)}\right)\bone_{\left\{|f|\left(\Tilde{\Sigma}^{(m)}\right) \geq C\right\}}}{\sum_{m=1}^\ml w_m} \right|.
\end{aligned}
\end{equation}
We bound the difference above with $I_l^1 + I_l^2$, where 
\begin{equation}
I_l^1  = \left|\frac{\sum_{m=1}^\ml w_m^\prime |f|\left(\Tilde{\Sigma}^{(m)}\right)\bone_{\left\{|f|\left(\Tilde{\Sigma}^{(m)}\right) \geq C\right\}}}{\sum_{m=1}^\ml w_m^\prime} - \frac{\sum_{m=1}^{M_l} w_m |f|\left(\Tilde{\Sigma}^{(m)}\right)\bone_{\left\{|f|\left(\Tilde{\Sigma}^{(m)}\right) \geq C\right\}}}{\sum_{m=1}^\ml w_m^\prime} \right|,
\end{equation}
and 
\begin{equation}
I_l^2  = \left|\frac{\sum_{m=1}^\ml w_m|f|\left(\Tilde{\Sigma}^{(m)}\right)\bone_{\left\{|f|\left(\Tilde{\Sigma}^{(m)}\right) \geq C\right\}}}{\sum_{m=1}^\ml w_m^\prime} - \frac{\sum_{m=1}^{M_l} w_m |f|\left(\Tilde{\Sigma}^{(m)}\right)\bone_{\left\{|f|\left(\Tilde{\Sigma}^{(m)}\right) \geq C\right\}}}{\sum_{m=1}^\ml w_m} \right|.
\end{equation}
Firstly, 
\begin{equation}
\begin{aligned}
I_l^1  &= \left|\frac{\sum_{m=1}^\ml w_m^\prime |f|\left(\Tilde{\Sigma}^{(m)}\right)\bone_{\left\{|f|\left(\Tilde{\Sigma}^{(m)}\right) \geq C\right\}} - \sum_{m=1}^{M_l} w_m |f|\left(\Tilde{\Sigma}^{(m)}\right)\bone_{\left\{|f|\left(\Tilde{\Sigma}^{(m)}\right) \geq C\right\}}}{\sum_{m=1}^\ml w_m^\prime} \right|\\
& = \left|\frac{\sum_{w_{m} >  w_{(\mtl)}} (w_m^\prime - w_m) |f|\left(\Tilde{\Sigma}^{(m)}\right)\bone_{\left\{|f|\left(\Tilde{\Sigma}^{(m)}\right) \geq C\right\}}}{\sum_{m=1}^\ml w_m^\prime} \right| \\
& \leq \frac{\sum_{w_{m} >  w_{(\mtl)}} |w_m^\prime - w_m |\,|f|\left(\Tilde{\Sigma}^{(m)}\right)\bone_{\left\{|f|\left(\Tilde{\Sigma}^{(m)}\right) \geq C\right\}}}{\sum_{m=1}^\ml w_m^\prime} \\&\leq 2\frac{B_w}{B} \frac{\sum_{w_{m} >  w_{(\mtl)}} |f|\left(\Tilde{\Sigma}^{(m)}\right)}{\sum_{m=1}^\ml w_m^\prime} = 2\frac{B_w}{B} \frac{\sum_{w_{m} >  w_{(\mtl)}} |f|\left(\Tilde{\Sigma}^{(m)}\right)}{\ml\frac{1}{\ml}\sum_{m=1}^\ml w_m^\prime}.
\end{aligned}
\end{equation}
On one hand, given $\lim_{l \rightarrow \infty} \frac{\mtl}{\ml} =0$, $\mtl \mathbb{E} |f|\left(\Tilde{\Sigma}^{(m)}\right)/\ml \longrightarrow 0.$ Therefore, 
$$\frac{\sum_{w_{m} >  w_{(\mtl)}} |f|\left(\Tilde{\Sigma}^{(m)}\right)}{\ml} \stackrel{P}{\longrightarrow} 0.$$
On the other hand, from Equation \eqref{eq: lem normalized IS clipping}, we have 
$ \frac{B}{\ml}\sum_{m=1}^\ml w_m^\prime \stackrel{P}{\longrightarrow} 1.$
Therefore, given $\lim_{l \rightarrow \infty} \frac{\mtl}{\ml} =0$, $I^1_l \stackrel{P}{\longrightarrow} 0$. 

Secondly,
\begin{equation}
\begin{aligned}
I_l^2  &= \left|\left(\frac{1}{\sum_{m=1}^\ml w_m^\prime} - \frac{1}{\sum_{m=1}^\ml w_m} \right) \sum_{m=1}^\ml w_m|f|\left(\Tilde{\Sigma}^{(m)}\right)\bone_{\left\{|f|\left(\Tilde{\Sigma}^{(m)}\right) \geq C\right\}} \right|\\
&=\left|\frac{\sum_{w_{m} >  w_{(\mtl)}} (w_{m} - w_{m}^\prime)}{\sum_{m=1}^\ml w_m^\prime \sum_{m=1}^\ml w_m} \sum_{m=1}^\ml w_m|f|\left(\Tilde{\Sigma}^{(m)}\right)\bone_{\left\{|f|\left(\Tilde{\Sigma}^{(m)}\right) \geq C\right\}} \right|\\
&= \left|\frac{\sum_{w_{m} >  w_{(\mtl)}} (w_{m} - w_{m}^\prime)}{\ml \frac{1}{\ml}\sum_{m=1}^\ml w_m^\prime } \frac{\sum_{m=1}^\ml w_m|f|\left(\Tilde{\Sigma}^{(m)}\right)\bone_{\left\{|f|\left(\Tilde{\Sigma}^{(m)}\right) \geq C\right\}}}{\sum_{m=1}^\ml w_m} \right|\\
&\leq \frac{\mtl 2B_w}{\ml \frac{B}{\ml}\sum_{m=1}^\ml w_m^\prime }  \frac{\sum_{m=1}^\ml w_m|f|\left(\Tilde{\Sigma}^{(m)}\right)\bone_{\left\{|f|\left(\Tilde{\Sigma}^{(m)}\right) \geq C\right\}}}{\sum_{m=1}^\ml w_m} 
\end{aligned}
\end{equation}
Given $\lim_{l \rightarrow \infty} \frac{\mtl}{\ml} =0$, the upper bound above converges to $0$ in probability. Therefore,  $ I^2_l \stackrel{P}{\rightarrow} 0$. Thus for any $f \in L_1(S^K_{++}, \tau)$:
\begin{equation}\label{eq: convg diff}
    \left|\frac{1}{N_l} \sum\limits_{n=1}^{N_l} \mathbb{E} \left[|f|\left(\Sigma_{M_l,M_{T,l}}^{(n)}\right)\bone_{\left\{|f|\left(\Sigma_{M_l,M_{T,l}}^{(n)}\right) \geq C\right\}} \bigg| \mathcal{F}_l\right] -  \frac{1}{N_l} \sum\limits_{n=1}^{N_l} \mathbb{E} \left[|f|\left(\Sigma_{M_l}^{(n)}\right)\bone_{\left\{|f|\left(\Sigma_{M_l}^{(n)}\right) \geq C\right\}} \bigg| \mathcal{F}_l\right] \right| \stackrel{P}{\rightarrow} 0.
\end{equation}
Combine Equations \eqref{eq: a messy eq1} and \eqref{eq: convg diff}, we obtain Equation \eqref{eq: a messy eq}. Applying Theorem 9.2.8, we have 
\begin{equation}
\begin{aligned}
    &\frac{1}{\nl}\sum\limits_{n=1}^\nl \left\{f\bigl(\Sigma^{(n)}_{\ml,\mtl}\bigr) - \mathbb{E} \left[f\left(\Sigma_{M_l,M_{T,l}}^{(n)}\right) \bigg| \mathcal{F}_l\right]\right\} = \frac{1}{\nl}\sum\limits_{n=1}^\nl f\bigl(\Sigma^{(n)}_{\ml,\mtl}\bigr) - \mathbb{E} \left[f\left(\Sigma_{M_l,M_{T,l}}^{(n)}\right) \bigg| \mathcal{F}_l\right]\\
     & = \frac{1}{\nl}\sum\limits_{n=1}^\nl f\bigl(\Sigma^{(n)}_{\ml,\mtl}\bigr) - \frac{\sum_{m=1}^{M_l} w_m^\prime  f\left(\Tilde{\Sigma}^{(m)}\right)}{\sum_{m=1}^\ml w_m^\prime}  \stackrel{P}{\longrightarrow} 0.    
\end{aligned}
\end{equation}
Lastly from Lemma \ref{lem: normalized IS with clipping}, we have 
$$\frac{\sum_{m=1}^{M_l} w_m^\prime  f\left(\Tilde{\Sigma}^{(m)}\right)}{\sum_{m=1}^\ml w_m^\prime}  \stackrel{P}{\longrightarrow} \pi(f).$$
This completes the proof.
\end{proof}

\begin{proof}(Proof of Theorem \ref{thm: algo3 CLT})

Write $\Hat{\mu}^\mathrm{SIR}_{M_l,M_{T_l},N_l}(f) -  \pi(f)$ as $A_l + B_l$, where 
\begin{equation}
    A_l = \frac{\sum_{m=1}^{M_l} w_m^\prime  f\left(\Tilde{\Sigma}^{(m)}\right)}{\sum_{m=1}^\ml w_m^\prime} - \pi(f),
\end{equation}
and 
\begin{equation}
 B_l = \frac{1}{\nl}\sum\limits_{n=1}^\nl \left\{f\bigl(\Sigma^{(n)}_{\ml,\mtl}\bigr) - \mathbb{E}\left[f\bigl(\Sigma^{(n)}_{\ml,\mtl}\bigr) \big| \mathcal{F}_l \right]\right\}, 
\end{equation}
with $\mathcal{F}_l = \sigma\{(\Tilde{\Sigma}^{(1)}, \ldots, \Tilde{\Sigma}^{(M_l)})\}$. By Lemma \ref{lem: normalized IS with clipping CLT}, we have for any $f \in L_2(S^K_{++}, \tau)$, when $ \mtl/\sqrt{\ml} \rightarrow 0$
\begin{equation}\label{eq: Al CLT}
    \sqrt{\ml}A_l \stackrel{\mathcal{D}}{\longrightarrow} \mathcal{N}(0, \mbox{var}_{\tau}(f w - \pi(f)w)). 
\end{equation}
To derive a CLT for $B_l$, we will apply Theorem 9.5.13 in \cite{cappe2005inference} by taking $(X, \mathcal{F})$, $\mu$, $\mathcal{F}_l$, and $\{\xi^{l,n}\}_{1\leq n \leq N_l}$ respectively as $\pi$, $(S^K_{++},\mathcal{B})$, $\sigma\{(\Tilde{\Sigma}^{(1)}, \ldots, \Tilde{\Sigma}^{(M_l)})\}$ and $\{\Sigma_{M_l, M_{T,l}}^{(n)}\}_{1\leq n \leq N_l}$. To this end, we need to check the three conditions with $f \in L_2(S^K_{++}, \tau)$. Firstly, $\{\Sigma_{M_l, M_{T,l}}^{(n)}\}_{1\leq n \leq N_l}$ are conditionally independent given $\mathcal{F}_l$ for any $l$. In addition, for any $n = 1, \ldots, \nl$, $\mathbb{E}\left[f^2(\Sigma_{M_l, M_{T,l}}^{(n)})\big | \mathcal{F}_l\right] = \Hat{\mu}^{IS}_{\ml,\mtl}(f^2) < +\infty$. Secondly, 
\begin{equation}
\begin{aligned}
    &\frac{1}{N_l} \sum\limits_{n=1}^{N_l} \left\{\mathbb{E} \left[f^2\left(\Sigma_{M_l,M_{T,l}}^{(n)}\right) \bigg| \mathcal{F}_l\right] - \mathbb{E}^2 \left[f\left(\Sigma_{M_l,M_{T,l}}^{(n)}\right) \bigg| \mathcal{F}_l\right]\right\}\\
    & = \left\{\mathbb{E} \left[f^2\left(\Sigma_{M_l,M_{T,l}}^{(1)}\right) \bigg| \mathcal{F}_l\right] - \mathbb{E}^2 \left[f\left(\Sigma_{M_l,M_{T,l}}^{(1)}\right) \bigg| \mathcal{F}_l\right]\right\} \\
    &= \frac{\sum_{m=1}^{M_l} w_m^\prime  f^2\left(\Tilde{\Sigma}^{(m)}\right)}{\sum_{m=1}^\ml w_m^\prime} - \left[\frac{\sum_{m=1}^{M_l} w_m^\prime  f\left(\Tilde{\Sigma}^{(m)}\right)}{\sum_{m=1}^\ml w_m^\prime}\right]^2 \stackrel{P}{\longrightarrow} \mbox{var}_\pi(f),
\end{aligned}
\end{equation}
where $\mbox{var}_\pi(f) = \pi(f^2) - \left[\pi(f) \right]^2$. The variance is well defined thanks to the $L_2$ assumption. In fact,
for any $f \in L_2(S^K_{++}, \tau)$, 
$$\int f^2 d\pi = \int f^2 \frac{d\pi}{d\tau} d\tau =\int f^2 w d\tau \leq B_w\int f^2 d\tau < +\infty,$$
where $B_w$ is the bound on $w$ given in Equation \eqref{eq: bound w}. 
The last equation is obtained by applying Lemma \ref{lem: normalized IS with clipping}. Thirdly, take $\mu$ of condition $(iii)$ as $\pi$, thus $f \in L_2(S^K_{++}, \tau) \subset L_2(S^K_{++}, \pi)$. The proof of Theorem \ref{thm: algo3 consistence} shows that 
\begin{equation}
\frac{1}{N_l} \sum\limits_{n=1}^{N_l} \mathbb{E} \left[f^2\left(\Sigma_{M_l}^{(n)}\right)\bone_{\left\{|f|\left(\Sigma_{M_l}^{(n)}\right) \geq C\right\}} \bigg| \mathcal{F}_l\right] \stackrel{P}{\longrightarrow} \pi(f^2\bone_{\left\{|f| \geq C\right\}}).    
\end{equation}
By applying Theorem 8.5.13 in \cite{cappe2005inference}, we have for any real $u$
\begin{equation}\label{eq: Bl CLT}\mathbb{E}\left[\exp{\left(iu\sqrt{\nl}B_l\right)}\bigg| \mathcal{F}_l\right] \stackrel{P}{\longrightarrow} \exp(-\mbox{var}_\pi(f)u^2/2).
\end{equation}
Therefore, for any reals $u$ and $v$,
\begin{equation}   
\begin{aligned}
&\mathbb{E}\left[\exp{\left\{i\left(u\sqrt{\nl}B_l + v\sqrt{\ml}A_l\right)\right\}}\right]  \\
=&\mathbb{E}\left[\mathbb{E}\left[\exp{\left\{iu\sqrt{\nl}B_l  \right\}}\bigg| \mathcal{F}_l \right]\exp{\left\{iv\sqrt{\ml}A_l\right\}}\right] \\
& \longrightarrow \exp(-\mbox{var}_\pi(f)u^2/2)\exp(-\mbox{var}_{\tau}(f w - \pi(f)w)v^2/2).
\end{aligned}
\end{equation}
The point-wise convergence comes from Equations \eqref{eq: Al CLT} and \eqref{eq: Bl CLT} as well as the fact that $|\exp{\left\{iu\sqrt{\nl}B_l  \right\}}| = 1$. 
Thus the bivariate characteristic function converges to the characteristic function of a bivariate normal, implying that
\begin{equation}
\begin{pmatrix}
  \sqrt{\nl}B_l \\ \sqrt{\ml}A_l 
\end{pmatrix}
\stackrel{\mathcal{D}}{\longrightarrow} \mathcal{N}\left(0, 
  \begin{bmatrix}
\mbox{var}_\pi(f) & 0 \\ 0 & \mbox{var}_{\tau}(f w - \pi(f)w)
  \end{bmatrix}
  \right)
\end{equation}
Put $b_l$ = $\sqrt{\ml}$ if $\alpha < 1$ and $b_l = \sqrt{\nl}$ if $\alpha \geq 1$. The proof follows from $b_l (A_l + B_l) = (b_l \ml^{-1/2}) \sqrt{\ml}A_l + (b_l \nl^{-1/2})\sqrt{\nl}B_l$.
    
\end{proof}

\section{Proof of Theorem \ref{thm: existence E}}\label{sec: proof of existence E}
\begin{proof}
Note that for any $p \geq 1$
\begin{equation}
\E\left[\bm\Sigma^p\right] = \E\left[\bm\Gamma \bm\Delta^p\bm\Gamma^\top\right] = \E\left[\sum_{k=1}^K\bm\lambda_k^p\bm\Gamma_{:,k} \bm\Gamma_{:,k}^\top\right].
\end{equation}
Therefore 
\begin{equation}
\E\left|(\bm\Sigma_{ij})^p\right| = \E\left|\sum_{k=1}^K\bm\lambda_k^p\bm\Gamma_{i,k} \bm\Gamma_{j,k}\right|  \leq   \E\sum_{k=1}^K\bm\lambda_k^p\left|\bm\Gamma_{i,k} \bm\Gamma_{j,k}\right| \leq   \E\sum_{k=1}^K\bm\lambda_k^p.
\end{equation}
The last inequality is because $\bm\Gamma$ is orthonormal, thus its entities are less than $1$ in absolute value. For each $\E\bm\lambda_k^p$, $\E\bm\lambda_k^p = \E\left[\E\bm\lambda_k^p | \bm\Gamma\right]$. When $p < \nu - 1$, for a given $\bm\Gamma$, $\E\,\bm\lambda_k^p | \bm\Gamma < +\infty$. Thus 
\begin{equation}
    \E\left|(\bm\Sigma_{ij})^p\right| \leq   \sum_{k=1}^K\E\bm\lambda_k^p = \sum_{k=1}^K\E\left[\E\,\bm\lambda_k^p | \bm\Gamma \right] < +\infty.
\end{equation}
Similarly, $\E\left|(\bm\Sigma^{-p})_{ij}\right| \leq   \E\sum_{k=1}^K\bm\lambda_k^{-p}$. When $\nu > 1$, $ \E\left[\E\bm\lambda_k^{-p} | \bm\Gamma\right] < +\infty$ for a given $\bm\Gamma$, which implies $\E\left|(\bm\Sigma^{-p})_{ij}\right| < +\infty$. 
\end{proof}

\section{Evolutions of $e_1^{\mathrm{SIR}}$ for $K=100$, Algorithm \ref{alg: siw adapted}}
The results related to $K=100$ of $e_1^{\mathrm{SIR}}$ from Section \ref{sec: eva algo adapted} are given in Figures \ref{fig: case 2, K=100, nu=4} and \ref{fig: case 2, K=100, nu=20}.
\begin{figure}[htbp]
\centering 
\begin{subfigure}[b]{\textwidth}
\centering
\includegraphics[width=0.45\linewidth]{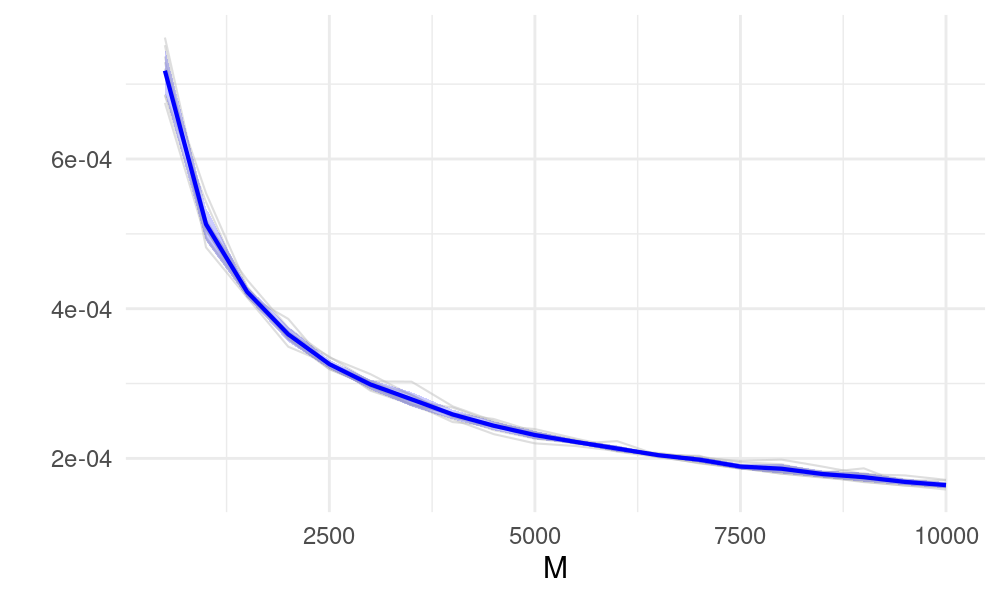} 
\includegraphics[width=0.45\linewidth]{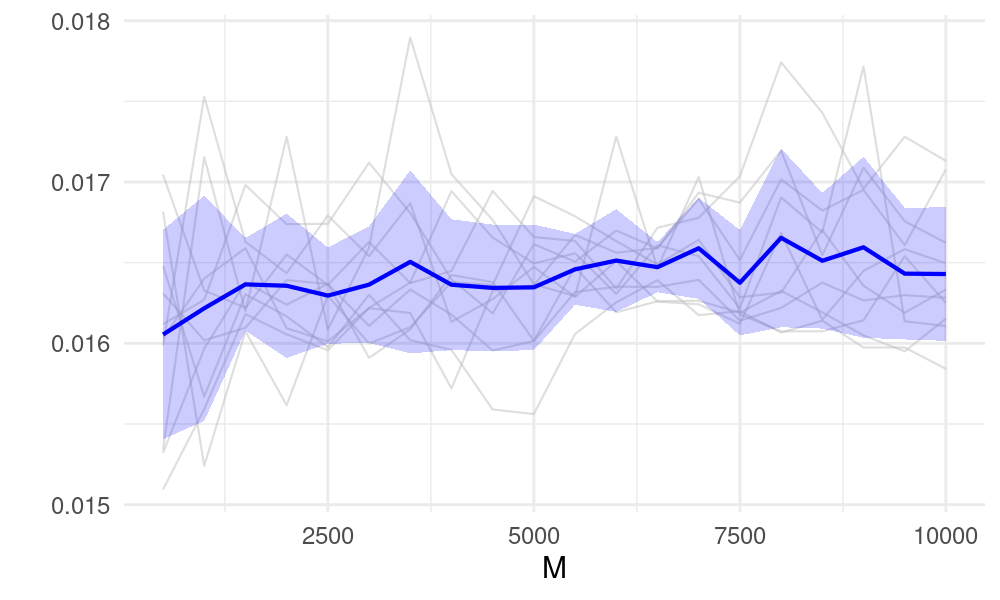} 
\subcaption{
$M_T = M^{0.2}$.
}
\label{fig: 100, 4, Case 2 MT20}
\end{subfigure}
\begin{subfigure}[b]{\textwidth}
\centering
\includegraphics[width=0.45\linewidth]{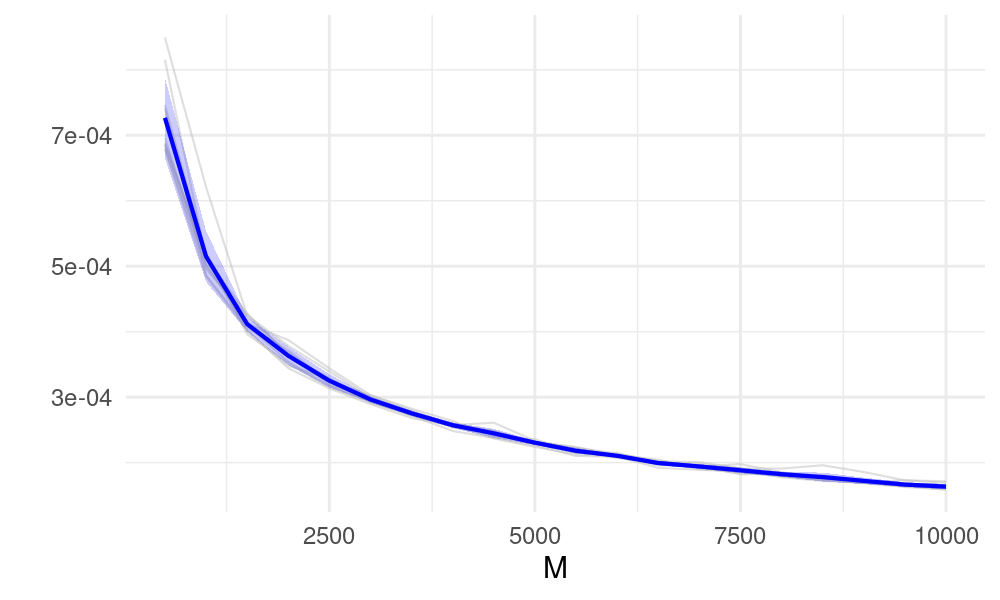} 
\includegraphics[width=0.45\linewidth]{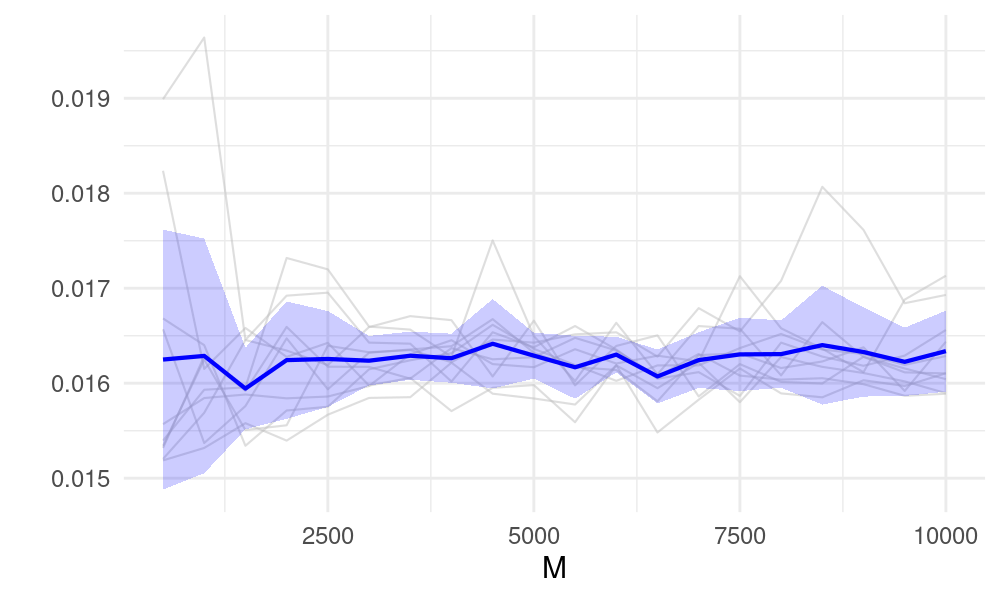}
\subcaption{
$M_T = M^{0.45}$.
}
\label{fig: 100, 4, Case 2 MT45}
\end{subfigure}
\begin{subfigure}[b]{\textwidth}
\centering
\includegraphics[width=0.45\linewidth]{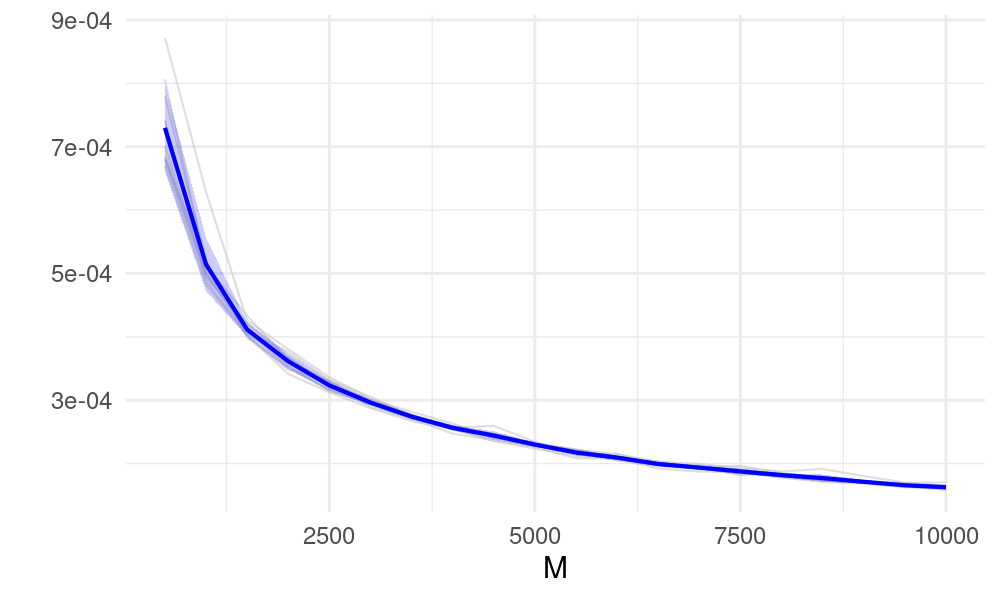} 
\includegraphics[width=0.45\linewidth]{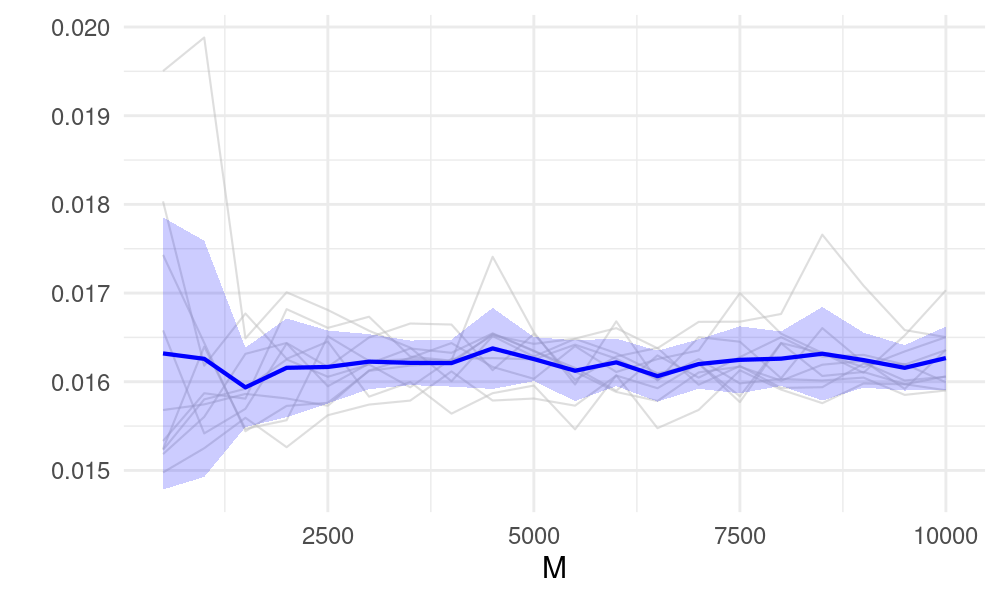} 
\subcaption{
$M_T = M^{0.8}$.
}
\label{fig: 100, 4, Case 2 MT80}
\end{subfigure}
\caption{Evolutions of $e_1^{\mathrm{SIR}}$ (left) and $\sqrt{M}e_1^{\mathrm{SIR}}$ (right) in terms of $M$ for $K=100, \nu=20, \mbox{eigen-discrepancy of Case 2}$, different clipping sizes. }
\label{fig: case 2, K=100, nu=4}
\end{figure}
\begin{figure}[htbp]
\centering 
\begin{subfigure}[b]{\textwidth}
\centering
\includegraphics[width=0.45\linewidth]{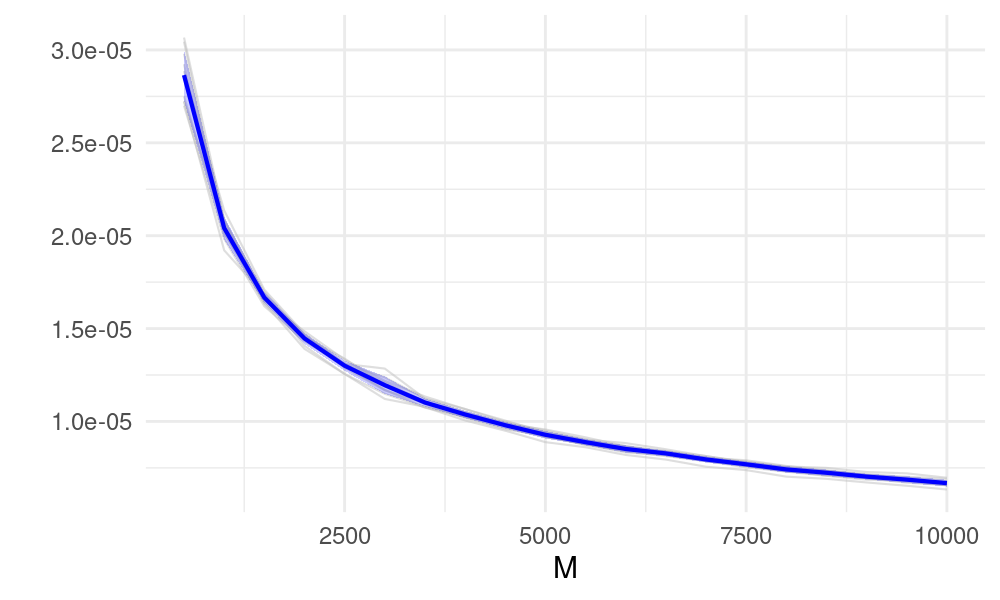} 
\includegraphics[width=0.45\linewidth]{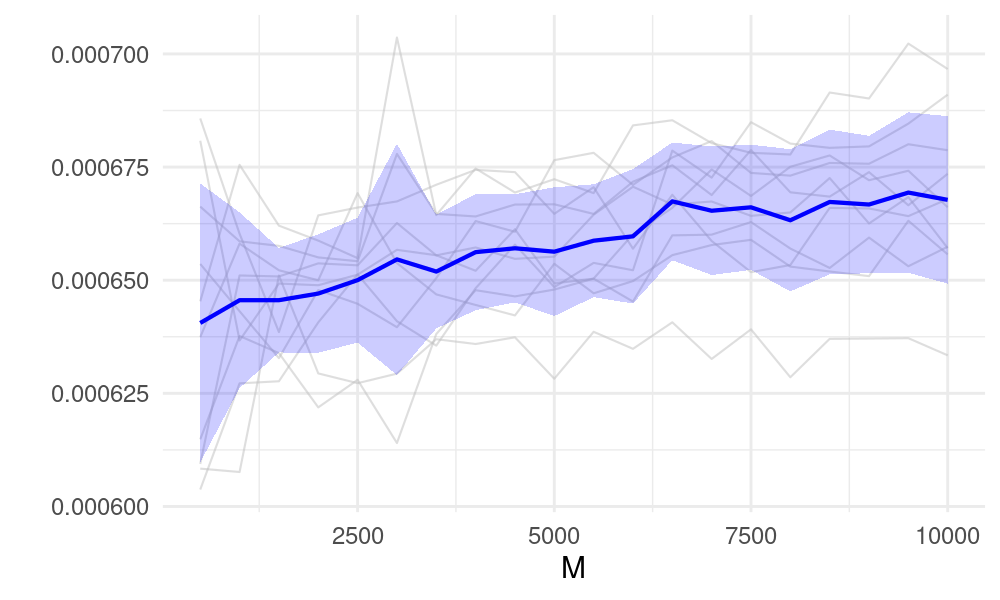} 
\subcaption{
$M_T = M^{0.2}$.
}
\label{fig: 100, 20, Case 2 MT20}
\end{subfigure}
\begin{subfigure}[b]{\textwidth}
\centering
\includegraphics[width=0.45\linewidth]{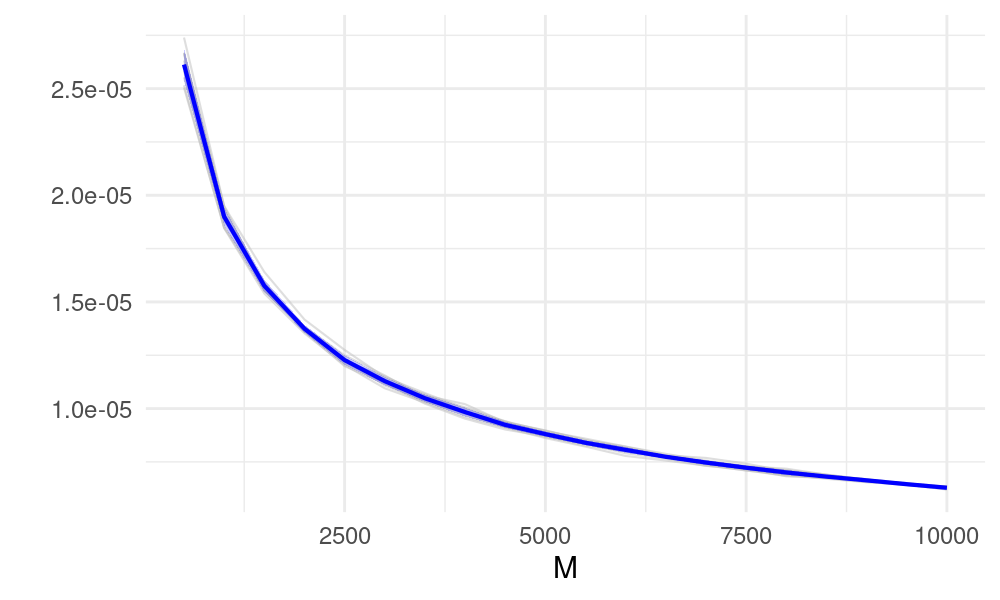} 
\includegraphics[width=0.45\linewidth]{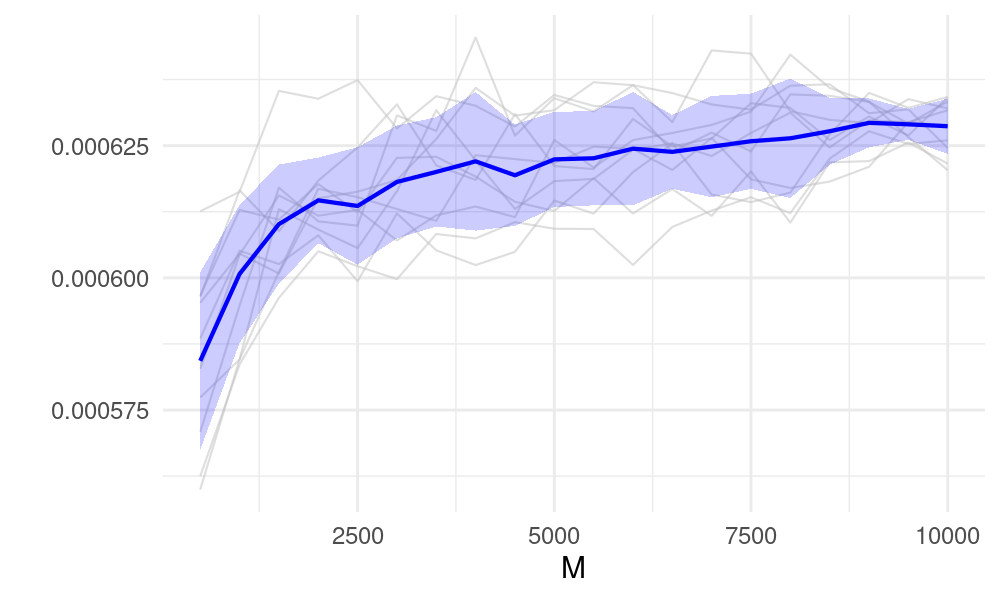}
\subcaption{
$M_T = M^{0.45}$.
}
\label{fig: 100, 20, Case 2 MT45}
\end{subfigure}
\begin{subfigure}[b]{\textwidth}
\centering
\includegraphics[width=0.45\linewidth]{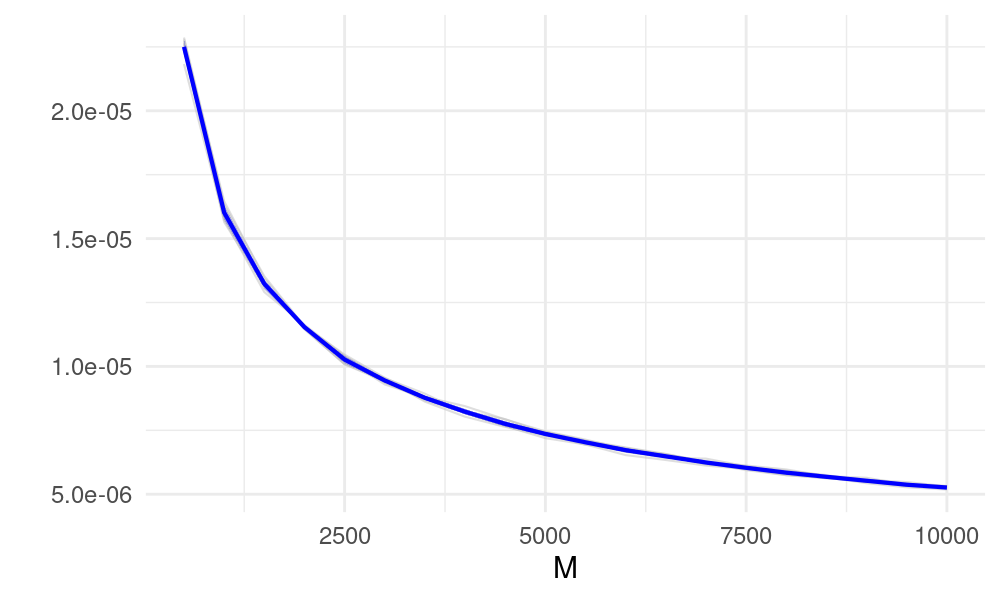} 
\includegraphics[width=0.45\linewidth]{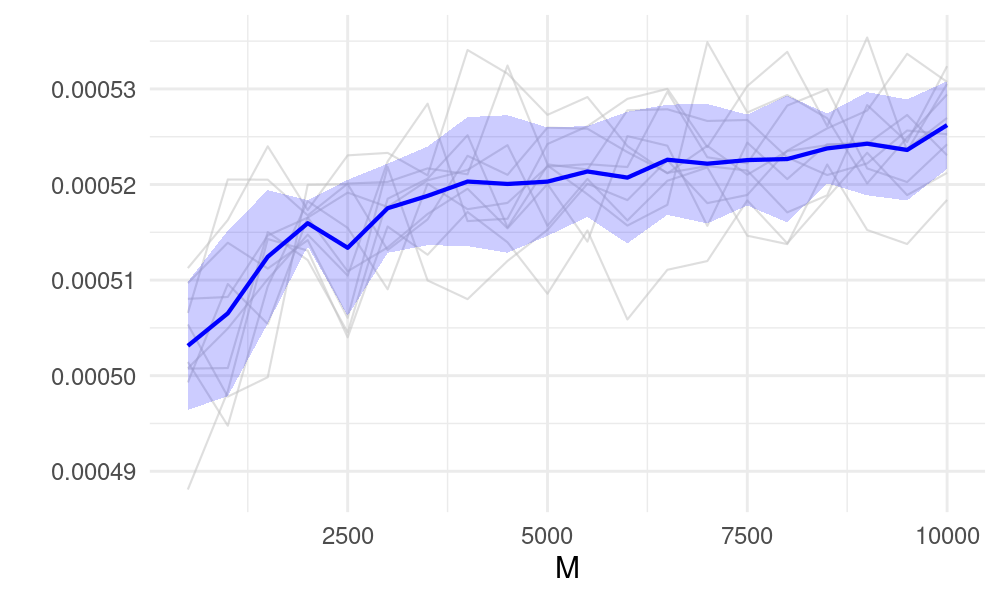} 
\subcaption{
$M_T = M^{0.8}$.
}
\label{fig: 100, 20, Case 2 MT80}
\end{subfigure}
\caption{Evolutions of $e_1^{\mathrm{SIR}}$ (left) and $\sqrt{M}e_1^{\mathrm{SIR}}$ (right) in terms of $M$ for $K=100, \nu=20, \mbox{eigen-discrepancy of Case 2}$, different clipping sizes. }
\label{fig: case 2, K=100, nu=20}
\end{figure}

\end{document}